\documentclass[amsmath,amssymb,nofootinbib,superscriptaddress,twocolumn]{revtex4-2}

\usepackage{amsthm}
\usepackage{bm}
\usepackage{graphicx}
\usepackage{hyperref}
\usepackage{orcidlink}

\newcommand{\ket}[1]{\left|#1\right\rangle}
\newcommand{\bra}[1]{\left\langle#1\right|}
\newcommand{\braket}[2]{\left\langle#1\middle|#2\right\rangle}
\newcommand{\ketbra}[2]{\left|#1\right\rangle\hspace{-1ex}\left\langle#2\right|}

\newcommand{\etaNA}{\eta^{(\mathrm{NA})}}
\newcommand{\etaA}{\eta^{(\mathrm{A})}}

\newcommand{\trine}{\mathbb{M}_\mathrm{trine}}
\newcommand{\SIC}{\mathbb{M}_\mathrm{SIC}}

\newtheorem{theorem}{Theorem}
\newtheorem{proposition}[theorem]{Proposition}
\newtheorem{lemma}[theorem]{Lemma}
\newtheorem{corollary}[theorem]{Corollary}

\hypersetup{colorlinks=true,citecolor=blue}

\begin{document}

\title{Reducing measurement error with adaptivity}

\author{James Byrne \!\orcidlink{0009-0007-7210-1643}}
\email{james.m.byrne@bristol.ac.uk}
\affiliation{Quantum Engineering Centre for Doctoral Training, University of Bristol, Tyndall Avenue, Bristol, BS8 1FD, UK}
\affiliation{School of Mathematics, University of Bristol, Fry Building, Woodland Road, Bristol, BS8 1UG, UK}

\author{Noah Linden \!\orcidlink{0009-0001-2342-2084}}
\affiliation{School of Mathematics, University of Bristol, Fry Building, Woodland Road, Bristol, BS8 1UG, UK}

\author{Paul Skrzypczyk \!\orcidlink{0000-0002-9343-9041}\,}
\affiliation{H. H. Wills Physics Laboratory, University of Bristol, Tyndall Avenue, Bristol, BS8 1TL, UK}

\begin{abstract}
    \noindent We show that adaptivity, also called feed-forward, is a powerful resource in reducing the error of measurement circuits that combine multiple uses of a noisy measuring device. In previous work, it has been shown that error in measurements can be mitigated by using the measuring device multiple times. So far, the most effective protocols have been parallel measurement schemes, where all measurements are simultaneous. We extend this idea to adaptive measurement schemes, where the results of previous measurements can influence our choice of processing further down the circuit. We show that adaptive measurement circuits can in general reduce overall measurement errors further than any non-adaptive measurement circuit could with the same number of measurements. In particular, we show that for a large class of noisy two-outcome qubit measurements, such an adaptive advantage can exist when as few as three measurements are used. We also show that the adaptive advantage is unbounded across the class of noisy two-outcome qubit measurements, as the number of uses of the device increases. As part of this work, we devise and use methods for finding optimal measurement circuits in both the non-adaptive and adaptive cases. In addition, we prove general results about the limits of such circuits, both in measuring a qubit, and more generally, a qudit.
\end{abstract}

\maketitle

\section{Introduction}

\noindent While the platonic ideal of a projective measurement is convenient in theory, it is unphysical \cite{guryanova-2020}, and all actual quantum measurements will be noisy in some way. We consider a single-shot scenario, where the effect of this noise is to introduce errors into the output distribution of the measurement, which we could attempt to reduce in two ways. One is an engineering problem, of improving the control over the quantum system and/or the measuring device itself. The other, which we consider in this paper, is an optimisation problem of how best to use the device that we already have. In particular, we allow ourselves to make use of our measuring device multiple times within a single measurement circuit.

\begin{figure*}[t]
    \centering
    \includegraphics[scale=1.25]{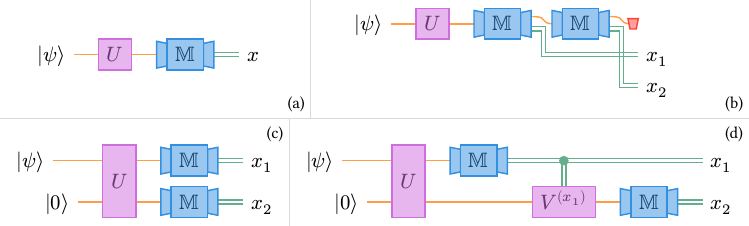}
    \caption{An illustration of multiple measurement schemes with two uses of a measuring device $\mathbb{M}$; note that in general we consider $N$ uses of the measuring device. In all circuit diagrams, single orange lines represent quantum information, while double green lines represent classical information. \textbf{(a)} The most straightforward circuit is to simply use the device once to measure the incoming state, perhaps with the use of a unitary $U$ to better align the state and the measuring device. \textbf{(b)} One approach to using multiple measurements is to measure the same state repeatedly with the given measuring device. For certain measuring devices, this can be more informative than the single measurement \cite{haapasalo-2016,bullock-2021}. \textbf{(c)} A different approach is to split the quantum information across multiple qubits and measure them all simultaneously, similar to how a classical code may be used to create redundancy to protect against errors. This gives more information than a single measurement in general, and approaches the performance of a projective measurement as the number of measurements made is increased \cite{günther-2021,hicks-2022,kim-2026,linden-2025}. \textbf{(d)} Our work considers adaptive circuits, where multiple qubits are used, but measurements are not performed simultaneously, allowing time for processing dependent on previous measurement results before later measurements are performed. This gives a broader class of measurement circuits than in (c), and due to our choice of figure of merit \eqref{eq:total-error-2}, also encompasses sequential circuits of the form in (b). Finding when such circuits improve over those in (c), and by how much, is the object of this paper.}
    \label{fig:introductory-circuits}
\end{figure*}

Figure \ref{fig:introductory-circuits}(a) shows the simplest possible measurement circuit, with the unitary $U$ chosen to make the output of the measurement $\mathbb{M}$ maximally informative. One way to introduce more measurements into this circuit is to repeatedly measure the same system, to gradually extract more and more classical information. This is shown in Figure \ref{fig:introductory-circuits}(b), and has been proven to be effective for certain measurements \cite{haapasalo-2016,bullock-2021}. Another approach is to use ancillary qubits and entangling gates, and perform multiple parallel measurements, as shown in Figure \ref{fig:introductory-circuits}(c). Previous work such as \cite{günther-2021,hicks-2022,kim-2026} have shown that repetition codes and other classical codes can be used to split the quantum information in one qubit across many qubits, and that measuring all qubits gives a reduction in error compared to just measuring the original one to begin with.

Indeed, parallel measurement circuits turn out to be far more generally applicable than sequential measurement circuits, which as yet have only been proven to be effective for carefully chosen measuring devices. In \cite{linden-2025}, it was shown that given any measurement device, any other measurement can be perfectly reproduced in the asymptotic limit of a large number of parallel uses of the device, with average error decreasing exponentially. Moreover, the problem of optimising circuits was investigated, and it was shown that in general approaches mirroring classical codes can be beaten by quantum approaches which can use the entire state space. Along with this new idea of measurement reproduction, several open questions were raised, concerning the theoretical limits of such protocols, as well as potential applications such as speeding up measurements \cite{corlett-2025}.

Our goal is to answer one such question: whether we can reduce errors even further by allowing for \emph{adaptive} procedures, where not all uses of the measuring device are performed in one go, and previous measurement results can inform how we process the quantum system for future measurements. This idea was mentioned previously as a possibility in \cite{linden-2025}, but not explored further there. An example of such a circuit is illustrated in Figure \ref{fig:introductory-circuits}(d). The resource of adaptivity\footnote{There are a number of disparate uses of the term \emph{adaptive} in quantum information processing tasks, but we use the word solely to refer to the procedure of making measurements and then operating on the quantum system conditional on the results of these measurements.} (or feed-forward, as it is also known) has been demonstrated to be powerful for numerous applications including quantum-information-theoretic tasks \cite{bennett-1993, li-2022}, quantum computation \cite{raussendorf-2001, briegel-2009, wiseman-2009, foss-feig-2023, iqbal-2024}, fault tolerance more specifically \cite{shor-1995, tansuwannont-2023, berthusen-2025, pokharel-2025, shirizly-2025}, and metrology \cite{higgins-2007, wiseman-2009, palittapongarnpim-2016}, so we have good reason to believe it may help us here too. As a stepping stone on the way to this, we will also find optimal circuits using parallel measurements, which we will henceforth refer to as \emph{non-adaptive}, to give us a target to beat using adaptivity.

In this paper, we find circuits for measuring a qubit with minimal error in both the adaptive and non-adaptive cases. Moreover, we show that an \emph{adaptive advantage} exists, that is, with a fixed number of measurements, it is possible to reduce the error further by using an adaptive circuit over any possible non-adaptive one. We further show that the ratio of optimal non-adaptive error to optimal adaptive error can be arbitrarily large, therefore the relative advantage of adaptivity is in general unbounded. However, we also show that such adaptive advantage is subtle, and there are scenarios where it doesn't exist. We also extend this to measuring a $d$-dimensional system, and find general upper and lower bounds on the minimal possible error we can hope to achieve.

While there are assumptions made within this work (notably we will assume that there is zero gate noise, and that we are unconstrained by time or space limitations), our results suggest that adaptive measurement circuits are a useful tool to add to the toolbox, and demonstrate the potential for them to be practically viable and useful in future.

\subsection{Outline of this paper}

\noindent Initially, we consider measurement circuits that take a single qubit as input, and return a single classical bit as output; if there are multiple measurements then some classical post-processing will be required to reduce the measurement outputs to a single bit, for instance a majority vote. For any such circuit, we define the error probabilities $\varepsilon_0$, the probability of an input of $\ket{0}$ giving an output of $1$, and $\varepsilon_1$, the probability of an input of $\ket{1}$ giving an output of $0$. Their sum, $\eta = \varepsilon_0 + \varepsilon_1$, is the \emph{total error} of the circuit, and is the figure of merit we want to minimise.

In Section \ref{sec:example}, we consider the physically relevant example of the \emph{imperfect $Z$ measurement} \eqref{eq:imperfect-Z}, a quantum analogue to the classical asymmetric bit-flip channel, parametrised by $p$ and $q$, the probability of measuring $\ket{0}$ and getting result $1$, and the probability of measuring $\ket{1}$ and getting result $0$ respectively. This is a canonical choice as up to unitary equivalence it encompasses every possible two-outcome qubit POVM. Supposing that we have access to three uses of such a measuring device, we show how to reduce the problem of finding the best non-adaptive circuit to one of two cases \eqref{eq:optimal-non-adaptive-3}, depending on the values of the parameters $p$ and $q$. We then repeat this in the adaptive case, again finding that the optimal adaptive circuit takes one of two forms \eqref{eq:optimal-adaptive-3}. One of these is the same as one of the non-adaptive optima, showing that in one region of the $(p,q)$ parameter space there is no advantage to be gained with adaptivity when using three imperfect $Z$ measurements. However, the other does use adaptivity, and we show that in the rest of the parameter space, an adaptive advantage exists, illustrated in Figure \ref{fig:noisy-Z-adaptive-advantage}. We repeat these optimisations with four uses of the measurement instead of three, and find an adaptive advantage for almost the entire parameter space, all except for some edge cases where the optimal measurement circuit can be proven to be non-adaptive regardless of the number of measurements. Finally, we quantify the relative adaptive advantage as the ratio of the optimal total errors achievable non-adaptively and adaptively \eqref{eq:relative-adaptive-advantage}, and show that it can be arbitrarily large for carefully chosen parameters $p$, $q$ and enough uses of the measuring device.

In Section \ref{sec:qubit-measurements}, we again consider the problem of distinguishing $\ket{0}$ from $\ket{1}$, but now with an arbitrary known POVM (in any finite dimension) as a measuring device. We argue how finding the optimal non-adaptive circuit breaks down into optimising the quantum pre-processing that takes place before measurement, and the classical post-processing that takes place after, illustrated in Figure \ref{fig:non-adaptive}. While these optimisations can't be separated, we show that we can meaningfully reduce the search space of possible measurement circuits to a finite size \eqref{eq:maximise-spectral-diameter}, \eqref{eq:abelian-non-adaptive-total-error}, in order to more easily find the optimum. We then consider adaptive circuits, where we show that a recursive structure (Figure \ref{fig:adaptive-recursive}) determines the best achievable adaptive circuits, and consequently that the optimal adaptive circuit is also one of a finite set of possible optima \eqref{eq:finite-achievable-errors}. We also provide a geometric interpretation of this optimisation, as looking for an optimal vertex of a polygon of possible error pairs $(\varepsilon_0, \varepsilon_1)$ that is recursively defined to match the recursive structure of the adaptive circuit. With this knowledge of optimal measurement circuits, we end the section by studying several example measurements. We demonstrate the range of possibilities for adaptive advantage, from measurements like the qubit SIC-POVM \eqref{eq:tetra} where an adaptive advantage exists for any number of uses greater than one, to measurements like the trine \eqref{eq:trine} where we show that there is no adaptive advantage no matter how many uses are allowed.

In Section \ref{sec:qudit-measurements}, we generalise further, allowing the input to be a $d$-dimensional qudit, and the output correspondingly to be some number in $\{0, 1, ..., d-1\}$, for any given $d \geqslant 2$. We similarly generalise our error probabilities to $\varepsilon_{j|i}$, the probability that an input of $\ket{i}$ produces an output of $j$, for $i,j \in \{0, 1, ..., d-1\}$, which we only consider to be an error if $j \neq i$. Then, the total error is the sum of all such probabilities \eqref{eq:total-error-d}, across all values of $i,j$ such that $j \neq i$. The increase in dimensions loses us geometric intuition, but many of the analytical arguments from the previous section carry over into $d$ dimensions, and we find a lower bound on the minimal total error of an adaptive circuit with a fixed number of uses of our measurement in terms of the minimal total error that one measurement could achieve in the qubit case \eqref{cor:adaptive-lower-bound-d}. We then return to our example measurements, and find asymptotic upper bounds on the minimal total error, which we can apply to get general, if quite loose, asymptotic upper bounds for arbitrary measurements, in Propositions \ref{prop:asymptotic-upper-bound} and \ref{prop:generalised-asymptotic-upper-bound}. Indeed, we show that any non-trivial measurement can approximate an arbitrary projective measurement with error decreasing exponentially in the number of measurements made.

\section{Two-outcome qubit measurements} \label{sec:example}

\noindent In this section, we will investigate measurement circuits involving the \emph{imperfect $Z$ measurement} \eqref{eq:imperfect-Z}, which up to unitary equivalence represents any destructive\footnote{Indeed, we will note in Section \ref{sec:qubit-measurements} that the assumption of destructivity is not required, but for simplicity we will keep it here.} measurement on a qubit with two outcomes. We introduce this class of measurements along with the \emph{total error} figure of merit \eqref{eq:total-error-2} in Section \ref{sec:example-setup}. In Section \ref{sec:multiple-measurements}, we will see what happens when we have three uses of this measurement to make, and will derive the non-adaptive optimum \eqref{eq:optimal-non-adaptive-3} in Section \ref{sec:optimal-non-adaptive-noisy-Z}, and the adaptive optimum \eqref{eq:optimal-adaptive-3} in Section \ref{sec:optimal-adaptive-noisy-Z}. We will find that adaptivity may or may not be advantageous, depending on the parameters $p$ and $q$ of our measuring device, as illustrated in Figure \ref{fig:noisy-Z-adaptive-advantage}. We also consider four uses, in which case there is almost always an adaptive advantage, except for some edge cases. Finally, in Section \ref{sec:relative-adaptive-advantage}, we will quantify the relative adaptive advantage by the ratio of the minimal non-adaptive total error to the minimal adaptive total error \eqref{eq:relative-adaptive-advantage}, which we will show can be arbitrarily large.

\begin{figure*}[t]
    \centering
    \includegraphics[scale=1.25]{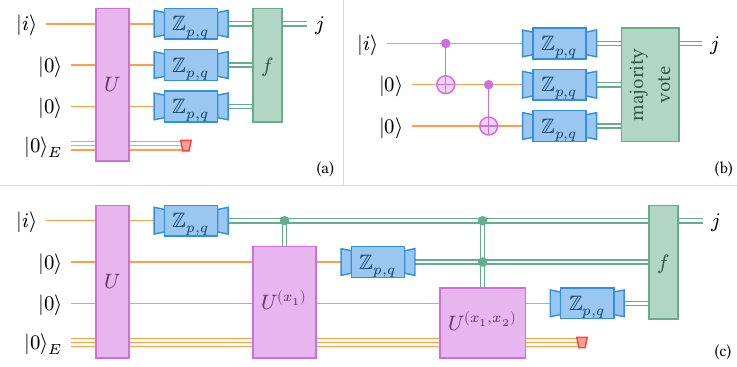}
    \caption{\textbf{(a)} The most general non-adaptive circuit when three uses of the imperfect $Z$ measurement $\mathbb{Z}_{p,q}$ are available. A unitary $U$ creates a state across the three qubits that will be measured and a number of ancillary qubits. After the measurements, the results are interpreted by a classical post-processing function $f$, from three bits to one. \textbf{(b)} An example non-adaptive circuit, which achieves $\varepsilon_0 = 3p^2 - 2p^3$, $\varepsilon_1 = 3q^2 - 2q^3$. If $p,q < \frac{1}{2}$, this is an improvement on the errors achievable when only one measurement was made. \textbf{(c)} The most general adaptive circuit when three uses of the imperfect $Z$ measurement $\mathbb{Z}_{p,q}$ are available. One qubit is measured at a time, and the intermediate unitaries $U^{(x_1)}$ and $U^{(x_1,x_2)}$ are allowed to depend on $x_1$, $x_2$, the classical results of previous measurements.}
    \label{fig:noisy-Z-three-measurements}
\end{figure*}

\subsection{Setup} \label{sec:example-setup}

\noindent We are given an input of a single qubit in state $\ket{i} \in \{\ket{0}, \ket{1}\}$, and we would like to measure the state to work out which one it is. With access to a projective measurement this is trivial, since the two possibilities $\ket{0}$ and $\ket{1}$ are orthogonal, so can be distinguished with certainty.

However, what if the measurement we have access to is noisy, in the sense that it is non-projective? Even if we have exactly characterised the behaviour of the measurement, that is, we know the POVM elements that define it, we remain at the mercy of probability.

We suppose that the only measurement that we can make is the \emph{imperfect $Z$ measurement} $\mathbb{Z}_{p,q}$, with known parameters $p,q \in [0,1]$. This we define as the POVM:
\begin{equation}
    \mathbb{Z}_{p,q} = \{ Z_0, Z_1 \}, \label{eq:imperfect-Z}
\end{equation}

\noindent where $Z_0$ and $Z_1$ are parametrised in terms of $p$ and $q$ as:
\begin{align}
    Z_0 = Z_0(p,q) &= (1-p) \ketbra{0}{0} + q \ketbra{1}{1}; \nonumber \\
    Z_1 = Z_1(p,q) &= p \ketbra{0}{0} + (1-q) \ketbra{1}{1}. 
\end{align}

\noindent In words, $p$ is the probability that we falsely identify $\ket{0}$ as $\ket{1}$, and $q$ is the probability that we falsely identify $\ket{1}$ as $\ket{0}$. The imperfect $Z$s are a simply defined yet very broad class of measurements --- indeed, any two-outcome qubit POVM is unitarily equivalent to some imperfect $Z$ by rotating such that the elements are diagonal in the computational basis. Hence, they will serve as both an intuitive and physically relevant example to investigate.

In general, we think of the measurement circuit as a whole as a box which takes an input of either $\ket{0}$ or $\ket{1}$, and outputs a single bit, either $0$ or $1$, as a result. For any such procedure, we define two error rates $\varepsilon_0$ and $\varepsilon_1$ by:
\begin{align}
    \varepsilon_0 &= \mathbb{P}(\text{output } 1 \mid\text{input} \ket{0}); \nonumber \\
    \varepsilon_1 &= \mathbb{P}(\text{output } 0 \mid \text{input} \ket{1}). \label{eq:errors-2}
\end{align}

\noindent While there are many possible figures of merit we may wish to optimise, our goal will be to minimise the \emph{total error} $\eta$, which we define as the sum of these errors:
\begin{equation}
    \eta = \varepsilon_0 + \varepsilon_1. \label{eq:total-error-2}
\end{equation}

\noindent For an operational interpretation, this is twice the expected probability that an error is made, assuming that the input states $\ket{i} = \ket{0}$ and $\ket{i} = \ket{1}$ are equally likely.

For instance, suppose that we fed the input state $\ket{i}$ into the imperfect $Z$ measurement $\mathbb{Z}_{p,q}$ immediately, and simply outputted as our response the measurement result. This yields:
\begin{equation}
    \varepsilon_0 = p; \quad \varepsilon_1 = q; \quad \eta = p + q. \label{eq:single-measurement-errors}
\end{equation}

\noindent Depending on the values of $p$ and $q$, and the specific application, these errors might be negligibly small, or they could have a significant impact. Either way, we know from previous work \cite{haapasalo-2016,bullock-2021,günther-2021,hicks-2022,linden-2025} that they can be improved, by using $\mathbb{Z}_{p,q}$ multiple times.

\subsection{Multiple measurements} \label{sec:multiple-measurements}

\noindent We now briefly review what has already been shown about improving noisy measurements by using them multiple times. Suppose we have the means of using $\mathbb{Z}_{p,q}$ three\footnote{As we will show in Section \ref{sec:noisy-Z-further-example}, two uses of imperfect $Z$ turns out to be a relatively uninteresting case, as there is never an adaptive advantage to be found, hence we jump straight to three uses in this section.} times instead of just one. We also assume that we can perform arbitrary unitary operations as we please, with no limitations on time or space. One option is to consider repeatedly measuring the same state with $\mathbb{Z}_{p,q}$, as in \cite{haapasalo-2016,bullock-2021}. However, our measurement as defined is destructive, so this previous work would need adapting to be able to apply it here. Instead, as a baseline we consider using $\mathbb{Z}_{p,q}$ three times to measure three qubits in parallel, as in \cite{günther-2021,hicks-2022,linden-2025}, and as depicted in Figure \ref{fig:noisy-Z-three-measurements}(a). For reasons that will soon become clear, we refer to this arrangement as a \emph{non-adaptive} circuit.

First, from our input $\ket{i}$, we create a large (in general entangled) state across the three qubits we will measure, and any number of ancillary qubits we like. Then, after the measurement, we will be left with three bits, say $x_1, x_2, x_3 \in \{0,1\}$, which are the measurement outcomes. From each possible combination of bits, we need to choose whether to guess that the input state was $\ket{0}$ or $\ket{1}$, which we encapsulate by the function $f$.

For instance, a natural choice is the circuit shown in Figure \ref{fig:noisy-Z-three-measurements}(b), emulating the classical length 3 repetition code. First, our input state $\ket{i}$ is copied across to the other two qubits using CNOTs, so the pre-measurement state is $\ket{000}$ if $\ket{i} = \ket{0}$, and $\ket{111}$ if $\ket{i} = \ket{1}$. Then, after measuring, we simply take the majority vote from the measurement results to be our guess $j$. If we calculate the probabilities of each string of outcomes $\mathbf{x}$ given input $\ket{i}$ from this circuit, we can sum them appropriately to find the error rates, and hence total error:
\begin{equation}
    \varepsilon_0 = 3p^2 - 2p^3; \quad \varepsilon_1 = 3q^2 - 2q^3; \nonumber
\end{equation}
\begin{equation}
    \eta = 3p^2 + 3q^2 - 2p^3 - 2q^3,
\end{equation}

\noindent which is a lesser total error than \eqref{eq:single-measurement-errors} if for instance $0 < p,q < \frac{1}{2}$. We may interpret this as the overall measurement effectively becoming $\mathbb{Z}_{\varepsilon_0, \varepsilon_1}$, and note that the error rates $p \mapsto \varepsilon_0 \approx 3p^2$ and $q \mapsto \varepsilon_1 \approx 3q^2$ have decreased.

We will work out whether or not our example circuit in Figure \ref{fig:noisy-Z-three-measurements}(b) is in any way optimal in a moment, but for now let's also consider the wider class of \emph{adaptive} circuits, such as shown in Figure \ref{fig:noisy-Z-three-measurements}(c). Now, instead of assuming that the measurements happen in parallel (i.e. non-adaptively), we can perform them one at a time. Between each pair of measurements, we allow for a unitary operation to take place on the remaining qubits\footnote{We assume for now that the measurement is destructive, so the measured qubit is no longer accessible. As we will note in Section \ref{sec:qubit-measurements}, this assumption can be removed due to the figure of merit we have chosen.}, the action of which may depend on previous measurement results.

Certainly, the class of adaptive circuits is an extension of the class of non-adaptive circuits, but it is an interesting question whether the introduction of adaptivity will be helpful in reducing the minimal total error we can achieve. We may either answer in the affirmative by exhibiting an adaptive circuit with smaller total error than the best non-adaptive circuit, or in the negative by demonstrating that the adaptive circuit that minimises total error doesn't beat the best non-adaptive circuit. Either way, it will be useful to know what the best non-adaptive circuit and the best adaptive measurement circuit are for any given $p,q$, which are the problems we'll tackle in the next two subsections.

\subsection{The optimal non-adaptive circuit} \label{sec:optimal-non-adaptive-noisy-Z}

\noindent First, we want to find the non-adaptive measurement circuit using $\mathbb{Z}_{p,q}$ three times with total error minimised. To find this optimal circuit for general $p,q$, we will first reduce the search space of possible circuits as much as possible.

Optimising over the entire space of unitaries $U$ is unnecessary, as we only care about the effect of $U$ on two states, corresponding to our two possible inputs $\ket{i} = \ket{0}$ or $\ket{i} = \ket{1}$. Instead, let $\rho_i$ be the state of the first three qubits before measurement in Figure \ref{fig:noisy-Z-three-measurements}(a), given an input of $\ket{i}$:
\begin{equation}
    \rho_i = \mathrm{tr}_E \left( U (\ketbra{i}{i} \otimes \ketbra{0}{0} \otimes \ketbra{0}{0} \otimes {\ketbra{\mathbf{0}}{\mathbf{0}}}_E) U^\dagger \right).
\end{equation}

\noindent In addition, let $Z_\mathbf{x} = Z_{x_1} \otimes Z_{x_2} \otimes Z_{x_3}$ for any $\mathbf{x} = (x_1, x_2, x_3) \in \{0, 1\}^3$. Then, the error rates $\varepsilon_0$ and $\varepsilon_1$ are given by:
\begin{equation}
    \varepsilon_0 = \sum_{\mathbf{x} : f(\mathbf{x}) = 1} \mathrm{tr}(\rho_0 Z_\mathbf{x}); \quad \varepsilon_1 = \sum_{\mathbf{x} : f(\mathbf{x}) = 0} \mathrm{tr}(\rho_1 Z_\mathbf{x}), \label{eq:non-adaptive-error-rates}
\end{equation}

\noindent and so the total error is:
\begin{align}
    \eta &= \sum_{\mathbf{x} : f(\mathbf{x}) = 1} \mathrm{tr}(\rho_0 Z_\mathbf{x}) + \sum_{\mathbf{x} : f(\mathbf{x}) = 0} \mathrm{tr}(\rho_1 Z_\mathbf{x}) \nonumber \\ 
    &= \sum_{\mathbf{x} \in \{0,1\}^3} \delta_{f(\mathbf{x}), 1} \, \mathrm{tr}(\rho_0 Z_\mathbf{x}) + \delta_{f(\mathbf{x}), 0} \, \mathrm{tr}(\rho_1 Z_\mathbf{x}).
\end{align}

\noindent This gives us the total error in terms of the arbitrary states $\rho_0$ and $\rho_1$ instead of $U$, so we can optimise them instead as a proxy for optimising $U$.

Supposing that $\rho_0$ and $\rho_1$ are fixed, we see that the contribution to the total error from any given string of outcomes $\mathbf{x} \in \{ 0, 1 \}^3$ is minimised if $f(\mathbf{x}) = i$, where $i$ maximises $\mathrm{tr}(\rho_i Z_\mathbf{x})$. This is called \emph{maximum likelihood decoding}, and defines the optimal post-processing function $f$, with the total error becoming:
\begin{equation}
    \eta = \sum_{\mathbf{x} \in \{0,1\}^3} \min \{ \mathrm{tr}(\rho_0 Z_\mathbf{x}), \mathrm{tr}(\rho_1 Z_\mathbf{x}) \}. \label{eq:total-error-sum-of-minima}
\end{equation}

\noindent Now let's look for the optimal $\rho_0$ and $\rho_1$. Since $Z_\mathbf{x}$ is diagonal in the computational basis, any off-diagonal elements of $\rho_0$ and $\rho_1$ don't contribute to the probabilities, so we may assume without loss of generality that the states are completely dephased before measurement:
\begin{equation}
    \rho_i = \sum_{\mathbf{y} \in \{0, 1\}^3} \lambda_{i}^{(\mathbf{y})} \ketbra{\mathbf{y}}{\mathbf{y}},
\end{equation}

\noindent where $\bm{\lambda}_i = \left( \lambda_i^{(\mathbf{y})} \right)_{\mathbf{y} \in \{0, 1\}^3}$ is a probability vector for both $i = 0$ and $i = 1$. If we let $\mathbf{z}_\mathbf{x} = \mathrm{diag}(Z_\mathbf{x})$, then the total error is:
\begin{equation}
    \eta = \eta(\bm{\lambda}_0, \bm{\lambda}_1) = \sum_{\mathbf{x} \in \{0,1\}^3} \min \{ \bm{\lambda}_0 \cdot \mathbf{z}_\mathbf{x}, \bm{\lambda}_1 \cdot \mathbf{z}_\mathbf{x} \}.
\end{equation}

\noindent We write the total error as an explicit function of $\bm{\lambda}_0$ and $\bm{\lambda}_1$, in order to show that it is concave in both arguments. Indeed, substituting $\bm{\lambda}_0 = (1-t) \bm{\mu}_0 + t \bm{\nu}_0$ for $t \in [0,1]$, we find that for each term of the sum:
\begin{align}
    &\hspace{-1em} \min \{ \left( (1-t) \bm{\mu}_0 + t \bm{\nu}_0  \right) \cdot \mathbf{z}_\mathbf{x}, \bm{\lambda}_1 \cdot \mathbf{z}_\mathbf{x} \} \nonumber \\
    &= \min \{ (1-t) \bm{\mu}_0 \cdot \mathbf{z}_\mathbf{x} + t \bm{\nu}_0 \cdot \mathbf{z}_\mathbf{x}, \nonumber \\
    &\hspace{5em} (1-t) \bm{\lambda}_1 \cdot \mathbf{z}_\mathbf{x} + t \bm{\lambda}_1 \cdot \mathbf{z}_\mathbf{x} \} \nonumber \\
    &\geqslant \min \{ (1-t) \bm{\mu}_0 \cdot \mathbf{z}_\mathbf{x}, (1-t) \bm{\lambda}_1 \cdot \mathbf{z}_\mathbf{x} \} \nonumber \\
    &\hspace{5em} + \min \{ t \bm{\nu}_0 \cdot \mathbf{z}_\mathbf{x}, t \bm{\lambda}_1 \cdot \mathbf{z}_\mathbf{x} \} \nonumber \\
    &= (1-t) \min \{ \bm{\mu}_0 \cdot \mathbf{z}_\mathbf{x}, \bm{\lambda}_1 \cdot \mathbf{z}_\mathbf{x} \} \nonumber \\
    &\hspace{5em} + t \min \{ \bm{\nu}_0 \cdot \mathbf{z}_\mathbf{x}, \bm{\lambda}_1 \cdot \mathbf{z}_\mathbf{x} \},
\end{align}

\noindent using $\min\{a + b, c + d\} \geqslant \min\{a,c\} + \min\{b,d\}$. As the total error is the sum of such terms, it is similarly concave in $\bm{\lambda_0}$:
\begin{equation}
    \eta((1-t) \bm{\mu}_0 + t \bm{\nu}_0, \bm{\lambda}_1) \geqslant (1-t) \eta(\bm{\mu}_0, \bm{\lambda}_1) + t \eta(\bm{\nu}_0, \bm{\lambda}_1),
\end{equation}

\noindent and symmetrically it is concave in $\bm{\lambda_1}$ too.

A concave function is minimised when the arguments are at extremal points of the input domain, and the extrema of the probability simplex are the unit vectors, so the optimal choice of $\rho_0$ and $\rho_1$ is a pair of pure product states of the form:
\begin{equation}
    \rho_0 = \ketbra{\mathbf{y}_0}{\mathbf{y}_0}; \quad \rho_1 = \ketbra{\mathbf{y}_1}{\mathbf{y}_1},
\end{equation}

\noindent for some $\mathbf{y}_0, \mathbf{y}_1 \in \{0, 1\}^3$.

Suppose that the first entries of $\mathbf{y}_0$ and $\mathbf{y}_1$ are the same, i.e. the first qubit is the same in both states $\ket{\mathbf{y}_0}$ and $\ket{\mathbf{y}_1}$. Then, regardless of the input, the first measurement always measures the same state, and so provides no information as to whether the input was $\ket{0}$ or $\ket{1}$. We can therefore assume that the first entries are different, since it is always possible to replicate the effect of having no information from the first measurement by discarding the measurement result and replacing it with classical randomness.

Repeating this argument for all three entries, we deduce that optimally:
\begin{equation}
    \ket{\mathbf{y}_1} = \ket{\mathbf{y}_0 \oplus \mathbf{1}}.
\end{equation}

\noindent Moreover, as the qubits don't have a natural order, we can permute the qubits of $\ket{\mathbf{y}_0}$ and achieve the same error rates provided that the inverse permutation is included in $f$. Also, swapping all zeroes with ones has the effect of swapping $\ket{\mathbf{y}_0}$ and $\ket{\mathbf{y}_1}$, and hence when combined with a bit flip after $f$ will swap $\varepsilon_0$ and $\varepsilon_1$. This doesn't affect the minimal achievable total error, so we can assume that $\ket{\mathbf{y}_0}$ has at most half its qubits in state $\ket{1}$. Thus we are left with just two options for $\ket{\mathbf{y}_0}$:
\begin{equation}
    \ket{\mathbf{y}_0} \in \{ \ket{000}, \ket{001} \}.
\end{equation}

\noindent Now we have all the information to find the optimal circuit across the entire range of values of $p,q$. We can assume that $p + q \leqslant 1$ without loss of generality, since $\mathbb{Z}_{p,q}$ and $\mathbb{Z}_{1-p,1-q}$ are equivalent measurements up to a bit flip on the output. Similarly, we will also assume without loss of generality that $p \leqslant q$, because combining $\mathbb{Z}_{p,q}$ with an $X$ gate before and a bit flip after turns it into $\mathbb{Z}_{q,p}$, and vice versa. Thus, since we know the values of $p$ and $q$, we can transform the problem into an equivalent one with $0 \leqslant p \leqslant q \leqslant 1 - p$, which we will now assume holds from this point forward.

\begin{table*}[t]
    \centering
    \begin{tabular}{c||c|c|c|c}
        $\mathbf{x}$ & $\bra{000} Z_\mathbf{x} \ket{000}$ & $\bra{111} Z_\mathbf{x} \ket{111}$ & $\bra{000} Z_\mathbf{x} \ket{000} - \bra{111} Z_\mathbf{x} \ket{111}$ & $f(\mathbf{x})$ \\ \hline \hline
        $000$ & $(1-p)^3$ & $q^3$ & $(1-p-q)((1-p)^2 + (1-p)q + q^2)$ & 0 \\ \hline
        $001$ or $010$ or $100$ & $p(1-p)^2$ & $q^2 (1-q)$ & $(1-p-q)(p - p^2 + pq - q^2)$ & $\begin{cases} 0 & \text{if } p - p^2 + pq - q^2 \geqslant 0 \\ 1 & \text{if } p - p^2 + pq - q^2 \leqslant 0 \end{cases}$ \\ \hline
        $011$ or $101$ or $110$ & $p^2 (1-p)$ & $q(1-q)^2$ & $-(1-p-q)(q - p^2 + pq - q^2)$ & 1 \\ \hline
        $111$ & $p^3$ & $(1-q)^3$ & $-(1-p-q)(p^2 + p(1-q) + (1-q)^2)$ & 1
    \end{tabular}
    \caption{The optimal post-processing function $f$ when $\rho_0 = \ketbra{000}{000}$, $\rho_1 = \ketbra{111}{111}$ for the range of $0 \leqslant p \leqslant q \leqslant 1 - p$. In the third row, we use that in this range, $q - p^2 + pq - q^2 = q(1-q) + p(q-p) \geqslant 0$.}
    \label{tab:optimal-f-000}
\end{table*}

\begin{table*}[t]
    \centering
    \begin{tabular}{c||c|c|c|c}
        $\mathbf{x}$ & $\bra{001} Z_\mathbf{x} \ket{001}$ & $\bra{110} Z_\mathbf{x} \ket{110}$ & $\bra{001} Z_\mathbf{x} \ket{001} - \bra{110} Z_\mathbf{x} \ket{110}$ & $f(\mathbf{x})$ \\ \hline \hline
        $000$ & $(1-p)^2 q$ & $(1-p)q^2$ & $(1-p-q)(1-p)q$ & 0 \\ \hline
        $001$ & $(1-p)^2 (1-q)$ & $p q^2$ & $(1-p-q)(1-p+pq)$ & 0 \\ \hline
        $010$ or $100$ & $(1-p)pq$ & $(1-p)q(1-q)$ & $-(1-p-q)(1-p)q$ & 1 \\ \hline
        $011$ or $101$ & $p(1-p)(1-q)$ & $pq(1-q)$ & $(1-p-q)p(1-q)$ & 0 \\ \hline
        $110$ & $p^2 q$ & $(1-p)(1-q)^2$ & $-(1-p-q)(1-q+pq)$ & 1 \\ \hline
        $111$ & $p^2 (1-q)$ & $p (1-q)^2$ & $-(1-p-q)p(1-q)$ & 1
    \end{tabular}
    \caption{The optimal post-processing function $f$ when $\rho_0 = \ketbra{001}{001}$, $\rho_1 = \ketbra{110}{110}$ for the range of $0 \leqslant p \leqslant q \leqslant 1 - p$.}
    \label{tab:optimal-f-001}
\end{table*}

Then, for each $\mathbf{x} \in \{0,1\}^3$, we can calculate the probabilities of getting the result $\mathbf{x}$ for both input states, and use maximum likelihood decoding to find the optimal value for $f(\mathbf{x})$. This we do, for $\rho_0 = \ketbra{000}{000}$, $\rho_1 = \ketbra{111}{111}$ in Table \ref{tab:optimal-f-000}, and for $\rho_0 = \ketbra{001}{001}$, $\rho_1 = \ketbra{110}{110}$ in Table \ref{tab:optimal-f-001}.

\begin{figure}[t]
    \centering
    \includegraphics[scale=1.25]{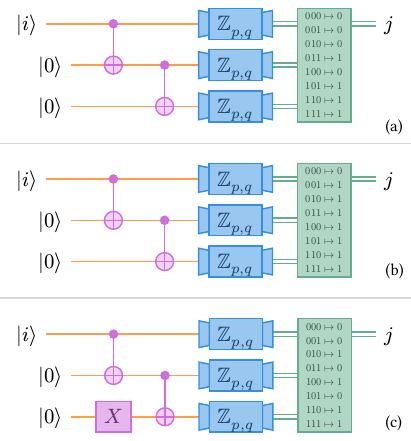}
    \caption{The three candidate non-adaptive circuits to be optimal with three imperfect $Z$ measurements $\mathbb{Z}_{p,q}$ when $0 \leqslant p \leqslant q \leqslant 1-p$. \textbf{(a)} is the same circuit as the example in Figure \ref{fig:noisy-Z-three-measurements}(b). In the main text, we compare the total errors each attains, and show that the optimal circuit is never (a), is \textbf{(b)} when $3p - p^2 - pq - q^2 \leqslant 0$, and is \textbf{(c)} when $3p - p^2 - pq - q^2 \geqslant 0$.}
    \label{fig:non-adaptive-noisy-Z}
\end{figure}

In the first instance, $f$ takes one of two forms, depending on $p$ and $q$, while in the second, $f$ is always the same for any $p,q$ under consideration. Thus, we get three candidate circuits in total, one of which we know will be optimal for any $p,q$ satisfying $0 \leqslant p \leqslant q \leqslant 1-p$, displayed in Figure \ref{fig:non-adaptive-noisy-Z}. The first, (a), comes from $\rho_0 = \ketbra{000}{000}$, $\rho_1 = \ketbra{111}{111}$, when $p - p^2 + pq - q^2 \geqslant 0$, which we note is exactly the same as the one shown in Figure \ref{fig:noisy-Z-three-measurements}(b). If the pre-processing is the same, but instead $p - p^2 + pq - q^2 \leqslant 0$, then maximum likelihood decoding instead gives us the circuit (b), with the strings $001$, $010$, and $100$ being decoded as $1$ instead of $0$. Finally, if $\rho_0 = \ketbra{001}{001}$, $\rho_1 = \ketbra{110}{110}$, then the optimal choice of post-processing is always the same, giving the circuit shown in (c).

Using \eqref{eq:non-adaptive-error-rates}, we can calculate the error rates and hence total error of each of these circuits, and compare across different choices of unitary pre-processing. Labelling the error rates $\varepsilon_0$, $\varepsilon_1$, and $\eta$ with additional subscripts to denote the circuit they come from, we have:
\begin{align}
    \varepsilon_{0,3 (\mathrm{a})} = 3p^2 - 2p^3; &\quad \varepsilon_{1,3 (\mathrm{a})} = 3q^2 - 2q^3; \nonumber \\
    \eta_{3 (\mathrm{a})} = 3p^2 + 3 & q^2 - 2p^3 - 2q^3; \nonumber \\
    \varepsilon_{0,3 (\mathrm{b})} = 3p - 3p^2 & + p^3; \quad \varepsilon_{1,3 (\mathrm{b})} = q^3; \nonumber \\
    \eta_{3 (\mathrm{b})} = 3p - 3 & p^2 + p^3 + q^3; \nonumber \\
    \varepsilon_{0,3 (\mathrm{c})} = p^2 + 2pq - 2p^2 q; & \quad \varepsilon_{1,3 (\mathrm{c})} = 2pq + q^2 - 2pq^2; \nonumber \\
    \eta_{3 (\mathrm{c})} = p^2 + 4pq \, + \, & q^2 - 2p^2 q - 2p q^2. \label{eq:non-adaptive-noisy-Z-errors}
\end{align}

\noindent We already know from the derivation in Table \ref{tab:optimal-f-000} that $\eta_{3 (\mathrm{a})}$ will be smaller than $\eta_{3 (\mathrm{b})}$ if $p - p^2 + pq - q^2 \geqslant 0$, and greater if $p - p^2 + pq - q^2 \leqslant 0$. Comparing the other two pairs:
\begin{align}
    \eta_{3 (\mathrm{a})} - \eta_{3 (\mathrm{c})} &= 2p^2 - 4pq + 2q^2 - 2p^3 + 2p^2 q + 2p q^2 - 2q^3 \nonumber \\
    &= 2(1-p-q)(p-q)^2;
\end{align}
\begin{align}
    \eta_{3 (\mathrm{b})} - \eta_{3 (\mathrm{c})} &= 3p - 4p^2 - 4pq - q^2 + p^3 + 2p^2 q + 2pq^2 + q^3 \nonumber \\
    &= (1-p-q)(3p - p^2 - pq - q^2).
\end{align}

\noindent Since $\eta_{3 (\mathrm{a})} - \eta_{3 (\mathrm{c})} \geqslant 0$ always, then the circuit in Figure \ref{fig:non-adaptive-noisy-Z}(a) can never beat the one in Figure \ref{fig:non-adaptive-noisy-Z}(c), and so we can discount it entirely.

This leaves the two circuits in Figure \ref{fig:non-adaptive-noisy-Z}(b) and Figure \ref{fig:non-adaptive-noisy-Z}(c), of which (b) is optimal when $3p - p^2 - pq - q^2 \leqslant 0$, and (c) is optimal when $3p - p^2 - pq - q^2 \geqslant 0$. We'll call the minimal total error achievable with a non-adaptive circuit $\etaNA$, which will be useful when comparing to adaptive circuits soon:\begin{equation}
    \etaNA = \begin{cases} 3p - 3p^2 + p^3 + q^3 \\ \qquad \qquad \text{if } 3p - p^2 - pq - q^2 \leqslant 0; \\ p^2 + 4pq + q^2 - 2p^2 q - 2p q^2 \\ \qquad \qquad \text{if } 3p - p^2 - pq - q^2 \geqslant 0. \end{cases} \label{eq:optimal-non-adaptive-3}
\end{equation}

\noindent This gives us the minimal total error achievable by a non-adaptive circuit for any values of $p,q$ satisfying $0 \leqslant p \leqslant q \leqslant 1-p$ (and indeed any values of $p,q$ at all by extrapolating to the other three quadrants of the parameter space). Note that where the inequalities overlap, both expressions coincide, so either circuit in Figure \ref{fig:non-adaptive-noisy-Z}(b) or Figure \ref{fig:non-adaptive-noisy-Z}(c) is optimal. We continue to use overlapping inequalities to emphasise this choice of different optima throughout.

\subsection{An adaptive advantage} \label{sec:optimal-adaptive-noisy-Z}

\begin{figure*}[t]
    \centering
    \includegraphics[scale=1.25]{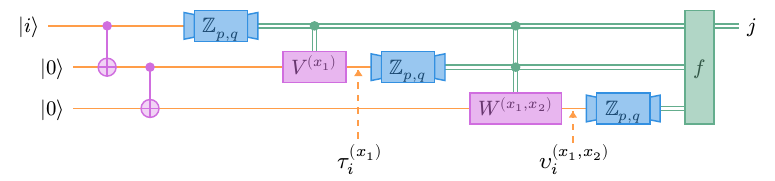}
    \caption{The general form of a class of circuits where one such circuit will be the optimal adaptive circuit with three imperfect $Z$ measurements $\mathbb{Z}_{p,q}$ for any choice of $p,q$. Given an input $\ket{i}$, $\tau_i^{(x_1)} = V^{(x_1)} \ketbra{i}{i} {V^{(x_1)}}^\dagger$ is the state of the second qubit immediately before it is measured, conditional on the first measurement result being $x_1$, while $\upsilon_i^{(x_1,x_2)} = W^{(x_1,x_2)} \ketbra{i}{i} {W^{(x_1,x_2)}}^\dagger$ is the state of the third qubit immediately before it is measured, conditional on the first two measurement results being $x_1$ and $x_2$. $V^{(x_1)}, W^{(x_1,x_2)} \in \{I, X\}$ for all $x_1,x_2$, and $f$ is determined by maximum likelihood decoding.}
    \label{fig:adaptive-noisy-Z-simplification}
\end{figure*}

\begin{figure*}[t]
    \centering
    \includegraphics[scale=1.25]{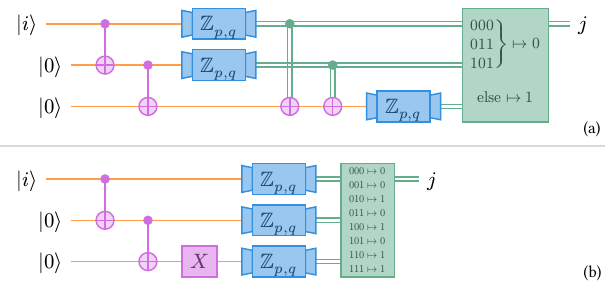}
    \caption{\textbf{(a)} An optimal adaptive circuit with three uses of $\mathbb{Z}_{p,q}$ when $p - p^2 + pq - q^2 \leqslant 0$, where $f$ is determined by maximum likelihood decoding. \textbf{(b)} An optimal adaptive circuit with three uses of $\mathbb{Z}_{p,q}$ when $p - p^2 + pq - q^2 \geqslant 0$, where $f$ is again determined by maximum likelihood decoding. Indeed, this circuit is equivalent to the one in Figure \ref{fig:non-adaptive-noisy-Z}(c), and doesn't utilise adaptivity even though we have allowed for that possibility, so shows that there is no adaptive advantage with three uses of $\mathbb{Z}_{p,q}$ when $p - p^2 + pq - q^2 \geqslant 0$.}
    \label{fig:adaptive-noisy-Z}
\end{figure*}

\noindent Now that we know what the optimal non-adaptive circuit using $\mathbb{Z}_{p,q}$ three times is, we have a target to try and beat with an adaptive circuit. To get a complete picture of where adaptive advantage is possible, we will need to find a way of determining the optimal adaptive circuit, as we have done in the non-adaptive case. Much of the details of this calculation will be similar in spirit to what we have already done for non-adaptive circuits, so we leave the full explanation to Appendix \ref{sec:optimal-adaptive-noisy-Z-argument}. However, we will briefly run through the methodology here.

First, we want to simplify the circuit, from the general form in Figure \ref{fig:noisy-Z-three-measurements}(c) to something more manageable while retaining optimality. We define $\sigma_i$, $\tau_i^{(x_1)}$, and $\upsilon_i^{(x_1,x_2)}$ to be the states of the first, second, and third qubits respectively, right before they are measured, given input $i$, first measurement result $x_1$, and second measurement result $x_2$. Then, much as we showed that $\rho_0$ and $\rho_1$ needed to be pure product states in the computational basis before, we find that without loss of generality we should choose our measured states extremally as:
\begin{align}
    \sigma_0 & =  \ketbra{0}{0}; \quad \sigma_1 = \ketbra{1}{1}; \nonumber \\
    \tau_i^{(x_1)}, &  \, \upsilon_i^{(x_1,x_2)} \in \{ \ketbra{0}{0}, \ketbra{1}{1} \}; \nonumber \\
    \tau_0^{(x_1)} \neq & \, \tau_1^{(x_1)}; \quad \upsilon_0^{(x_1,x_2)} \neq \upsilon_1^{(x_1,x_2)}.
\end{align}

\noindent Any such choice of states can be realised by a circuit of the form shown in Figure \ref{fig:adaptive-noisy-Z-simplification}, with single-qubit unitaries $V^{(x_1)}, W^{(x_1,x_2)} \in \{I, X\}$. Then, as described in Appendix \ref{sec:optimal-adaptive-noisy-Z-argument}, we now go through the options for these unitaries, working backwards through the circuit to minimise total error at each step, and come to the conclusion that one optimal choice for the unitaries $V^{(x_1)}$ and $W^{(x_1,x_2)}$ is:
\begin{align}
    V^{(0)} &= I; \nonumber \\
    V^{(1)} &= I; \nonumber \\
    W^{(0,0)} &= \begin{cases}
        I & \text{if } p - p^2 + pq - q^2 \leqslant 0; \\
        X & \text{if } p - p^2 + pq - q^2 \geqslant 0;
    \end{cases} \nonumber \\
    W^{(0,1)} &= X; \nonumber \\
    W^{(1,0)} &= X; \nonumber \\
    W^{(1,1)} &= \begin{cases}
        I & \text{if } p - p^2 + pq - q^2 \leqslant 0; \\
        X & \text{if } p - p^2 + pq - q^2 \geqslant 0.
    \end{cases}
\end{align}

\noindent This gives rise to two different circuits, shown in Figure \ref{fig:adaptive-noisy-Z}. Through our derivation, we also obtain the optimal total error among all adaptive circuits with three uses of $\mathbb{Z}_{p,q}$, which we denote $\etaA$: \\
\begin{equation}
    \etaA = \begin{cases}
        p - p^2 + 4pq + p^3 - 2p^2 q - 2pq^2 + q^3 \\
        \hspace{4em} \text{if } p - p^2 + pq - q^2 \leqslant 0; \\
        p^2 + 4pq + q^2 - 2p^2 q - 2pq^2 \\
        \hspace{4em} \text{if } p - p^2 + pq - q^2 \geqslant 0.
    \end{cases} \label{eq:optimal-adaptive-3}
\end{equation}

\begin{figure}[t]
    \centering
    \includegraphics[scale=1.25]{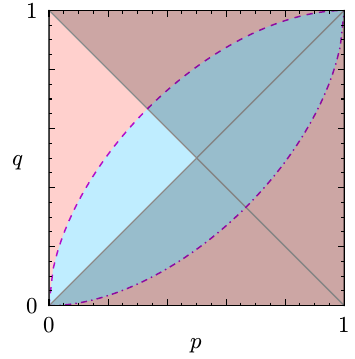}
    \caption{A plot of the values of $p$ and $q$ where we see an adaptive advantage when measuring with $N$ copies of the imperfect $Z$ measurement $\mathbb{Z}_{p,q}$. The lighter triangle on the left shows the region $0 \leqslant p \leqslant q \leqslant 1-p$ which we have been considering, with the results mirroring to the other three quadrants. In the pink region, where $p - p^2 + pq - q^2 < 0$, $p > 0$, and $p + q < 1$ (or a reflected equivalent), we have proven that there is an adaptive advantage for $N = 3$ measurements. In the blue region, it may be shown by a similar method there is an adaptive advantage for $N = 4$. The boundaries between these regions are $p - p^2 + pq - q^2 = 0$ (purple, dashed) and $q - p^2 + pq - q^2 = 0$ (purple, dash-dotted). Along the grey lines (and the axes), there is no adaptive advantage for any $N$.}
    \label{fig:noisy-Z-adaptive-advantage}
\end{figure}

\noindent We note that if $p - p^2 + pq - q^2 \geqslant 0$, the choice $W^{(x_1,x_2)} = X$ is always optimal regardless of $x_1,x_2$, so our optimal circuit can be implemented non-adaptively, such as in Figure \ref{fig:adaptive-noisy-Z}(b). We can also infer this from the optimal non-adaptive total error \eqref{eq:optimal-non-adaptive-3}, which is equal to the optimal adaptive total error $\etaA = p^2 + 4pq + q^2 - 2p^2 q - 2pq^2$ in this region, so there is no adaptive advantage.

Meanwhile, if $p - p^2 + pq - q^2 < 0$, there is no such choice of $W^{(x_1,x_2)}$ independent of $x_1,x_2$, since we are required to take $W^{(0,0)} = I$ and $W^{(0,1)} = X$ for optimality. This alone is not a proof of adaptive advantage, since we did ignore some equally optimal circuits along the way that may have been non-adaptive. However, we can compare $\etaA$ with $\etaNA$, and show that for $p,q$ with $p - p^2 + pq - q^2 < 0$: \\
\begin{align}
    & \eta^{(NA)} - \eta^{(A)} \nonumber \\
    &= \begin{cases}
        2p - 2p^2 - 4pq + 2p^2 q + 2pq^2 \\
        \hspace{4em} \text{if } 3p - p^2 - pq - q^2 \leqslant 0; \\
        - p + 2p^2 - p^3 + q^2 - q^3 \\
        \hspace{4em} \text{if } 3p - p^2 - pq - q^2 \geqslant 0;
    \end{cases} \nonumber \\
    &= \begin{cases}
        2 (1-p-q) p (1-q) \\
        \hspace{4em} \text{if } 3p - p^2 - pq - q^2 \leqslant 0; \\
        - (1-p-q) (p - p^2 + pq - q^2) \\
        \hspace{4em} \text{if } 3p - p^2 - pq - q^2 \geqslant 0;
    \end{cases} \nonumber \\
    &> 0 \quad \text{if } p - p^2 + pq - q^2 < 0, \, p > 0, \, p + q < 1.
\end{align}

\noindent Therefore, we have shown an adaptive advantage for three uses of the imperfect $Z$ measurement $\mathbb{Z}_{p,q}$ for any $p,q$ with $ p - p^2 + pq - q^2 < 0$, $p > 0$, and $p + q < 1$.

More broadly, we can extend these results to the entire parameter space, with the presence or otherwise of adaptive advantage illustrated in Figure \ref{fig:noisy-Z-adaptive-advantage}. The pink region corresponds to where an adaptive circuit like in Figure \ref{fig:adaptive-noisy-Z}(a) can outperform any non-adaptive circuit, whereas the blue region corresponds to where no such advantage of adaptivity exists. This is noteworthy because it shows us that while adaptivity is powerful, whether it actually helps to lower the minimal total error is a more subtle question. We have seen instances where it does, and we have seen instances where it does not.

\begin{figure*}[t]
    \centering
    \includegraphics[scale=1.25]{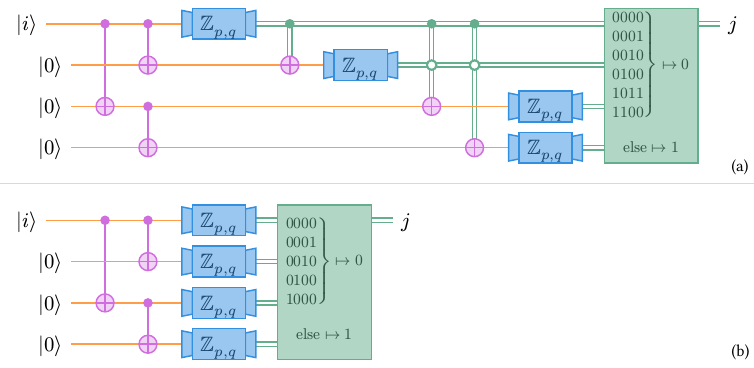}
    \caption{Example \textbf{(a)} adaptive and \textbf{(b)} non-adaptive circuits with four imperfect $Z$ measurements $\mathbb{Z}_{p,q}$ that are optimal when $p - p^2 + pq - q^2 \geqslant 0$. The adaptive circuit gives strictly smaller total error, and hence an adaptive advantage exists, for all $p,q$ where $0 \leqslant p \leqslant q \leqslant 1-p$ and $p - p^2 + pq - q^2 \geqslant 0$ except the edge cases $p = q$ and $p + q = 1$.}
    \label{fig:noisy-Z-four-measurements}
\end{figure*}

\subsubsection{Adaptive advantage in the remaining cases}

\noindent To look for an adaptive advantage when $p - p^2 + pq - q^2 \geqslant 0$, we will allow ourselves four uses of $\mathbb{Z}_{p,q}$. We can do the equivalent calculations with the extra measurement, and we find that the circuit shown in Figure \ref{fig:noisy-Z-four-measurements}(a) is optimal among adaptive circuits for any choice of $p,q$ satisfying both $p - p^2 + pq - q^2 \geqslant 0$ and $0 \leqslant p \leqslant q \leqslant 1-p$. It has:
\begin{align}
    \varepsilon_0 &= 5p^2 - 6p^3 - 2p^2 q + 2pq^2 + 2p^4 + p^3 q - pq^3; \nonumber \\
    \varepsilon_1 &= p^2 - p^3 - p^2 q + pq^2 + 3q^3 + p^3 q - pq^3 - 2q^4; \nonumber \\
    \etaA &= 6p^2 - 7p^3 - 3p^2 q + 3pq^2 + 3q^3 \nonumber \\
    &\hspace{6em}+ 2p^4 + 2p^3 q - 2pq^3 - 2q^4,
\end{align}

\noindent while the optimal non-adaptive circuit is shown in Figure \ref{fig:noisy-Z-four-measurements}(b), with errors:
\begin{align}
    \varepsilon_0 &= 6p^2 - 8p^3 + 3p^4; \nonumber \\
    \varepsilon_1 &= 3q^3 - 2q^4; \nonumber \\
    \etaNA &= 6p^2 - 8p^3 + 4q^3 + 3p^4 - 3q^4.
\end{align}

\noindent The difference between these total errors is:
\begin{align}
    &\etaNA - \etaA \nonumber \\
    &= -p^3 + 3p^2 q - 3pq^2 + q^3 + p^4 - 2p^3 q + 2pq^3 - q^4 \nonumber \\
    &= (q-p)^3 (1-p-q),
\end{align}

\noindent which is always greater than zero except along the boundaries $p = q$ and $p + q = 1$. Therefore, we see an adaptive advantage in the blue region for $N = 4$, and indeed, the adaptive advantage remains for the pink region as well, although using other circuits than the ones shown in Figure \ref{fig:noisy-Z-four-measurements}.

We are left only with six lines where we are yet to observe an adaptive advantage, shown in grey and along the axes in Figure \ref{fig:noisy-Z-adaptive-advantage}. Using the usual symmetry considerations, we can classify these into three cases, corresponding to the three edges of the triangle $0 \leqslant p \leqslant q \leqslant 1-p$:
\begin{itemize}
    \item $p = q$, the symmetric imperfect $Z$;
    \item $p = 1-q$, where $Z_0 = (1-p)I$ and $Z_1 = pI$ are both proportional to the identity, making this a trivial measurement with outcomes uncorrelated to the input state;
    \item $p = 0$, with $0 < q < 1$, which we call a \emph{semi-perfect $Z$}, since a measurement result of $j = 1$ means that $\ket{i} = \ket{1}$ with certainty, while a measurement result of $j = 0$ allows no similarly absolute deduction. Symmetrically, the same behaviour will be seen if $q = 0$, $p = 1$, or $q = 1$, with the other parameter neither $0$ nor $1$.
\end{itemize}

\noindent We argue in Appendix \ref{sec:edge-cases} that in all of these edge cases, there will never be an adaptive advantage, no matter how large $N$ is.

\subsection{Quantifying adaptive advantage} \label{sec:relative-adaptive-advantage}

\noindent As yet, we have only considered whether an adaptive advantage is achievable or not. We will now investigate the ratio\footnote{Note that $R$ is defined only if $\etaA \neq 0$. However, we don't worry about the case when it is zero, as in Section \ref{sec:qubit-measurements} we show that this can only happen if $\etaNA = 0$ as well, in which case we don't need adaptivity.} of the minimal non-adaptive total error to the minimal adaptive total error, which we call the \emph{relative adaptive advantage} $R$: 
\begin{equation}
    R = \frac{\etaNA}{\etaA} \geqslant 1. \label{eq:relative-adaptive-advantage}
\end{equation}

\begin{figure*}[t]
    \centering
    \includegraphics{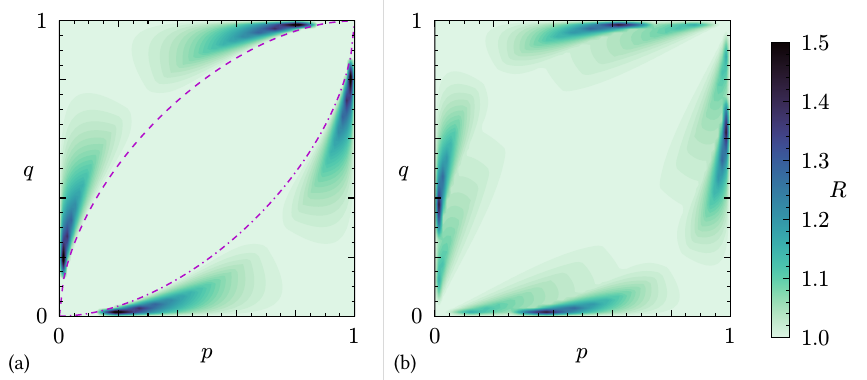}
    \caption{Contour plots of the relative adaptive advantage for $N$ uses of the imperfect $Z$ measurement $\mathbb{Z}_{p,q}$, for \textbf{(a)} $N = 3$ and \textbf{(b)} $N = 4$. The region between the purple dashed and dash-dotted lines in (a) is the region of no adaptive advantage for three measurements, as in Figure \ref{fig:noisy-Z-adaptive-advantage}.}
    \label{fig:adaptive-advantage-contours}
\end{figure*}

\noindent Figure \ref{fig:adaptive-advantage-contours} shows the behaviour of the relative adaptive advantage in the cases we have observed so far: either $N = 3$ or $N = 4$ imperfect $Z$ measurements $\mathbb{Z}_{p,q}$. In both cases, we see that most of the time $R$ is only slightly greater than 1. However, for certain values of $p$ and $q$, particularly when they are small and very asymmetric, the advantage can be far greater. Indeed, we will now show that the relative adaptive advantage is unbounded.

\begin{figure*}[t]
    \centering
    \includegraphics[scale=1.25]{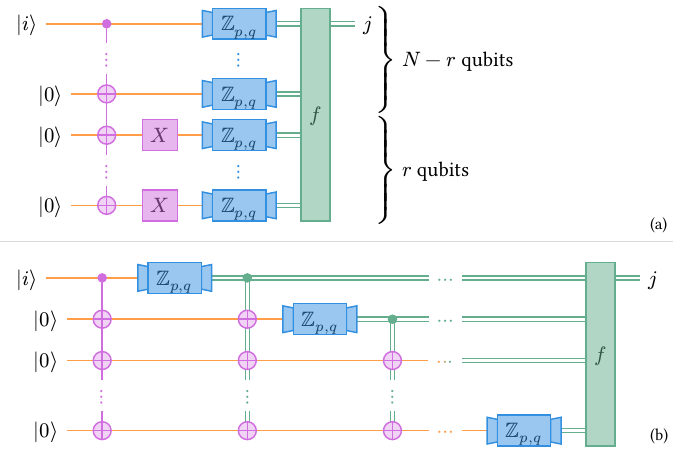}
    \caption{\textbf{(a)} A family of non-adaptive circuits with $N$ measurements of $\mathbb{Z}_{p,q}$, parametrised by $0 \leqslant r \leqslant \frac{1}{2} N$, with $f$ given by maximum likelihood decoding in each case. One such circuit will have total error equal to the minimal non-adaptive total error $\etaNA$. \textbf{(b)} An example adaptive circuit with $N$ measurements of $\mathbb{Z}_{p,q}$ that we use to outperform all non-adaptive circuits. After each measurement result of $1$, the remaining qubits are all bit-flipped with an $X$ gate, while after a measurement result of $0$, we instead apply $I$.}
    \label{fig:arbitrary-adaptive-advantage}
\end{figure*}

Supposing that the number of measurements $N \geqslant 3$ is fixed, we take the parameters of the imperfect $Z$ measurement to be $q = 2^{-N}$ and $p = q^N = 2^{-N^2}$. By following the same argument as we did for three measurements, the optimal non-adaptive circuit will without loss of generality use $\mathbb{Z}_{p,q}^{\otimes N}$ to measure the states $\rho_i = \ketbra{\mathbf{y}_i}{\mathbf{y}_i}$, where $\ket{\mathbf{y}_0} = \ket{0}^{\otimes (N-r)} \ket{1}^{\otimes r}$ for some $r \leqslant \frac{1}{2} N$, and $\ket{\mathbf{y}_1} = \ket{\mathbf{y}_0 \oplus \mathbf{1}} = \ket{1}^{\otimes (N-r)} \ket{0}^{\otimes r}$. One such circuit producing these desired states is shown in Figure \ref{fig:arbitrary-adaptive-advantage}(a).

In Appendix \ref{sec:arbitrary-adaptive-advantage}, we prove a lower bound on the total error of any non-adaptive optimal circuit (and hence on $\etaNA$) of:
\begin{equation}
    \etaNA \geqslant N q^N.
\end{equation}

\noindent In the adaptive case, we upper bound the minimal total error by an upper bound on the total error of the adaptive circuit shown in Figure \ref{fig:arbitrary-adaptive-advantage}(b). It is defined by copying the input with CNOT gates to a total of $N$ qubits, and then measuring each qubit one at a time. After each measurement, if we get a measurement result of 1, we apply $X$ to each of the remaining qubits, while if we get a measurement result of 0, we apply $I$. The minimum adaptive total error $\etaA$ is upper bounded by the total error of this circuit, allowing us to prove in Appendix \ref{sec:arbitrary-adaptive-advantage} that:
\begin{equation}
    \etaA \leqslant 3 q^N.
\end{equation}

\noindent Thus, we find that:
\begin{equation}
    R = \frac{\etaNA}{\etaA} \geqslant \frac{Nq^N}{3q^N} = \frac{N}{3},
\end{equation}

\noindent which is arbitrarily large by choosing $N$ large enough to begin with. Therefore, we can have a relative adaptive advantage of as much as we like, for carefully chosen $N$, $p$, and $q$. We note that this is mostly a mathematical curiosity, as the parameters for $p$ and $q$ are very carefully selected, and very quickly become so minuscule as to be zero for all practical purposes. However, it does leave the door open for significant adaptive advantage to be found in practical parameter regimes.

In summary, through the lens of imperfect $Z$ measurements, we have seen how adaptivity can help us to reduce the total error in measurement procedures that involve multiple measurements taking place. We have also seen cases in which adaptivity cannot improve on the best non-adaptive circuits. Therefore, adaptivity is a powerful tool to be able to use, but finding how best to use it, or even if it is worth using at all, is a difficult question to answer.

\section{General measurements} \label{sec:qubit-measurements}

\noindent We now consider the general problem of approximating a perfect $Z$ measurement, when the only measurement we can perform is non-projective. We suppose that we have $N$ uses of the measuring device, but other than this, we are unrestricted, as we can perform arbitrary quantum channels (i.e. completely positive, trace-preserving maps, which can be implemented as unitaries if an ancillary system is adjoined) at any time with perfect fidelity, and we are not limited by any time or space considerations. Note that we cannot allow operations that may in other places be called channels, but do not fit the definition of completely positive and trace-preserving, for instance those that are trace-decreasing or have classical outputs, as they would require measurements which are our restricted resource.

In this section, we will show how the optimal non-adaptive and adaptive circuits for $N$ uses of our measuring device can be found. In Section \ref{sec:optimal-non-adaptive}, we will look at the non-adaptive case, which will involve finding the best way to simultaneously optimise the quantum circuit before the measurements and the classical decision problem after them. Then, in Section \ref{sec:optimal-adaptive}, we will switch to the adaptive case, where we will show that the circuit has a recursive structure as shown in Figure \ref{fig:adaptive-recursive}. Thus, we can recursively define a set of achievable errors $(\varepsilon_0, \varepsilon_1)$ for each number of uses of the measuring device \eqref{eq:achievable-errors-0}, \eqref{eq:achievable-errors}, and thus determine the optimal adaptive circuit by finding the optimal point in this set. We will also prove a number of useful results along the way, such as Proposition \ref{prop:no-adaptive-advantage} which gives a sufficient condition for no adaptive advantage to exist, and Corollary \ref{cor:adaptive-total-error}, which reduces the search space for optimal adaptive circuits to a finite set. These we will apply to several example measurements in Section \ref{sec:qubit-examples} to find where adaptive advantage does and does not exist.

In both this section and the next, proofs of stated results can be found in Appendix \ref{sec:proofs}.

\subsection{Notation}

\noindent Before we start the general case, we need to define some notation. We let $\mathbb{M}$ be a POVM with $m$ outcomes:
\begin{equation}
    \mathbb{M} = \{ M_x \}_{x \in [m]},
\end{equation}

\noindent where\footnote{The more conventional $[m] = \{1,2,...,m\}$ would work equivalently here, as the elements are used only as labels. However, our convention appeals to the usual numbering of computational basis states, which are indexed from 0.} $[m] = \{0, 1, ..., m-1\}$, and:
\begin{equation}
    0 \leqslant M_x \in \mathcal{B}(\mathcal{H}) \ \ \forall x \in [m]; \quad \sum_{x \in [m]} M_x = \mathbb{I}_\mathcal{H}.
\end{equation}

\noindent We do not require $\mathcal{H}$ to be two-dimensional, as we can still approximate the $Z$ measurement on a qubit by taking an input qubit state and mapping it to a state in $\mathcal{H}$. As such, we let $h = \dim \mathcal{H}$. 

If $\mathbb{M} = \{ M_x \}_{x \in [m]}$ and $\mathbb{N} = \{ N_y \}_{y \in [n]}$ are two POVMs, then we define their tensor product as:
\begin{equation}
    \mathbb{M} \otimes \mathbb{N} = \left\{ M_x \otimes N_y \right\}_{x \in [m], y \in [n]},
\end{equation}

\noindent and similarly, the $N$-fold tensor product of $\mathbb{M}$ with itself is:
\begin{equation}
    \mathbb{M}^{\otimes N} = \left\{ M_\mathbf{x} = \bigotimes_{k = 1}^N M_{x_k} \right\}_{\mathbf{x} \in [m]^N} .
\end{equation}

\noindent We also define four further useful notions. Firstly, the \emph{probability range} \cite{buscemi-2005, heinosaari-2020} of a POVM $\mathbb{M}$ is:
\begin{equation}
    \mathcal{R}(\mathbb{M}) = \{ (\mathrm{tr}(\rho M_0), \mathrm{tr}(\rho M_1), ..., \mathrm{tr}(\rho M_{m-1})) : \rho \in \mathcal{D}(\mathcal{H}) \}, \label{eq:probability-range}
\end{equation}

\noindent i.e. the set of all possible probability distributions that can result from a measurement with $\mathbb{M}$, a subset of the entire probability simplex in $m$ dimensions.

Secondly, the \emph{spectral diameter} \cite{parlett-2003} of a Hermitian operator $M$ is the difference between its maximum and minimum eigenvalues:
\begin{equation}
    \mathrm{diam}(M) = \lambda_\textrm{max}(M) - \lambda_\textrm{min}(M), \label{eq:spectral-diameter}
\end{equation}

\noindent and the \emph{spectral range} is the interval between the minimum and maximum eigenvalues:
\begin{equation}
    \mathcal{I}(M) = [\lambda_\textrm{min}(M), \lambda_\textrm{max}(M)]. \label{eq:spectral-range}
\end{equation}

\noindent Finally, the \emph{convex hull} of a set $S$ is denoted by:
\begin{equation}
    \mathrm{conv}(S) = \{ (1-t)x + ty : x,y \in S, t \in [0,1] \}. \label{eq:convex-hull}
\end{equation}

\subsection{Optimal non-adaptive circuits} \label{sec:optimal-non-adaptive}

\begin{figure}[t]
    \centering
    \includegraphics[scale=1.25]{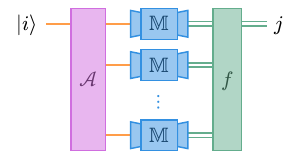}
    \caption{The general non-adaptive circuit we can use to improve the measurement $\mathbb{M}$, where we have $N$ uses of $\mathbb{M}$ in total. Here, $\ket{i} \in \{ \ket{0}, \ket{1}, ..., \ket{d-1} \}$ is our input state, $\mathcal{A} : \mathcal{B}(\mathbb{C}^d) \to \mathcal{B}(\mathcal{H}^{\otimes N})$ is a completely positive, trace preserving map, $f : [m]^N \to [d]$ is an arbitrary function, and $j \in [d]$ is our output. In Section \ref{sec:qubit-measurements} we take $d = 2$.}
    \label{fig:non-adaptive}
\end{figure}

\noindent We first look at the non-adaptive measurement circuits that we can have with $N$ uses of $\mathbb{M}$, for which the most general circuit is as shown in Figure \ref{fig:non-adaptive}. While we allow for a general input state $\ket{\psi}$, we still consider the error rates $\varepsilon_0$ and $\varepsilon_1$ as defined in \eqref{eq:errors-2}, and our figure of merit will remain their sum $\eta$, the total error, which we wish to minimise. This means that for purposes of optimisation, we only need to consider the possible input states $\ket{i} \in \{ \ket{0}, \ket{1} \}$.

There are two possible further generalisations to the circuit in Figure \ref{fig:non-adaptive} that we can make. First, we could have probabilistic post-processing $f$, that is, given a string $x_1 x_2 ... x_N$ of measurement outcomes, we output $j = 0$ with probability $p_{x_1 x_2 ... x_N}$ and $j = 1$ with probability $1-p_{x_1 x_2 ... x_N}$. Second, we could have a classical mixture of such circuits \cite{haapasalo-2011}, where we make a random choice of a circuit from a predetermined ensemble that we are going to run at the very start.

However, the linearity of the total error figure of merit in $\varepsilon_0$ and $\varepsilon_1$ allows us to disregard these generalisations immediately, as they won't decrease our optimal $\eta$. To see why, suppose we have a mixed protocol where circuit $\mathcal{C}_1$ is performed with probability $\alpha$, and circuit $\mathcal{C}_2$ is performed with probability $1-\alpha$. The overall total error is simply a linear combination of those for the individual circuits:
\begin{equation}
    \eta = \alpha \eta_{\mathcal{C}_1} + (1-\alpha) \eta_{\mathcal{C}_2},
\end{equation}

\noindent and so is minimised at either $\alpha = 0$ or $\alpha = 1$ depending on which circuit had the lower total error in the first place. Thus mixed protocols can give no advantage in total error over fixed ones.

Similarly, a circuit with non-deterministic $f$ can be thought of as a classical mixture of circuits with deterministic $f$. Hence, we may assume $f$ to be deterministic, as randomness cannot decrease the total error. These cases are still important to note, as for instance in \cite{linden-2025} with a different choice of figure of merit, mixed protocols can be strictly better.

We then define $\etaNA(\mathbb{M},N)$ to be the minimum total error achievable with such a non-adaptive circuit:
\begin{equation}
    \etaNA(\mathbb{M},N) = \min_{\mathcal{A}, f} \{ \varepsilon_0 + \varepsilon_1 \}.
\end{equation}

\noindent While optimising over both $\mathcal{A}$ and $f$ is a daunting task, if one is fixed, finding the minimal $\eta$ achievable by an optimal choice of the other is reasonably straightforward. This reduces the search space to either all possibilities for $\mathcal{A}$, or all possibilities for $f$, depending on which we choose to fix. We will investigate both in the rest of this section, and will see that depending on $\mathbb{M}$, either direction could be the right choice.

First, suppose that $f$ is fixed. As in our earlier example, we define $\rho_0 = \mathcal{A}(\ketbra{0}{0})$ and $\rho_1 = \mathcal{A}(\ketbra{1}{1})$, and the effective POVM element $Q_0^{(f)}$ by:
\begin{equation}
    Q_0^{(f)} = \sum_{\mathbf{x} : f(\mathbf{x}) = 0} M_\mathbf{x}.
\end{equation}

\noindent Then, for any given $\mathcal{A}$ and $f$, $\varepsilon_0$ and $\varepsilon_1$ are given by:
\begin{equation}
    \varepsilon_0 = 1 - \mathrm{tr} \left( \rho_0 Q_0^{(f)} \right); \quad \varepsilon_1 = \mathrm{tr} \left( \rho_1 Q_0^{(f)} \right).
\end{equation}

\noindent Our approach will be to optimise over $\rho_0$ and $\rho_1$ separately, which we justify with the following lemma:

\begin{lemma}
    Let $\rho_i \in \mathcal{D}(\mathcal{H})$ for each $i \in [d]$. Then, there exists\footnote{This channel $\mathcal{A}$ may be more familiar as the \emph{measure-and-prepare} channel, where the input $\ket{i}$ is measured projectively in the computational basis, with the result being $i$ with certainty, and then the state $\rho_i$ can be prepared according to that classical information. With only our noisy measuring device $\mathbb{M}$, this procedure would not correctly implement the channel. However, the name is misleading, and it can be implemented without a measurement as shown in the proof, so we are still allowed to use it.} a quantum channel $\mathcal{A}$ such that $\mathcal{A}(\ketbra{i}{i}) = \rho_i$ for all $i$. \label{lem:channel-existence}
\end{lemma}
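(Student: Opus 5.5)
The plan is to write down the measure-and-prepare map explicitly, verify that it is completely positive and trace preserving, and then, following the footnote, show that it can be realised unitarily with no measurement. The candidate channel is
\begin{equation}
    \mathcal{A}(X) = \sum_{i \in [d]} \bra{i} X \ket{i} \, \rho_i ,
\end{equation}
which clearly satisfies $\mathcal{A}(\ketbra{i}{i}) = \rho_i$. It is the composition of the completely dephasing channel $X \mapsto \sum_i \ketbra{i}{i} X \ketbra{i}{i}$ with a classical-to-quantum preparation. Trace preservation follows from $\mathrm{tr}\,\rho_i = 1$, since $\mathrm{tr}\,\mathcal{A}(X) = \sum_i \bra{i}X\ket{i} = \mathrm{tr}\,X$.

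For complete positivity I will give an explicit Kraus decomposition. Diagonalise each state as $\rho_i = \sum_k \mu_{i,k} \ketbra{\phi_{i,k}}{\phi_{i,k}}$ and set
\begin{equation}
    K_{i,k} = \sqrt{\mu_{i,k}} \, \ketbra{\phi_{i,k}}{i} .
\end{equation}
Then $\sum_{i,k} K_{i,k} X K_{i,k}^\dagger = \mathcal{A}(X)$, and
\begin{equation}
    \sum_{i,k} K_{i,k}^\dagger K_{i,k} = \sum_{i,k} \mu_{i,k} \ketbra{i}{i} = \mathbb{I}_d .
\end{equation}
A Kraus form with this completeness relation is exactly a CPTP map.

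To answer the concern that this map secretly requires a measurement, I will exhibit a measurement-free implementation. Purify each $\rho_i$ as $\ket{\Phi_i} \in \mathcal{H}^{\otimes N} \otimes E$. Apply a controlled unitary $\sum_i \ketbra{i}{i} \otimes U_i$, where $U_i$ maps a fixed reference state $\ket{\mathbf{0}}$ to $\ket{\Phi_i}$; each such $U_i$ exists by completing $\ket{\Phi_i}$ to an orthonormal basis. Finally trace out the input register and $E$. On the input $\ket{i}$ this yields $\ket{i}\otimes\ket{\Phi_i}$, whose reduced state is $\rho_i$. The whole procedure is a unitary followed by a partial trace, i.e. a Stinespring dilation, so it needs only the unitaries the paper already allows.

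There is no real obstacle here. The only subtle point is the interpretive one in the footnote: the "coherent measurement" is a controlled unitary followed by discarding, not an actual use of $\mathbb{M}$. This is why I would present the Stinespring construction explicitly rather than relying only on the Kraus argument.
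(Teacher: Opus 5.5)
Your proof is correct and is essentially the paper's: the paper also writes down this same channel $X \mapsto \sum_i \bra{i}X\ket{i}\,\rho_i$ via rank-one Kraus operators $K_i = \ketbra{\Psi_i}{i}$ (using purifications $\ket{\Psi_i}\in\mathcal{H}\otimes\mathcal{H}$ and a partial trace where you use spectral decompositions), gets completeness from $\mathrm{tr}\,\rho_i = 1$, and your explicit Stinespring isometry is the same purification idea written out. One notational slip: in this lemma the states live on $\mathcal{H}$, so your purifications should lie in $\mathcal{H}\otimes E$, not $\mathcal{H}^{\otimes N}\otimes E$.
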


\begin{proof}
    See Appendix \ref{sec:proof-channel-existence}.
\end{proof}

\noindent Using Lemma \ref{lem:channel-existence} with $d = 2$, we see that we can choose $\rho_0$ and $\rho_1$ independently of each other, since we can always be guaranteed of the existence of an appropriate channel $\mathcal{A}$. This allows us achieve a minimal total error of:
\begin{align}
    \eta &= \min_\mathcal{A} \left\{  1 - \mathrm{tr} \left( \rho_0 Q_0^{(f)} \right) + \mathrm{tr} \left( \rho_1 Q_0^{(f)} \right) \right\} \nonumber \\ 
    &= 1 - \max_{\rho_0} \left\{ \mathrm{tr} \left( \rho_0 Q_0^{(f)} \right) \right\} + \min_{\rho_1} \left\{ \mathrm{tr} \left( \rho_1 Q_0^{(f)} \right) \right\} \nonumber \\
    &= 1 - \lambda_\textrm{max} \left( Q_0^{(f)} \right) + \lambda_\textrm{min} \left( Q_0^{(f)} \right) \nonumber \\
    &= 1 - \mathrm{diam} \left( Q_0^{(f)} \right),
\end{align}

\noindent and so minimising over all possible choices of $f$ gives:
\begin{equation}
    \etaNA(\mathbb{M},N) = 1 - \max_f \left\{ \mathrm{diam} \left( Q_0^{(f)} \right) \right\}, \label{eq:maximise-spectral-diameter}
\end{equation}

\noindent where $\mathrm{diam} \left( Q_0^{(f)} \right)$ is the spectral diameter \eqref{eq:spectral-diameter} of $Q_0^{(f)}$.

This gives one way to reduce the search space in finding the optimal non-adaptive circuit. Conversely, suppose $\mathcal{A}$ is fixed, so the states $\rho_i$ are fixed. Then:
\begin{align}
    \eta &= \min_f \left\{ \mathrm{tr} \left( \rho_0 \sum_{\mathbf{x} : f(\mathbf{x}) = 1} M_\mathbf{x} \right) + \mathrm{tr} \left( \rho_1 \sum_{\mathbf{x} : f(\mathbf{x}) = 0} M_\mathbf{x} \right) \right\} \nonumber \\
    &= \min_f \left\{ \sum_{\mathbf{x} \in [m]^N} \sum_{i \in \{0, 1\}} [f(\mathbf{x}) \neq i] \ \mathrm{tr}(\rho_i M_\mathbf{x}) \right\} \nonumber \\
    &= \sum_{\mathbf{x} \in [m]^N} \min_{f(\mathbf{x}) \in \{0, 1\}} \left\{ \sum_{i \in \{0, 1\}} [f(\mathbf{x}) \neq i] \ \mathrm{tr}(\rho_i M_\mathbf{x}) \right\} \nonumber \\
    &= \sum_{\mathbf{x} \in [m]^N} \min_{i \in \{0, 1\}} \left\{ \mathrm{tr}(\rho_i M_\mathbf{x}) \right\},
\end{align}

\noindent with $f$ chosen according to maximum-likelihood decoding, as we observed before in Section \ref{sec:example}. Here, $[P]$ is the Iverson bracket, equal to 0 if $P$ is false and 1 if $P$ is true.

From each fixed state $\rho_i$, we obtain a stochastic vector $\mathbf{q}_i \in \mathcal{R}(\mathbb{M}^{\otimes N})$, with elements $q_i^{(\mathbf{x})} = \mathrm{tr}(\rho_i M_\mathbf{x})$. Then, we can write the minimal non-adaptive total error as:
\begin{equation}
    \etaNA(\mathbb{M},N) = \min_{\mathbf{q}_0, \mathbf{q}_1 \in \mathcal{R}(\mathbb{M}^{\otimes N})} \left\{ \sum_{\mathbf{x} \in [m]^N} \min_{i \in \{0, 1\}} \left\{ q_i^{(\mathbf{x})} \right\} \right\}. \label{eq:total-error-preprocessing}
\end{equation}

\noindent We thus have the two equations \eqref{eq:maximise-spectral-diameter} and \eqref{eq:total-error-preprocessing} that we can use to find $\etaNA$. In general, \eqref{eq:maximise-spectral-diameter} is the easier approach, since the number of possible post-processing functions $f$ is $2^{m^N}$, i.e. finite, whereas the number of possible channels $\mathcal{A}$ is infinite, with no particular structure to the optimisation that we can find where continuity is helpful. The difficulty of using \eqref{eq:total-error-preprocessing} depends heavily on the complexity of the set $\mathcal{R}(\mathbb{M}^{\otimes N})$. The probability range can be difficult to compute anyway, and it isn't made any easier by an $N$-fold tensor product of POVMs. However, there is a sufficient condition that simplifies it hugely, in which case the approach of \eqref{eq:total-error-preprocessing} becomes far superior.

We call the POVM $\mathbb{M} = \{ M_x \}_{x \in [m]}$ \emph{abelian} if all of its elements commute, or equivalently if they are all diagonal when expressed in some basis $\{ \ket{v_a} \}_ {a \in [h]}$. The probability range of an abelian POVM already has a simple form \cite[Proposition 7.4]{buscemi-2005}:
\begin{equation}
    \mathcal{R}(\mathbb{M}) = \mathrm{conv}(\{\mathbf{q}_a\}_{a \in [h]}),
\end{equation}

\noindent where for each $a \in [h]$, $\mathbf{q}_a \in \mathcal{R}(\mathbb{M})$ is obtained from measuring $\ket{v_a}$ with $\mathbb{M}$, i.e. $q_a^{(x)} = \bra{v_a} M_x \ket{v_a}$.

This continues with the $N$ parallel uses of $\mathbb{M}$, since $\mathbb{M}^{\otimes N}$ remains an abelian POVM, with diagonalising basis $\{\ket{v_\mathbf{a}} = \ket{v_{a_1}} \ket{v_{a_2}} ... \ket{v_{a_N}} : \mathbf{a} \in [h]^N\}$. We have that:
\begin{equation}
    \mathcal{R}(\mathbb{M}^{\otimes N}) = \mathrm{conv}( \{ \mathbf{q}_\mathbf{a} \}_{\mathbf{a} \in [h]^N}),
\end{equation}

\noindent where $\mathbf{q}_\mathbf{a}$ is obtained from measuring $\ket{v_\mathbf{a}}$ with $\mathbb{M}^{\otimes N}$.

Then, by linearity, only the extremal points of the convex set $\mathcal{R}(\mathbb{M}^{\otimes N})$ need to be considered when optimising total error. We infer that for an abelian POVM $\mathbb{M}$, \eqref{eq:total-error-preprocessing} simplifies to:
\begin{equation}
    \etaNA(\mathbb{M},N) = \min_{\mathbf{a}_0, \mathbf{a}_1 \in [h]^N} \left\{ \sum_{\mathbf{x} \in [m]^N} \min_{i \in \{0, 1\}} \left\{ q_{\mathbf{a}_i}^{(\mathbf{x})} \right\} \right\}, \label{eq:abelian-non-adaptive-total-error}
\end{equation}

\noindent which is again a finite optimisation, this time over $h^{2N}$ possibilities.

\begin{figure}[t]
    \centering
    \includegraphics[scale=1.25]{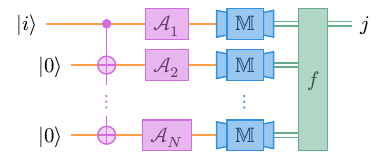}
    \caption{The simplified optimal non-adaptive circuit for an abelian POVM $\mathbb{M}$. First, the basis state $\ket{i}$ is copied across all qubits using CNOT gates, followed by local channels $\mathcal{A}_k: \ketbra{i}{i} \mapsto \ketbra{v_{a_{ik}}}{v_{a_{ik}}}$, where $\mathbf{a}_i = (a_{ik})_{k = 1}^N$. In the $d = 2$ case, the channels $\mathcal{A}_k$ can be assumed to be isometries without loss of generality.}
    \label{fig:non-adaptive-abelian}
\end{figure}

\begin{figure*}[t]
    \centering
    \includegraphics[scale=1.25]{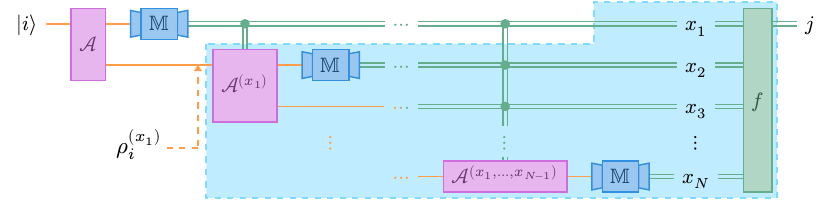}
    \caption{The general adaptive circuit we can use to improve the measurement $\mathbb{M}$, where we have $N$ uses of $\mathbb{M}$ in total. Adaptivity means that the channels $\mathcal{A}, \mathcal{A}^{(x_1)}, ...$ can depend on prior measurement results, as notated in the superscripts. The state of the second qubit after the first is measured is $\rho_{i}^{(x_1)}$. The portion of the circuit highlighted in blue takes a very similar form to the original circuit, except for the input state being different, the dependence on $x_1$, and the number of measurements being one fewer. We exploit this similarity to simplify the general circuit.}
    \label{fig:adaptive}
\end{figure*}

This also simplifies the circuit, since the measured states are always product states, allowing them to be generated using only CNOT gates and local single qubit channels, as shown in Figure \ref{fig:non-adaptive-abelian}. Indeed, because the local channels $\mathcal{A}_k$ are defined only by their action on two orthogonal inputs, we can assume they are isometries, by the following argument on the strings defining the measured states $\mathbf{a}_i = (a_{ik})_{k = 1}^N$:
\begin{itemize}
    \item If $a_{0k} \neq a_{1k}$, then $\mathcal{A}_k$ sends an orthonormal basis to orthonormal states, so is an isometry;
    \item Otherwise, $a_{0k} = a_{1k}$, in which case $\mathcal{A}_k$ is a trivial channel that sends all states to the same place. This means that the measurement result of the $k$th qubit has no correlation to the input and is ignored by maximum likelihood decoding. Therefore we can replace $\mathcal{A}_k$ with an isometry and it won't make any difference.
\end{itemize}

\noindent There are $h(h-1)$ possible isometries mapping the computational basis of $\mathbb{C}^2$ to the diagonalising basis of $\mathbb{M}$ in $\mathcal{H}$. Moreover, any permutation of the $N$ qubit lines in Figure \ref{fig:non-adaptive-abelian} will not affect the total error, provided that the same permutation is made within $f$, so only the number of each isometry among the set of channels $\mathcal{A}_k$ is important. Hence, by the stars and bars argument, the number of circuits that need to be considered is $\binom{N + h(h-1) - 1}{h(h-1) - 1}$, i.e. polynomial in $N$. Thus, for abelian POVMs, we can find the optimal non-adaptive protocol much more efficiently than in general.

\subsection{Optimal adaptive circuits} \label{sec:optimal-adaptive}

\noindent We now turn our attention to adaptive circuits, again with $N$ uses of our measuring device $\mathbb{M}$. The most general\footnote{All of the measurements we are using here are destructive, but there is no loss in generality here, due to the figure of merit allowing us to only consider computational basis input states $\ket{\psi} = \ket{i}$. We can replicate $\ket{i}$ as much as we like at the start using CNOT gates, and we also have all previous measurement results. Then, by Lemma \ref{lem:channel-existence}, getting to a would-be post-measurement state only requires applying a channel dependent on these measurement results to one of the copies of $\ket{i}$.} such circuit is shown in Figure \ref{fig:adaptive}. Before each measurement, the quantum state is split by a channel across two systems, one of which will be measured, and one of which carries on to the next step. Exactly what channel is performed may change depending on previous measurement results, because of adaptivity. This repeats for $N$ measurements, and then the measurement results are classically processed as before to give our output.

\begin{figure*}[t]
    \centering
    \includegraphics[scale=1.25]{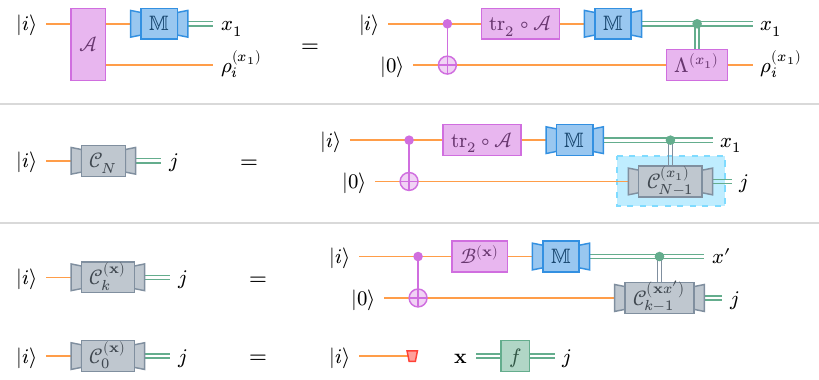}
    \caption{\textbf{(a)} Two circuits with the same action on $\ket{i}$. On the left is a snippet of Figure \ref{fig:adaptive}, up to and including the first measurement. On the right is an equivalent circuit where the state $\ket{i}$ is preserved in the lower register, until right at the end where it is processed into $\rho_i^{(x_1)}$, through the channel $\Lambda^{(x_1)}$ as defined in the main text. \textbf{(b)} We let $\mathcal{C}_N$ be the full circuit shown in Figure \ref{fig:adaptive}, with an input of $\ket{i} \in \{ \ket{0}, \ket{1}, ..., \ket{d-1} \}$, and an output of $j \in [d]$. By the transformation in (a), we can write an equivalent circuit in terms of a CNOT gate, a single qubit channel, one use of $\mathbb{M}$ with result $x_1$, and a circuit $\mathcal{C}_{N-1}^{(x_1)}$ constructed as the composition of $\Lambda^{(x_1)}$ with the blue highlighted part in Figure \ref{fig:adaptive}. \textbf{(c)} The recursive form of the general adaptive circuit, obtained from repeating the same transformation as shown in (b). When there are $k$ measurements left to make, each circuit $\mathcal{C}_k^{(\mathbf{x})}$ (dependent on previous measurement results $\mathbf{x} = (x_1, x_2, ..., x_{N-k})$) can be written in terms of a CNOT gate, a single qubit channel $\mathcal{B}^{(\mathbf{x})}$, one use of $\mathbb{M}$ with output $x' = x_{N-k+1}$, and circuits $\mathcal{C}_{k-1}^{(\mathbf{x} x')}$ with $k-1$ measurements. When $k = 0$, there are no measurements left to make, so the output is solely dependent on the measurement results $\mathbf{x} = (x_1, x_2, ..., x_N)$, via classical processing $f$. In Section \ref{sec:qubit-measurements} we take $d = 2$.}
    \label{fig:adaptive-recursive}
\end{figure*}

Due to our figure of merit, we may assume our input state is either $\ket{0}$ or $\ket{1}$, in which case there are some significant simplifications we can make to the general adaptive circuit. Using Lemma \ref{lem:channel-existence}, we define a channel $\Lambda^{(x_1)} : \mathcal{B}(\mathbb{C}^2) \to \mathcal{B}(\mathcal{H})$ such that $\Lambda^{(x_1)}(\ketbra{i}{i}) = \rho_i^{(x_1)}$, where $\rho_i^{(x_1)}$ is the state of the remainder of the system after the first measurement is made as depicted in Figure \ref{fig:adaptive}. Combining this with a CNOT gate, which copies the input state $\ket{i}$, we find an equivalent form for the start of our general circuit as shown in Figure \ref{fig:adaptive-recursive}(a). This form is preferable as we now only need to perform arbitrary channels on one qubit at a time, which is both practically helpful and reduces the search space for optimal circuits.

Now, let the entire $N$ measurement circuit in Figure \ref{fig:adaptive} be called $\mathcal{C}_N$, and consider the blue highlighted part. Dependent on $x_1$, it takes an input of $\rho_i^{(x_1)}$, uses $N-1$ measurements of $\mathbb{M}$, and outputs $j$. Composing $\Lambda^{(x_1)}$ with it forms an adaptive circuit $\mathcal{C}_{N-1}^{(x_1)}$, again dependent on $x_1$, but now taking an input of $\ket{i}$ to an output of $j$ just like $\mathcal{C}_N$ does. We can thus rewrite the circuit $\mathcal{C}_N$ using the simplified start from Figure \ref{fig:adaptive-recursive}(a), and the $x_1$ dependent circuit $\mathcal{C}_{N-1}^{(x_1)}$, as shown in Figure \ref{fig:adaptive-recursive}(b).

This is exactly the same as our overall circuit, except we've already used $\mathbb{M}$ once, so there are only $N-1$ uses left. Therefore we can repeat this process, and thus we can write our circuit in the recursive form shown in Figure \ref{fig:adaptive-recursive}(c). Supposing that we have $k$ measurements left to make, then the remaining circuit depends on the previous measurement results $\mathbf{x} = (x_1, x_2, ..., x_{N-k})$, so we call it $\mathcal{C}_k^{(\mathbf{x})} = \mathcal{C}_k^{(x_1, x_2, ..., x_{N-k})}$. Each such circuit can be expressed in terms of circuits with $k-1$ measurements $\mathcal{C}_k^{(\mathbf{x},x_{N-k+1})}$, with a dependence on the next measurement result $x_{N-k+1}$. The same process repeats until there are no measurements left, leaving us with a quantum state input as well as a classical input of $N$ previous measurement outcomes. Without a measurement, we can't do anything useful with the quantum state, so all we can do is choose an output $j = f(\mathbf{x})$ dependent on the previous measurement results.

This simplification doesn't only make implementation of the circuit simpler, but it also gives us a way to study the error rates. We let $\varepsilon_0^{(\mathbf{x})}, \varepsilon_1^{(\mathbf{x})}$ be the equivalent of our original errors $\varepsilon_0, \varepsilon_1$ for the circuit $\mathcal{C}_k^{(\mathbf{x})}$, with $k$ implicit in the notation from the $N-k$ previous measurement results in the vector $\mathbf{x} \in [m]^{N-k}$. Then, for $k = 0$:
\begin{equation}
    \left(\varepsilon_0^{(\mathbf{x})}, \varepsilon_1^{(\mathbf{x})}\right) = \begin{cases} (0,1) & \text{if } f(\mathbf{x}) = 0; \\ (1,0) & \text{if } f(\mathbf{x}) = 1, \end{cases}
\end{equation}

\noindent while for $k \geqslant 1$, we sum over the measurement result $x' = x_{N-k+1}$ to obtain the recursive relation:
\begin{equation}
    \varepsilon_i^{(\mathbf{x})} = \sum_{x' \in [m]} \mathrm{tr} \left( \mathcal{B}^{(\mathbf{x})}(\ketbra{i}{i}) M_{x'} \right) \varepsilon_i^{(\mathbf{x} x')} \quad \forall \mathbf{x} \in [m]^{N-k}. \label{eq:recursive-epsilons}
\end{equation}

\noindent Here, $\mathbf{x} x'$ represents the appending of $x'$ onto the end of the string of previous measurement results $\mathbf{x}$.

If we let $q_i^{(\mathbf{x} x')} = \mathrm{tr} \left( \mathcal{B}^{(\mathbf{x})}(\ketbra{i}{i}) M_{x'} \right)$, then $\mathbf{q}_i^{(\mathbf{x})} = (q_i^{(\mathbf{x} x')})_{x' \in [m]} \in \mathcal{R}(\mathbb{M})$, where we recall that $\mathcal{R}(\mathbb{M})$ is the probability range of $\mathbb{M}$ \eqref{eq:probability-range}. Therefore, we can calculate the possible errors $(\varepsilon_0^{(\mathbf{x})}, \varepsilon_1^{(\mathbf{x})})$ achievable with $k$ measurements from the possible errors achievable with $k-1$ measurements $( \varepsilon_0^{(\mathbf{x} x')}, \varepsilon_1^{(\mathbf{x} x')})$, in combination with $\mathcal{R}(\mathbb{M})$.

In particular, we define the \emph{single-circuit achievable error sets} recursively by:
\begin{align}
    E(\mathbb{M}, 0) &= \{(0,1), (1,0)\} \label{eq:achievable-errors-0}; \\ 
    E(\mathbb{M}, N) &= \Bigg\{ \left( \sum_{x \in [m]} \varepsilon_0^{(x)} q_0^{(x)}, \sum_{x \in [m]} \varepsilon_1^{(x)} q_1^{(x)} \right) : \nonumber \\[2ex]
    &\hspace{3em} \left( \varepsilon_0^{(x)}, \varepsilon_1^{(x)} \right) \in E(\mathbb{M}, N-1) \ \forall x \in [m], \nonumber \\
    &\hspace{4em} \mathbf{q}_i = \left( q_i^{(x)} \right)_{x \in [m]} \in \mathcal{R}(\mathbb{M}) \, \forall i \in \{0,1\} \Bigg\}. \label{eq:achievable-errors}
\end{align}

\noindent Allowing for mixed protocols (where two or more possible measurement circuits are chosen between probabilistically) by taking the convex hull \eqref{eq:convex-hull} of $E(\mathbb{M}, N)$ gives us the entire set of achievable errors $(\varepsilon_0, \varepsilon_1)$:

\begin{proposition}
    A pair of errors $(\varepsilon_0, \varepsilon_1)$ can be achieved using an adaptive circuit with $N$ measurements of $\mathbb{M}$ if and only if $(\varepsilon_0, \varepsilon_1) \in \bar{E}(\mathbb{M}, N)$, where:
    \begin{equation}
        \bar{E}(\mathbb{M}, N) = \mathrm{conv}(E(\mathbb{M}, N)).
    \end{equation} \label{prop:achievable-set-is-achievable}
\end{proposition}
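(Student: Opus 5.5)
We prove the two directions separately, both by induction on $N$. Throughout, ``achievable'' means achievable by an adaptive circuit in the general form of Figure \ref{fig:adaptive}, possibly mixed with classical randomness at the start (and with randomised post-processing, which is itself such a mixture). For $N=0$ there are no measurements, so the only freedom is the choice of output $j$, possibly at random. Deterministic choices give exactly $E(\mathbb{M},0)=\{(0,1),(1,0)\}$, and randomised ones give its convex hull, so both directions hold in the base case.

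\emph{Achievability ($\Leftarrow$).} We first show that every point of $E(\mathbb{M},N)$ is achieved by a deterministic circuit. Suppose this holds for $N-1$, and take a point of $E(\mathbb{M},N)$, given by $\mathbf{q}_0,\mathbf{q}_1\in\mathcal{R}(\mathbb{M})$ and pairs $(\varepsilon_0^{(x)},\varepsilon_1^{(x)})\in E(\mathbb{M},N-1)$ for each $x\in[m]$. By definition of the probability range, there are states $\sigma_0,\sigma_1\in\mathcal{D}(\mathcal{H})$ with $q_i^{(x)}=\mathrm{tr}(\sigma_i M_x)$. Lemma \ref{lem:channel-existence} then gives a channel $\mathcal{B}$ with $\mathcal{B}(\ketbra{i}{i})=\sigma_i$. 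We build the circuit of Figure \ref{fig:adaptive-recursive}(b): copy $\ket{i}$ with a CNOT, apply $\mathcal{B}$ to one copy, and measure it with $\mathbb{M}$. Conditioned on the outcome $x$, we run on the remaining copy of $\ket{i}$ the circuit $\mathcal{C}_{N-1}^{(x)}$ given by the inductive hypothesis. The recursion \eqref{eq:recursive-epsilons} shows that the resulting error pair is the prescribed one. Now take a general point $\sum_k t_k e_k$ of $\bar{E}(\mathbb{M},N)$ with each $e_k\in E(\mathbb{M},N)$. We achieve it by choosing circuit $k$ with probability $t_k$ at the start, since each $\varepsilon_i$ is linear in the mixing weights.

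\emph{Necessity ($\Rightarrow$).} By linearity in the mixing weights, any mixed protocol achieves a convex combination of the error pairs of its deterministic components. It therefore suffices to show that every deterministic adaptive circuit with $N$ uses has its error pair in $E(\mathbb{M},N)$; this is the step that needs the most care. The key input is the reduction in the text: any general adaptive circuit (Figure \ref{fig:adaptive}) can be rewritten in the recursive form of Figure \ref{fig:adaptive-recursive}(c) with the same action on $\ket{0}$ and $\ket{1}$. The rewriting takes the states $\rho_i^{(x_1)}$ and uses Lemma \ref{lem:channel-existence} to define $\Lambda^{(x_1)}$. So we may assume the circuit begins with a single measurement of $\mathcal{B}(\ketbra{i}{i})$ and then runs a deterministic adaptive circuit $\mathcal{C}_{N-1}^{(x)}$ on $\ket{i}$ with $N-1$ uses. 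The vectors $q_i^{(x)}=\mathrm{tr}(\mathcal{B}(\ketbra{i}{i})M_x)$ lie in $\mathcal{R}(\mathbb{M})$, and by induction each $(\varepsilon_0^{(x)},\varepsilon_1^{(x)})\in E(\mathbb{M},N-1)$. Equation \eqref{eq:recursive-epsilons} then places the overall error pair in $E(\mathbb{M},N)$.

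One subtlety must be checked here. In the general circuit, ancillas carried forward could be entangled with the measured system, and the post-measurement state conditioned on $x_1$ could depend on the input beyond the label $i$. We must confirm that the conditional states $\rho_i^{(x_1)}$ together with the first-outcome probabilities capture everything relevant. This holds because each input is a fixed basis state $\ket{i}$, and the subsequent error probabilities depend only on the conditional state given $(i,x_1)$. Lemma \ref{lem:channel-existence} reproduces this conditional state from a fresh copy of $\ket{i}$, which is exactly the argument in the footnote on destructive measurements.

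Finally, randomness occurring mid-circuit, rather than only at the start, can be pushed to the beginning by pre-sampling all the random choices. Each fixed sample gives a deterministic circuit, so such protocols are also convex mixtures of deterministic circuits and the convex-hull argument covers them. This completes both directions.
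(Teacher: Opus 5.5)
Your proof is correct, but it uses a different induction hypothesis from the paper's.

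\textbf{The paper's route.} The paper inducts on the proposition itself. For a single circuit, the hypothesis only tells it that each sub-circuit's pair $(\varepsilon_0^{(x)},\varepsilon_1^{(x)})$ lies in $\bar{E}(\mathbb{M},N-1)$. It therefore has to write each such pair as a convex combination of points of $E(\mathbb{M},N-1)$. It then recombines them over the product index set $I=\prod_x I_x$, with weights $t_{\mathbf{r}}=\prod_x t^{(x)}_{r_x}$ (the computation \eqref{eq:convex-errors}), to land in $\mathrm{conv}(E(\mathbb{M},N))$.

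\textbf{Your route.} You prove the stronger statement that every deterministic circuit with $N$ uses has its error pair in $E(\mathbb{M},N)$ itself. The sub-circuits $\mathcal{C}_{N-1}^{(x)}$ of a deterministic circuit are again deterministic, so the recursion lands directly in $E(\mathbb{M},N)$. The recombination step disappears, and all classical randomness is handled once, at the end, by pre-sampling.

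\textbf{What each buys.} The paper's product-weight computation is essentially the algebraic form of your pre-sampling argument. Its advantage is that the induction automatically tolerates sub-circuits that are themselves mixed, with no separate reduction. Yours gives a cleaner induction and the slightly sharper fact that single circuits realise exactly the points of $E(\mathbb{M},N)$. Your achievability direction is also more explicit than the paper's one-line sketch. In particular, you invoke Lemma \ref{lem:channel-existence} to realise arbitrary $\mathbf{q}_0,\mathbf{q}_1\in\mathcal{R}(\mathbb{M})$ independently through a single channel $\mathcal{B}$ acting on a CNOT-copied input.
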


\begin{proof}
    See Appendix \ref{sec:proof-achievable-set-is-achievable}.
\end{proof}

\begin{figure}[t]
    \centering
    \includegraphics[scale=1.25]{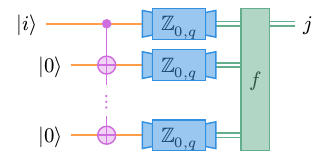}
    \caption{A non-adaptive circuit using the semi-perfect $Z$ measurement $\mathbb{Z}_{0,q}$. The CNOT chain at the start represents $N-1$ CNOT gates controlled on the first qubit acting on each of the others, which has the effect $\ket{i}\ket{0}^{\otimes (N-1)} \mapsto \ket{i}^{\otimes N}$. The function $f$ acts as an OR gate, sending $00...0 \mapsto 0$, and all other strings of measurement outcomes to $1$. We find that not only is this the optimal non-adaptive circuit, but it can't be beaten by an adaptive circuit either.}
    \label{fig:non-adaptive-optimal-noisy-Z}
\end{figure}

\noindent We hence call $\bar{E}(\mathbb{M}, N)$ the \emph{achievable error set} for $N$ adaptive uses of $\mathbb{M}$. Then, we define the quantity $\etaA(\mathbb{M}, N)$ to be the minimum total error achievable with an adaptive circuit, which is:
\begin{equation}
    \etaA(\mathbb{M}, N) = \min_{(\varepsilon_0, \varepsilon_1) \in E(\mathbb{M}, N)} \{ \varepsilon_0 + \varepsilon_1 \}, \label{eq:eta-a}
\end{equation}

\noindent where there is no need to optimise over the entire convex hull, since as before, mixed protocols cannot have lesser total error than non-mixed.

We can relate the non-adaptive and adaptive minimal total errors in various ways. If $N = 1$, we don't have the chance to do anything adaptive, so the non-adaptive and adaptive optima are the same. This will be a useful quantity to consider, so we shorten the notation:
\begin{equation}
    \eta(\mathbb{M}) = \etaNA(\mathbb{M}, 1) = \etaA(\mathbb{M}, 1).
\end{equation}

\noindent If $N > 1$, we don't expect these errors to be necessarily the same, but we know that the adaptive error can only improve on the non-adaptive error, as all non-adaptive protocols can also be considered as adaptive protocols where adaptation happens not to be used. Also, we have noted that any non-adaptive circuit is equivalent to a single measurement of $\mathbb{M}^{\otimes N}$, so:
\begin{equation}
    \etaA(\mathbb{M}, N) \leqslant \etaNA(\mathbb{M}, N) = \eta(\mathbb{M}^{\otimes N}).
\end{equation}

\noindent We can also find a lower bound for the minimal adaptive total error:

\begin{proposition}
    For any $\mathbb{M}$ and $N \geqslant 0$:
    \begin{equation}
        \etaA(\mathbb{M},N) \geqslant \eta(\mathbb{M})^N. \label{eq:adaptive-lower-bound}
    \end{equation} \label{prop:adaptive-lower-bound}
\end{proposition}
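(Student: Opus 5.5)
We will prove \eqref{eq:adaptive-lower-bound} by induction on $N$, using the recursive description of adaptive circuits. The key is a statement slightly stronger than the total-error bound, one that survives the recursion. We will show that for every $(\varepsilon_0,\varepsilon_1)\in E(\mathbb{M},N)$ we have $\varepsilon_0+\varepsilon_1 \geqslant \eta(\mathbb{M})^N$. Minimising over $E(\mathbb{M},N)$ as in \eqref{eq:eta-a} then gives the proposition.

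The base case $N=0$ is immediate: both points $(0,1)$ and $(1,0)$ of $E(\mathbb{M},0)$ have total error $1=\eta(\mathbb{M})^0$.

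For the inductive step, take a point of $E(\mathbb{M},N)$. By \eqref{eq:achievable-errors} it has the form
\[
\Bigl(\sum_x q_0^{(x)}\varepsilon_0^{(x)}, \sum_x q_1^{(x)}\varepsilon_1^{(x)}\Bigr),
\]
where $\mathbf{q}_0,\mathbf{q}_1\in\mathcal{R}(\mathbb{M})$ and each pair $(\varepsilon_0^{(x)},\varepsilon_1^{(x)})$ lies in $E(\mathbb{M},N-1)$. The induction hypothesis gives $\varepsilon_0^{(x)}+\varepsilon_1^{(x)}\geqslant \eta(\mathbb{M})^{N-1}=:c$ for every $x$. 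Then
\[
\sum_x \bigl(q_0^{(x)}\varepsilon_0^{(x)}+q_1^{(x)}\varepsilon_1^{(x)}\bigr)\geqslant \sum_x \min\{q_0^{(x)},q_1^{(x)}\}\bigl(\varepsilon_0^{(x)}+\varepsilon_1^{(x)}\bigr)\geqslant c\sum_x \min\{q_0^{(x)},q_1^{(x)}\}.
\]
The first inequality holds because all errors are nonnegative.

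It remains to show $\sum_x \min_i q_i^{(x)}\geqslant \eta(\mathbb{M})$ whenever $\mathbf{q}_0,\mathbf{q}_1\in\mathcal{R}(\mathbb{M})$. This is the $N=1$ case of \eqref{eq:total-error-preprocessing}. There $\etaNA(\mathbb{M},1)=\eta(\mathbb{M})$ is exactly the minimum of $\sum_x\min_i q_i^{(x)}$ over pairs in the probability range, realised by choosing $\rho_0,\rho_1$ via Lemma \ref{lem:channel-existence} together with maximum-likelihood decoding. Combining the two inequalities gives total error at least $\eta(\mathbb{M})^{N-1}\eta(\mathbb{M})=\eta(\mathbb{M})^N$, which closes the induction.

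The main subtlety is choosing the right inductive invariant. Induction directly on $\etaA$ fails, since the sub-circuits may have unbalanced errors. Bounding each pair $(\varepsilon_0^{(x)},\varepsilon_1^{(x)})$ by its sum, and then pairing each with the smaller probability $\min_i q_i^{(x)}$, is what makes the step work. I also need to check that \eqref{eq:total-error-preprocessing} genuinely covers arbitrary pairs in $\mathcal{R}(\mathbb{M})$, which holds by Lemma \ref{lem:channel-existence}.
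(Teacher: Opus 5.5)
Your proof is correct. Its skeleton matches the paper's: induct over $E(\mathbb{M},N)$ using the recursion \eqref{eq:achievable-errors}, and bound the total error by (level-$(N-1)$ bound) $\times$ (single-use bound). The difference is where you split the sum $\sum_x \bigl(q_0^{(x)}\varepsilon_0^{(x)}+q_1^{(x)}\varepsilon_1^{(x)}\bigr)$.

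The paper pulls out $\min_x\{\varepsilon_0^{(x)}+\varepsilon_1^{(x)}\}\geqslant\etaA(\mathbb{M},N-1)$. It then reads the normalised pairs $\varepsilon_i^{(x)}/(\varepsilon_0^{(x)}+\varepsilon_1^{(x)})$ as a randomised final guess, i.e.\ a point of $\bar{E}(\mathbb{M},0)$. So the leftover factor is the total error of a point of $\bar{E}(\mathbb{M},1)$, which is at least $\eta(\mathbb{M})$.

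You instead pull out $\min_i q_i^{(x)}$ term by term and finish with the $N=1$ case of \eqref{eq:total-error-preprocessing}, i.e.\ maximum-likelihood decoding. Your version is slightly more elementary. It never divides by $\varepsilon_0^{(x)}+\varepsilon_1^{(x)}$, whose nonvanishing the paper leaves implicit. It also does not need the convex-hull machinery behind Proposition \ref{prop:achievable-set-is-achievable} to place the randomised-decoding point in $\bar{E}(\mathbb{M},1)$. The paper's version has the more operational reading ``worst sub-circuit times a single-use strategy''.

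One correction to your commentary: your inductive invariant is not stronger than the statement. Since $\etaA(\mathbb{M},N)$ is by definition the minimum of $\varepsilon_0+\varepsilon_1$ over $E(\mathbb{M},N)$, saying every point of $E(\mathbb{M},N)$ has total error at least $\eta(\mathbb{M})^N$ is equivalent to $\etaA(\mathbb{M},N)\geqslant\eta(\mathbb{M})^N$. Replacing $c$ by $\etaA(\mathbb{M},N-1)$ in your own argument gives $\etaA(\mathbb{M},N)\geqslant\etaA(\mathbb{M},N-1)\,\eta(\mathbb{M})$, which is exactly the paper's induction on $\etaA$. Unbalanced sub-circuit errors cause no difficulty, because only their sums enter the bound.
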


\begin{proof}
    See Appendix \ref{sec:proof-adaptive-lower-bound}.
\end{proof}

\noindent In particular, this means that if one use of $\mathbb{M}$ cannot reproduce a $Z$ measurement ($\eta(\mathbb{M}) > 0$), then neither can $N$ uses for arbitrarily large $N$. We may interpret this as a dual result to the established fact that non-orthogonal states cannot be perfectly distinguished with any finite number of measurements, only approaching zero error asymptotically.

Another way to interpret the achievable error sets is as the set of parameters of imperfect $Z$ measurements that we can reproduce with $N$ uses of $\mathbb{M}$. To see this, first we prove the following lemma:

\begin{lemma}
    Let $\mathcal{C}_N$ be any circuit with error rates $(\varepsilon_0, \varepsilon_1)$. Then, the action of the dephasing channel followed by $\mathcal{C}_N$ is the same as the action of measuring with the imperfect $Z$ measurement $\mathbb{Z}_{\varepsilon_0, \varepsilon_1}$. \label{lem:reproduction-of-noisy-Z}
\end{lemma}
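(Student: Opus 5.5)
The plan is to compare the two processes through their output statistics, i.e.\ their action as quantum-to-classical channels. Let $\Delta(\rho) = \sum_{i \in \{0,1\}} \bra{i}\rho\ket{i} \, \ketbra{i}{i}$ be the dephasing channel. Let $\mathcal{C}_N$ be the adaptive circuit, with output $j \in \{0,1\}$. Viewing it as a measuring device, it defines a two-outcome POVM $\{C_0, C_1\}$ on the input qubit, with $\mathbb{P}(j \mid \rho) = \mathrm{tr}(\rho\, C_j)$. This holds because $\mathcal{C}_N$ is a composition of channels, measurements, classically controlled channels and classical post-processing, so the output probability is linear in the input state and defines a POVM.

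Composing with $\Delta$ first gives the effective POVM elements $\Delta^\dagger(C_j) = \sum_i \bra{i}C_j\ket{i}\,\ketbra{i}{i}$, since $\Delta$ is self-adjoint. By definition of the error rates \eqref{eq:errors-2}, $\bra{0}C_1\ket{0} = \varepsilon_0$, $\bra{1}C_0\ket{1} = \varepsilon_1$, and by normalisation $\bra{0}C_0\ket{0} = 1-\varepsilon_0$, $\bra{1}C_1\ket{1} = 1-\varepsilon_1$. The dephased POVM therefore has elements
\begin{equation}
(1-\varepsilon_0)\ketbra{0}{0} + \varepsilon_1\ketbra{1}{1}, \qquad \varepsilon_0\ketbra{0}{0} + (1-\varepsilon_1)\ketbra{1}{1},
\end{equation}
which are exactly $Z_0(\varepsilon_0,\varepsilon_1)$ and $Z_1(\varepsilon_0,\varepsilon_1)$. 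So for every input state $\rho$, the output distributions of the two processes agree.

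The only delicate point is justifying that the whole circuit, with adaptivity and intermediate classical data, really reduces to a single POVM $\{C_0,C_1\}$. I would argue this directly: each branch probability along the recursion of Figure~\ref{fig:adaptive-recursive} is linear in the input, and the same holds for general inputs in Figure~\ref{fig:adaptive}. Alternatively, one can note that after dephasing the input is a mixture $\sum_i \bra{i}\rho\ket{i}\,\ketbra{i}{i}$, so by linearity the output distribution is $\sum_i \bra{i}\rho\ket{i}\,\mathbb{P}(j \mid \ket{i})$. That route uses only the definitions of $\varepsilon_0, \varepsilon_1$ and avoids any general POVM claim.
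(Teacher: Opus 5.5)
Your proof is correct and is essentially the paper's argument: the paper writes the dephased input as $\rho_{00}\ketbra{0}{0} + \rho_{11}\ketbra{1}{1}$ and uses linearity together with the definitions of $\varepsilon_0,\varepsilon_1$ to read off the output distribution, which is exactly your fallback route. Your main route via $\Delta^\dagger(C_j)$ is the same computation in the Heisenberg picture, and both versions rest on the same linearity of the circuit's output statistics in its input.
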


\begin{proof}
    See Appendix \ref{sec:proof-reproduction-of-noisy-Z}.
\end{proof}

\noindent Then, we can combine both Proposition \ref{prop:achievable-set-is-achievable} and Lemma \ref{lem:reproduction-of-noisy-Z} to give:

\begin{corollary}
    $(\varepsilon_0, \varepsilon_1) \in \bar{E}(\mathbb{M},N)$ if and only if $N$ adaptive uses of $\mathbb{M}$ can reproduce the imperfect $Z$ measurement $\mathbb{Z}_{\varepsilon_0, \varepsilon_1}$. \label{cor:achievable-sets-as-imperfect-Zs}
\end{corollary}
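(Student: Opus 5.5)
The corollary follows by combining Proposition \ref{prop:achievable-set-is-achievable} and Lemma \ref{lem:reproduction-of-noisy-Z}. Two things need pinning down: what ``reproduce'' means, and how mixtures of circuits behave. I will take ``$N$ adaptive uses of $\mathbb{M}$ can reproduce $\mathbb{Z}_{\varepsilon_0,\varepsilon_1}$'' to mean that there is a (possibly mixed) adaptive protocol using $N$ copies of $\mathbb{M}$, taking an arbitrary qubit state $\rho$ to a classical bit $j$, such that $\mathbb{P}(j \mid \rho) = \mathrm{tr}(\rho Z_j(\varepsilon_0,\varepsilon_1))$ for all $\rho$ and $j$. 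Any free pre-processing channel may be absorbed into the protocol, since channels are unrestricted.

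For the forward direction, suppose $(\varepsilon_0,\varepsilon_1) \in \bar{E}(\mathbb{M},N)$. By Proposition \ref{prop:achievable-set-is-achievable}, some adaptive protocol achieves these error rates; it is in general a mixture of circuits $\mathcal{C}_N^{(k)}$ with weights $\alpha_k$ and errors $(\varepsilon_0^{(k)},\varepsilon_1^{(k)})$ averaging to $(\varepsilon_0,\varepsilon_1)$. I will precompose with the completely dephasing channel $\Delta(\rho) = \sum_i \bra{i}\rho\ket{i}\ketbra{i}{i}$. This is a legitimate CPTP map: it can be implemented with a CNOT onto an ancilla that is then discarded, with no measurement. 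By Lemma \ref{lem:reproduction-of-noisy-Z}, each $\mathcal{C}_N^{(k)}\circ\Delta$ acts as $\mathbb{Z}_{\varepsilon_0^{(k)},\varepsilon_1^{(k)}}$. The POVM elements $Z_j(p,q)$ are affine in $(p,q)$, so the mixture acts as $\sum_k \alpha_k \mathbb{Z}_{\varepsilon_0^{(k)},\varepsilon_1^{(k)}} = \mathbb{Z}_{\varepsilon_0,\varepsilon_1}$. I will state this linearity explicitly, because Lemma \ref{lem:reproduction-of-noisy-Z} is phrased for a single circuit.

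For the converse, suppose some protocol $\mathcal{P}$ with $N$ adaptive uses of $\mathbb{M}$ reproduces $\mathbb{Z}_{\varepsilon_0,\varepsilon_1}$. Feeding in $\ket{0}$ gives output $1$ with probability $\bra{0}Z_1\ket{0} = \varepsilon_0$, and feeding in $\ket{1}$ gives output $0$ with probability $\bra{1}Z_0\ket{1} = \varepsilon_1$. So $\mathcal{P}$, viewed as a circuit of the type in Figure \ref{fig:adaptive} (a mixture of adaptive circuits), has error rates $(\varepsilon_0,\varepsilon_1)$. The ``only if'' part of Proposition \ref{prop:achievable-set-is-achievable} then gives $(\varepsilon_0,\varepsilon_1)\in\bar{E}(\mathbb{M},N)$.

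I do not expect a real obstacle. The only delicate points are two consistency checks. First, the dephasing step must not consume a measurement. Second, the class of protocols allowed in ``reproduce'' must coincide with the class in Proposition \ref{prop:achievable-set-is-achievable}, namely mixtures of adaptive circuits with arbitrary CPTP maps. I will check both explicitly so that neither direction silently enlarges or shrinks the set of allowed procedures.
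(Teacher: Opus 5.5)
Your proposal is correct and is essentially the paper's argument: the paper obtains the corollary directly by combining Proposition~\ref{prop:achievable-set-is-achievable} with Lemma~\ref{lem:reproduction-of-noisy-Z}, just as you do. Your extra steps fill in details the paper leaves implicit: handling mixtures via the affine dependence of $Z_j(p,q)$ on $(p,q)$, implementing the dephasing with a CNOT and a discarded ancilla rather than a measurement, and proving the converse by feeding in $\ket{0}$ and $\ket{1}$.
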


\begin{figure}[t]
    \centering
    \includegraphics[scale=1.25]{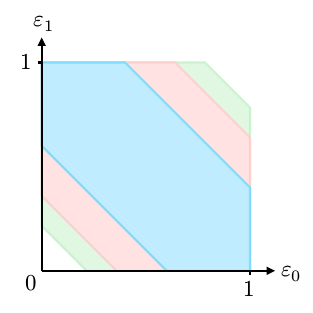}
    \caption{An example of the shape of the achievable sets $\bar{E}(\mathbb{M},1)$ (blue), $\bar{E}(\mathbb{M},2)$ (pink and blue), and $\bar{E}(\mathbb{M},3)$ (green, pink, and blue) when Proposition \ref{prop:no-adaptive-advantage} applies. The sets encroach towards the origin by a fixed proportion with each added measurement, hence the exactly exponentially decreasing total error.}
    \label{fig:no-adaptive-advantage-achievable}
\end{figure}

\noindent For instance, suppose that $(0,q) \in \bar{E}(\mathbb{M}, 1)$, i.e. that one use of $\mathbb{M}$ can exactly reproduce one use of the semi-perfect $Z$ measurement $\mathbb{Z}_{0,q}$. If we use the circuit in Figure \ref{fig:non-adaptive-optimal-noisy-Z}, where each $\mathbb{Z}_{0,q}$ is implemented using one use of $\mathbb{M}$, and $f$ is defined as the function OR, then the error rates are:
\begin{equation}
    \varepsilon_0 = q^N; \quad \varepsilon_1 = 0,
\end{equation}

\noindent since an input of $\ket{0}$ always gives measurement results $00...0$. Therefore, $q^N$ is a total error that we can achieve non-adaptively, so:
\begin{equation}
    \etaNA(\mathbb{M}, N) \leqslant q^N.
\end{equation}

\noindent In combination with Proposition \ref{prop:adaptive-lower-bound}, we can thus find a sufficient condition for there to be no adaptive advantage:

\begin{proposition}
    If $(0, \eta(\mathbb{M})) \in \bar{E}(\mathbb{M}, 1)$, then for any $N \geqslant 0$:
    \begin{equation}
         \etaA(\mathbb{M}, N) = \etaNA(\mathbb{M}, N) = \eta(\mathbb{M})^N,
    \end{equation} \label{prop:no-adaptive-advantage}
\end{proposition}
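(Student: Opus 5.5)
The plan is to sandwich both quantities between matching bounds. By the chain already noted,
\[
\eta(\mathbb{M})^N \leqslant \etaA(\mathbb{M},N) \leqslant \etaNA(\mathbb{M},N),
\]
where the left inequality is Proposition \ref{prop:adaptive-lower-bound} and the right holds because non-adaptive circuits are special adaptive ones. It therefore suffices to show $\etaNA(\mathbb{M},N) \leqslant \eta(\mathbb{M})^N$, which follows from the semi-perfect $Z$ construction of Figure \ref{fig:non-adaptive-optimal-noisy-Z} with $q = \eta(\mathbb{M})$.

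The key steps, in order, are as follows.

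First, write $q = \eta(\mathbb{M})$. The hypothesis $(0,q) \in \bar{E}(\mathbb{M},1)$ says, by Proposition \ref{prop:achievable-set-is-achievable}, that the pair $(0,q)$ is achievable with one use of $\mathbb{M}$. A priori this might require a mixed protocol. However, the total error of $(0,q)$ equals the minimum $q$, and every point of $E(\mathbb{M},1)$ has total error at least $q$. So $(0,q)$ can only be a convex combination of points of $E(\mathbb{M},1)$ lying on the line $\varepsilon_0+\varepsilon_1 = q$ with $\varepsilon_0 \geqslant 0$. Such points must all equal $(0,q)$, since $(0,q)$ is the endpoint of that segment within the positive quadrant. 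Hence $(0,q) \in E(\mathbb{M},1)$ is realised by a single deterministic circuit: a channel $\mathcal{B}$ plus a post-processing $g:[m]\to\{0,1\}$.

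Second, by Lemma \ref{lem:reproduction-of-noisy-Z} (Corollary \ref{cor:achievable-sets-as-imperfect-Zs}), this one-use circuit, applied to a computational basis input, behaves exactly as $\mathbb{Z}_{0,q}$. Concretely, input $\ket{0}$ always yields output $0$, and input $\ket{1}$ yields $0$ with probability $q$. Mixedness would in fact be harmless anyway, since we only need to feed basis states $\ket{i}$.

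Third, build the circuit of Figure \ref{fig:non-adaptive-optimal-noisy-Z}. Copy $\ket{i}$ to $N$ registers with CNOTs, apply $\mathcal{B}$ to each, measure each with $\mathbb{M}$, apply $g$ to each outcome, and output the OR of the resulting bits. This is non-adaptive; the product of $\mathcal{B}$'s together with the CNOTs is a valid channel $\mathcal{A}$. On input $\ket{0}$ every bit is $0$, so $\varepsilon_0 = 0$. On input $\ket{1}$ the $N$ bits are independent, each $0$ with probability $q$, so the probability of outputting $0$ is $\varepsilon_1 = q^N$. This gives $\etaNA(\mathbb{M},N) \leqslant q^N$. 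Combined with the sandwich above, this yields equality throughout. The case $N=0$ is trivial, with both sides equal to $1$, since the achievable set is $\{(0,1),(1,0)\}$.

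The only subtle point, and the one I expect to be the main obstacle, is the first step. One must justify that the one-use reproduction of $\mathbb{Z}_{0,q}$ is available as a genuine circuit whose independent parallel copies combine multiplicatively. The extremality argument along the minimal-total-error line handles this. Alternatively, it suffices to note that independent randomness in each copy still gives per-copy statistics $(0,q)$ for basis inputs, so the product formula holds regardless.
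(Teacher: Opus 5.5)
Your proposal is correct and follows the paper's own proof. You sandwich $\eta(\mathbb{M})^N \leqslant \etaA(\mathbb{M},N) \leqslant \etaNA(\mathbb{M},N)$ using Proposition~\ref{prop:adaptive-lower-bound}, then reach $\eta(\mathbb{M})^N$ non-adaptively by copying $\ket{i}$ with CNOTs, reproducing $\mathbb{Z}_{0,\eta(\mathbb{M})}$ with one use of $\mathbb{M}$ per copy, and taking the OR. Your extremality argument placing $(0,\eta(\mathbb{M}))$ in $E(\mathbb{M},1)$ is valid but optional: the paper just invokes Corollary~\ref{cor:achievable-sets-as-imperfect-Zs}, and, as you note, independently mixed copies would still give $\varepsilon_1 = \eta(\mathbb{M})^N$.
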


\begin{proof}
    See Appendix \ref{sec:proof-no-adaptive-advantage}.
\end{proof}

\noindent While this condition looks asymmetric, we note that the achievable sets $\bar{E}(\mathbb{M}, N)$ are symmetric in swapping $\varepsilon_0$ and $\varepsilon_1$, since the effect of putting an $X$ gate at the start and a bit flip at the end of the circuit is to swap the errors. Indeed, the condition is equivalent to $\bar{E}(\mathbb{M}, 1)$ being a hexagon with vertices $\{ (0,1), (0,\eta), (\eta,0), (1,0), (1,1-\eta), (1-\eta,1) \}$, for $\eta = \eta(\mathbb{M})$, as shown in Figure \ref{fig:no-adaptive-advantage-achievable}.

A final consideration we make is how we may go about calculating $\etaA(\mathbb{M},N)$. In order to calculate the achievable error sets (or more specifically, the single-circuit achievable sets $E(\mathbb{M}, N)$), we express the recursive formula \eqref{eq:achievable-errors} as:
\begin{widetext}
    \begin{align}
        E(\mathbb{M},N) &= \left\{ \left( \sum_{x \in [m]} \varepsilon_0^{(x)} q_0^{(x)}, \sum_{x \in [m]} \varepsilon_1^{(x)} q_1^{(x)} \right) : \left( \varepsilon_0^{(x)}, \varepsilon_1^{(x)} \right) \in E(\mathbb{M}, N-1) \ \forall x, \ \mathbf{q}_i = \left( q_i^{(x)} \right)_{x \in [m]} \in \mathcal{R}(\mathbb{M}) \ \forall i \right\} \nonumber \\
        &= \left\{ \left( \mathrm{tr} \left( \rho_0 \sum_{x \in [m]} \varepsilon_0^{(x)} M_x \right), \mathrm{tr} \left( \rho_1 \sum_{x \in [m]} \varepsilon_1^{(x)} M_x \right) \right) : \left( \varepsilon_0^{(x)}, \varepsilon_1^{(x)} \right) \in E(\mathbb{M}, N-1) \ \forall x, \ \rho_i \in \mathcal{D}(\mathcal{H}) \ \forall i \right\} \nonumber \\
        &= \bigcup_{\{(\varepsilon_0^{(x)}, \varepsilon_1^{(x)})\} \subseteq E(\mathbb{M}, N-1)} \mathcal{I} \left( \bm\varepsilon_0 \cdot \mathbf{M} \right) \times \mathcal{I} \left( \bm\varepsilon_1 \cdot \mathbf{M} \right) \qquad \text{where } \bm\varepsilon_i \cdot \mathbf{M} = \sum_{x \in [m]} \varepsilon_i^{(x)} M_x. \label{eq:union-of-rectangles}
    \end{align}
\end{widetext}

\noindent Recall that $\mathcal{I}(\bm\varepsilon_i \cdot \mathbf{M})$ is the spectral range \eqref{eq:spectral-range} of $\bm\varepsilon_i \cdot \mathbf{M}$, which is an interval, so the Cartesian product $\mathcal{I} \left( \bm\varepsilon_0 \cdot \mathbf{M} \right) \times \mathcal{I} \left( \bm\varepsilon_1 \cdot \mathbf{M} \right)$ is a rectangle. Therefore \eqref{eq:union-of-rectangles} decomposes $E(\mathbb{M}, N)$ as a union of rectangles.

To find $(\varepsilon_0, \varepsilon_1)$ minimising total error, we know that at every step, we will always want to pick the lower left corner of each rectangle, so we define sets $e(\mathbb{M},N)$ using the same iterative process but only including those lower left corners:
\begin{align}
    e(\mathbb{M},0) &= \{(0,1), (1,0)\}; \nonumber \\
    e(\mathbb{M},N) &= \big\{ ( \lambda_\mathrm{min} \left( \bm\varepsilon_0 \cdot \mathbf{M} \right), \lambda_\mathrm{min} \left( \bm\varepsilon_1 \cdot \mathbf{M} \right) ) : \nonumber \\
    &\hspace{3em} \left( \varepsilon_0^{(x)}, \varepsilon_1^{(x)} \right) \in e(\mathbb{M}, N-1) \ \forall x \big\}. \label{eq:finite-achievable-errors}
\end{align}

\noindent These sets are finite, but exactly define the lower left boundary of $E(\mathbb{M},N)$, and by symmetry we can reflect across $\varepsilon_0 + \varepsilon_1 = 1$ to define the upper right boundary, so that in total, we encompass the entire set in the convex hull:

\begin{proposition}
    For any $\mathbb{M}$ and $N \geqslant 0$:
    \begin{equation}
        \mathrm{conv}(e(\mathbb{M}, N) \cup g(\mathbb{M}, N)) = \bar{E}(\mathbb{M}, N),\label{eq:finite-convex-hull}
    \end{equation}

    \noindent where $g(\mathbb{M}, N) = \left\{ (1-\varepsilon_1, 1-\varepsilon_0) : (\varepsilon_0, \varepsilon_1) \in e(\mathbb{M}, N) \right\} $ is the reflection of $e(\mathbb{M}, N)$ in the line $\varepsilon_0 + \varepsilon_1 = 1$. \label{prop:achievable-error-polygon}
\end{proposition}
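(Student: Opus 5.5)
Both sets are convex, so I will show two inclusions. The easy one is $\mathrm{conv}(e(\mathbb{M},N)\cup g(\mathbb{M},N)) \subseteq \bar{E}(\mathbb{M},N)$. Since $\bar{E}(\mathbb{M},N)$ is convex, it suffices to show $e(\mathbb{M},N) \subseteq E(\mathbb{M},N)$ and $g(\mathbb{M},N)\subseteq \bar{E}(\mathbb{M},N)$.

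\textbf{Step 1: the inclusion $e(\mathbb{M},N)\subseteq E(\mathbb{M},N)$.} I will prove this by induction on $N$. The base case is $e(\mathbb{M},0)=E(\mathbb{M},0)$. For the inductive step, each point of $e(\mathbb{M},N)$ is the lower-left corner of the rectangle $\mathcal{I}(\bm\varepsilon_0\cdot\mathbf{M})\times\mathcal{I}(\bm\varepsilon_1\cdot\mathbf{M})$, with labels taken from $e(\mathbb{M},N-1)\subseteq E(\mathbb{M},N-1)$. By \eqref{eq:union-of-rectangles}, that rectangle lies in $E(\mathbb{M},N)$.

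\textbf{Step 2: the inclusion $g(\mathbb{M},N)\subseteq\bar{E}(\mathbb{M},N)$.} Bit-flipping the output of a circuit sends $(\varepsilon_0,\varepsilon_1)\mapsto(1-\varepsilon_0,1-\varepsilon_1)$. Applying an $X$ at the input together with an output flip swaps the two errors. Composing these two operations gives the reflection $(\varepsilon_0,\varepsilon_1)\mapsto(1-\varepsilon_1,1-\varepsilon_0)$, which preserves $E(\mathbb{M},N)$. The flip invariance can also be checked directly from the recursion: replace $f$ by $1-f$ at the leaves and propagate linearly using $\sum_x q_i^{(x)}=1$.

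\textbf{Step 3: the reverse inclusion, via support functions.} Both sides are compact convex sets, so it suffices to show that for every direction $(a,b)$ the minimum of the linear functional $a\varepsilon_0+b\varepsilon_1$ over $E(\mathbb{M},N)$ is attained on $e(\mathbb{M},N)\cup g(\mathbb{M},N)$. I will split into cases by the signs of $a$ and $b$. By the reflection symmetry (and the swap symmetry), it is enough to treat $a,b\geq 0$ together with the mixed-sign cases.

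\textbf{Step 4: the case $a,b\geq0$.} I will prove by induction on $N$ the stronger statement that for every $(\varepsilon_0,\varepsilon_1)\in E(\mathbb{M},N)$ there is a point of $e(\mathbb{M},N)$ lying componentwise below it. Take a point of $E(\mathbb{M},N)$ built from labels $(\varepsilon_0^{(x)},\varepsilon_1^{(x)})\in E(\mathbb{M},N-1)$. By induction, each label can be replaced by a componentwise-smaller point of $e(\mathbb{M},N-1)$. Since the $M_x$ are positive, the operators $\bm\varepsilon_i\cdot\mathbf{M}$ decrease in the Loewner order, so $\lambda_{\min}$ decreases as well. Finally, the lower-left corner lies below any point of its rectangle.

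\textbf{Step 5: the mixed-sign case (main obstacle).} Take $a\geq0\geq b$. Here I would like a point dominating in the order that minimizes $\varepsilon_0$ and maximizes $\varepsilon_1$. My first idea is to use the trivial circuits with constant output, which give $(0,1)$, together with the orderings induced by the reflection. I expect this to need a companion monotonicity statement: every point of $E(\mathbb{M},N)$ satisfies $\varepsilon_0+\varepsilon_1\geq$ something, or more precisely it lies above the lower-left chain and below the reflected chain. Concretely, I would show by induction that any point is dominated, in the ``small $\varepsilon_0$, large $\varepsilon_1$'' order, by $(0,1)\in e(\mathbb{M},N)$. This holds because $\varepsilon_0\geq0$ and $\varepsilon_1\leq1$ always. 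So this case is in fact trivial, since $(0,1)$ stays in $e(\mathbb{M},N)$ via constant labels. The case $a\leq0\leq b$ is symmetric, using $(1,0)$.

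That leaves $a,b\leq0$, which becomes Step 4 after reflection. The care needed is in checking that $(0,1)$ and $(1,0)$ persist in $e(\mathbb{M},N)$: choosing all labels equal to $(0,1)$ gives $(\lambda_{\min}(0),\lambda_{\min}(\mathbb{I}))=(0,1)$, as required.
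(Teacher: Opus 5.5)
Your proof is correct, but it takes a different route from the paper's.

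\textbf{The paper's route.} It argues primally, by induction on the full inclusion $E(\mathbb{M},N)\subseteq\mathrm{conv}(e(\mathbb{M},N)\cup g(\mathbb{M},N))$. It first uses the inductive hypothesis and the product-of-convex-combinations trick from Proposition \ref{prop:achievable-set-is-achievable} to assume every label lies in $e(\mathbb{M},N-1)\cup g(\mathbb{M},N-1)$. It then sandwiches each label between a point $\zeta^{(x)}\in e(\mathbb{M},N-1)$ and its reflection; this needs $\zeta_0^{(x)}+\zeta_1^{(x)}\leqslant 1$, which the paper asserts without proof. Loewner monotonicity then places the point in the rectangle $[\lambda_0,1-\lambda_1]\times[\lambda_1,1-\lambda_0]$. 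That rectangle lies inside the quadrilateral with vertices $(0,1)$, $(\lambda_0,\lambda_1)$, $(1,0)$, $(1-\lambda_1,1-\lambda_0)$.

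\textbf{Your route.} You pass to support functions, so you only ever need the one-sided domination lemma of Step 4. Its induction is cleaner than the paper's, because a label in $E(\mathbb{M},N-1)$ can be replaced directly by a dominated label in $e(\mathbb{M},N-1)$, with no convex decomposition of labels. The negative quadrant of directions comes from the reflection invariance of $E(\mathbb{M},N)$, which you correctly verify from the recursion. The mixed-sign directions are trivial, since $E(\mathbb{M},N)\subseteq[0,1]^2$ and $(0,1),(1,0)\in e(\mathbb{M},N)$. Your detour in Step 5 about a ``companion monotonicity statement'' is unnecessary; the final observation is the entire argument for that case.

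\textbf{What each buys.} Your approach costs a separation argument, which needs $\mathrm{conv}(e\cup g)$ to be closed; this holds because $e(\mathbb{M},N)$ is finite. In exchange, it isolates exactly the fact used in Corollary \ref{cor:adaptive-total-error}: every point of $E(\mathbb{M},N)$ dominates a point of $e(\mathbb{M},N)$. It also never needs the inequality $\zeta_0+\zeta_1\leqslant 1$ on $e$. The paper's argument is constructive: it names the four points of $e\cup g$ of which a given achievable pair is a mixture.
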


\begin{proof}
    See Appendix \ref{sec:proof-achievable-error-polygon}.
\end{proof}

\noindent In particular, $\bar{E}(\mathbb{M}, N)$ is a convex set defined as the convex hull of a finite number of points in the plane, so it is a polygon. By the linearity of total error, the minimal total error throughout the entire set must be found at one of the vertices of the polygon:

\begin{corollary}
    For any $\mathbb{M}$ and $N \geqslant 0$:
    \begin{equation}
        \etaA(\mathbb{M}, N) = \min_{(\varepsilon_0, \varepsilon_1) \in v(\mathbb{M}, N)} \{ \varepsilon_0 + \varepsilon_1 \},
    \end{equation}
    
    \noindent where $v(\mathbb{M}, N) \subseteq e(\mathbb{M}, N)$ is the set of vertices of the polygon $\bar{E}(\mathbb{M}, N)$ that lie on or below the line $\varepsilon_0 + \varepsilon_1 = 1$. \label{cor:adaptive-total-error}
\end{corollary}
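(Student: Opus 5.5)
The plan is to derive Corollary \ref{cor:adaptive-total-error} directly from Proposition \ref{prop:achievable-error-polygon} and the linearity of the total error. First, I would show that the minimisation in \eqref{eq:eta-a} over $E(\mathbb{M},N)$ has the same value as the minimisation over $\bar{E}(\mathbb{M},N)$. Since $E(\mathbb{M},N) \subseteq \bar{E}(\mathbb{M},N)$, the minimum over the hull is no larger. Conversely, any point of the hull is a convex combination of points of $E(\mathbb{M},N)$. The linear function $\varepsilon_0+\varepsilon_1$ at that combination is the same combination of its values at those points, so it is at least the smallest of them. This is the mixed-protocol argument already used in the text.

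Second, I would invoke Proposition \ref{prop:achievable-error-polygon}: $\bar{E}(\mathbb{M},N) = \mathrm{conv}(e(\mathbb{M},N)\cup g(\mathbb{M},N))$, where both sets are finite. So $\bar{E}$ is a compact convex polygon whose vertices lie in $e\cup g$. A linear functional on a polytope attains its minimum at a vertex, so $\etaA(\mathbb{M},N)$ equals $\min\{\varepsilon_0+\varepsilon_1\}$ over the vertices.

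Third, I would restrict to the vertices on or below the line $\varepsilon_0+\varepsilon_1=1$ and show they lie in $e(\mathbb{M},N)$. For the restriction, note that $E(\mathbb{M},N)$ contains a point with total error at most $1$. For instance, discard all measurements and output $0$, giving $(0,1)$. This is realised in \eqref{eq:achievable-errors} by taking every child pair equal to $(0,1)$, so that $\varepsilon_0=0$ and $\varepsilon_1=1$ for any $\mathbf{q}_i$. Hence the minimum is at most $1$, and a vertex attaining it lies on or below the line.

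For the claim that such vertices lie in $e(\mathbb{M},N)$, observe that every point $(1-\varepsilon_1,1-\varepsilon_0)$ of $g$ has total error $2-(\varepsilon_0+\varepsilon_1)$. So a point of $g$ lies strictly below the line only if its preimage in $e$ lies strictly above it. I would check by induction on $N$ that every point of $e(\mathbb{M},N)$ has $\varepsilon_0+\varepsilon_1 \leqslant 1$. The base case holds trivially for $e(\mathbb{M},0)$. For the inductive step, $\lambda_{\min}(A)+\lambda_{\min}(B) \leqslant \lambda_{\min}(A+B)$ gives
\begin{equation}
\lambda_{\min}(\bm\varepsilon_0\cdot\mathbf{M})+\lambda_{\min}(\bm\varepsilon_1\cdot\mathbf{M}) \leqslant \lambda_{\min}\Big(\sum_x(\varepsilon_0^{(x)}+\varepsilon_1^{(x)})M_x\Big) \leqslant \lambda_{\min}(\mathbb{I}) = 1 .
\end{equation}
The last step uses $\varepsilon_0^{(x)}+\varepsilon_1^{(x)}\leqslant 1$ and $M_x\geqslant 0$. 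Therefore points of $g$ have total error at least $1$, and any vertex strictly below the line lies in $e$.

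The main obstacle is vertices lying exactly on the line. If such a vertex belongs to $g$ but not to $e$, it still has total error exactly $1$, and in that case the value $1$ is also attained by a vertex from $e$. Indeed, $\etaA \leqslant 1$, and if the minimum equals $1$ then the whole set lies on or above the line. The face on the line is then a segment, and its endpoints are vertices, with $(0,1)$ and $(1,0)$ available. To settle this cleanly I would define $v(\mathbb{M},N)$ as the vertices in $e$ on or below the line and argue that the minimum is unchanged by this choice.
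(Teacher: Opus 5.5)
Your proposal is correct and follows the paper's route: you reduce from $E(\mathbb{M},N)$ to $\bar{E}(\mathbb{M},N)$ by the mixed-protocol (linearity) argument, use Proposition~\ref{prop:achievable-error-polygon} to see that $\bar{E}(\mathbb{M},N)$ is a polygon with vertices in $e(\mathbb{M},N)\cup g(\mathbb{M},N)$, and minimise the linear total error at a vertex. Your induction showing $\varepsilon_0+\varepsilon_1\leqslant 1$ on $e(\mathbb{M},N)$, via superadditivity of $\lambda_{\min}$, supplies a fact the paper uses without proof.

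The ``main obstacle'' you flag does not arise. The reflection $(\varepsilon_0,\varepsilon_1)\mapsto(1-\varepsilon_1,1-\varepsilon_0)$ fixes the line $\varepsilon_0+\varepsilon_1=1$ pointwise, so any point of $g(\mathbb{M},N)$ on that line equals its own preimage in $e(\mathbb{M},N)$. Hence every vertex on or below the line lies in $e(\mathbb{M},N)$, the inclusion $v(\mathbb{M},N)\subseteq e(\mathbb{M},N)$ holds as stated, and you do not need to redefine $v(\mathbb{M},N)$.
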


\noindent Overall, this gives us several methods to calculate $\etaA(\mathbb{M}, N)$. First, we may calculate it from its definition \eqref{eq:eta-a} via finding the single-circuit achievable error sets $E(\mathbb{M},N)$ or their convex hulls $\bar{E}(\mathbb{M},N)$. By Proposition \ref{prop:achievable-error-polygon}, we need only consider recursively defined finite sets with the same convex hull, which makes exhaustive approaches possible. Corollary \ref{cor:adaptive-total-error} reduces the search space even further to the vertices of the polygon $\bar{E}(\mathbb{M},N)$. Alternatively, Proposition \ref{prop:adaptive-lower-bound} gives us a lower bound on $\etaA(\mathbb{M}, N)$, and Proposition \ref{prop:no-adaptive-advantage} provides a sufficient condition that means it will be met non-adaptively, with both $\etaA(\mathbb{M}, N)$ and $\etaNA(\mathbb{M}, N)$ simply calculated.

\subsection{Further examples} \label{sec:qubit-examples}

\noindent We conclude this section by applying this formalism to some interesting example POVMs, to show when adaptive advantage can be found. First, we will return briefly to the imperfect $Z$ from Section \ref{sec:example}, and then we will look at other qubit and qudit\footnote{Recall that we can use a POVM acting on $\mathcal{H}$ to approximate a qubit $Z$ measurement even if $h = \dim \mathcal{H} \neq 2$.} POVMs.

\subsubsection{Imperfect $Z$} \label{sec:noisy-Z-further-example}

\noindent To start with, we have a loose end to tie up with the imperfect $Z$ measurement $\mathbb{Z}_{p,q}$ as defined in \eqref{eq:imperfect-Z}. In Section \ref{sec:example}, we saw how for most values of $p,q$, it is possible to find adaptive circuits that beat the best non-adaptive circuits, for $N = 3$ or $N = 4$. We may ask whether it is ever possible to get an adaptive advantage with imperfect $Z$ for $N = 2$.

First, we want to know the set $e(\mathbb{Z}_{p,q},1)$. As before we assume that $0 \leqslant p \leqslant q \leqslant 1-p$, with the rest of the parameter space following by symmetry. In particular, since $p + q \leqslant 1$, then:
\begin{equation}
    e(\mathbb{Z}_{p,q},1) = \{ (0,1), (p,q), (q,p), (1,0) \}.
\end{equation}

\begin{figure}[t]
    \centering
    \includegraphics[scale=1.25]{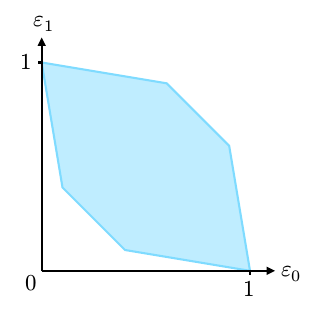}
    \caption{The achievable error set $\bar{E}(\mathbb{Z}_{p,q},1)$ for $p = 0.1$ and $q = 0.4$. In the general case, the shape is similar, being a hexagon with vertices $\{ (0,1), (p,q), (q,p), (1,0), (1-p,1-q), (1-q,1-p) \}$.}
    \label{fig:noisy-Z-achievable}
\end{figure}

\noindent The achievable set $\bar{E}(\mathbb{Z}_{p,q},1)$ is a hexagon with vertices at $\{ (0,1), (p,q), (q,p), (1,0), (1-p,1-q), (1-q,1-p) \}$, as illustrated in Figure \ref{fig:noisy-Z-achievable}. Thus via Corollary \ref{cor:adaptive-total-error}, the minimal total error achievable with one measurement of $\mathbb{Z}_{p,q}$ is:
\begin{equation}
    \eta(\mathbb{Z}_{p,q}) = p + q.
\end{equation}

\begin{table*}[t]
    \centering
    \includegraphics[scale=0.75]{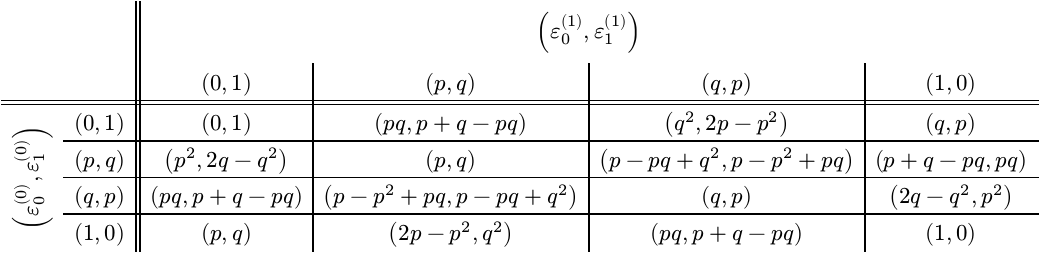}
    \caption{The errors $(\varepsilon_0, \varepsilon_1) = \left( \lambda_\mathrm{min} \left( \varepsilon_0^{(0)} M_0 + \varepsilon_0^{(1)} M_1 \right), \lambda_\mathrm{min}\left( \varepsilon_1^{(0)} M_0 + \varepsilon_1^{(1)} M_1 \right) \right)$ comprising $e(\mathbb{Z}_{p,q},2)$, for each choice of $\left( \varepsilon_0^{(0)}, \varepsilon_1^{(0)} \right), \left( \varepsilon_0^{(1)}, \varepsilon_1^{(1)} \right) \in e(\mathbb{Z}_{p,q}, 1)$. We assume as usual that $0 \leqslant p \leqslant q \leqslant 1-p$, in which case the smallest achievable total error is $\etaA(\mathbb{Z}_{p,q}, 2) = 2p - p^2 + q^2$.}
    \label{tab:two-noisy-Z-errors}
\end{table*}

\noindent By the recursive definition \eqref{eq:finite-achievable-errors}, each element of the set $e(\mathbb{Z}_{p,q},2)$ is defined by a combination of two elements of $e(\mathbb{Z}_{p,q},1)$. This is shown in Table \ref{tab:two-noisy-Z-errors}, with the list of possible total errors being:
\begin{align}
    &\hspace{-1em} \{ \varepsilon_0 + \varepsilon_1 : (\varepsilon_0, \varepsilon_1) \in e(\mathbb{Z}_{p,q},2) \} \nonumber \\
    &= \{ 1, p+q, 2p-p^2+q^2, 2q+p^2-q^2 \}
\end{align}

\noindent The smallest of these is $\etaA(\mathbb{Z}_{p,q}, 2)$, which for $0 \leqslant p \leqslant q \leqslant 1-p$ is always:
\begin{equation}
    \etaA(\mathbb{Z}_{p,q}, 2) = 2p - p^2 + q^2.
\end{equation}

\noindent However, this minimal total error is also achievable non-adaptively, by copying $\ket{i}$ with a CNOT gate, measuring both copies with $\mathbb{Z}_{p,q}$, and post-processing via $f =  \mathrm{AND}$. This circuit gives $\varepsilon_0 = q^2$ and $\varepsilon_1 = 2p - p^2$, so achieves the same total error of $2p - p^2 + q^2$. Therefore, for $N = 2$ there is no adaptive advantage, so we need to go to $N = 3$ at minimum.

One might at this point wonder what $\etaNA(\mathbb{Z}_{p,q},N)$ and $\etaA(\mathbb{Z}_{p,q},N)$ are for arbitrary $p,q,N$, whether there is an adaptive advantage, and if so, how much? Unfortunately, we do not currently have answers to these questions, although we do have asymptotic bounds that we discuss in Section \ref{sec:qudit-measurements}.

\subsubsection{The trine}

\noindent The imperfect $Z$ measurements encompass all possible two-outcome measurements on a qubit up to unitary equivalence, but the language of POVMs allows us to understand fundamentally different measurements as well, with more than two outcomes. One such example is the \emph{trine} measurement, defined in terms of the three qubit states $\ket{\psi_0} = \ket{0}$, $\ket{\psi_1} = \frac{1}{2} \ket{0} + \frac{\sqrt{3}}{2} \ket{1}$, and $\ket{\psi_2} = \frac{1}{2} \ket{0} - \frac{\sqrt{3}}{2} \ket{1}$, as:
\begin{equation}
    \trine = \left\{ M_i = \tfrac{2}{3} \ketbra{\psi_i}{\psi_i} : i \in \{ 0, 1, 2 \} \right\}. \label{eq:trine}
\end{equation}

\noindent Unlike the imperfect $Z$, we cannot understand this physically as noise on an otherwise projective $Z$ measurement, indeed being defined in terms of perfect rank-one projectors, it is extremal and can't be described as \emph{noisy} in a meaningful sense. Instead, it is probing more information about the state than just where it falls on the superposition spectrum between $\ket{0}$ and $\ket{1}$, although this comes at the cost of randomness in the outcome. Nonetheless, in our scenario, we suppose that it is the only measuring device we have access to, so we will have to use it as best we can to try to recreate a projective $Z$ measurement.

If we only get one use of the trine, then using the symmetry of swapping the POVM elements, we find that $e(\trine,1)$ has four elements, coming from the post-processing functions that map either zero, one, two, or three of the measurement outcomes to $j = 0$:
\begin{align}
    \left( \lambda_\mathrm{min}(0), \lambda_\mathrm{min}(M_0 + M_1 + M_2) \right) & = (0,1); \nonumber \\
    \left( \lambda_\mathrm{min}(M_0), \hspace{4ex} \lambda_\mathrm{min}(M_1 + M_2) \right) & = (0,\tfrac{1}{3}); \nonumber \\
    \left( \lambda_\mathrm{min}(M_0 + M_1), \hspace{4ex} \lambda_\mathrm{min}(M_2) \right) & = (\tfrac{1}{3},0); \nonumber \\
    \left( \lambda_\mathrm{min}(M_0 + M_1 + M_2), \lambda_\mathrm{min}(0) \right) & = (1,0),
\end{align}

\begin{figure}[t]
    \centering
    \includegraphics[scale=1.25]{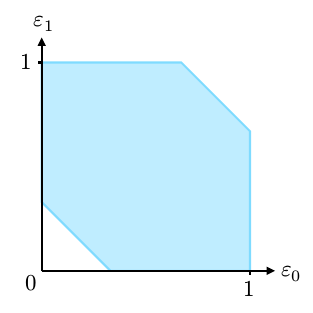}
    \caption{The achievable error set $\bar{E}(\trine,1)$. By comparison to Figure \ref{fig:no-adaptive-advantage-achievable}, we see that it takes the form where Proposition \ref{prop:no-adaptive-advantage} applies, and so the trine will not benefit from any adaptive advantage when used to approximate the projective $Z$ measurement.}
    \label{fig:trine-achievable}
\end{figure}

\noindent and so the best total error that we can achieve is $\eta(\trine) = \frac{1}{3}$ by Corollary \ref{cor:adaptive-total-error}. Moreover, as can be seen from the above or the calculated achievable error set $\bar{E}(\trine,1)$ shown in Figure \ref{fig:trine-achievable}, the conditions for Proposition \ref{prop:no-adaptive-advantage} are satisfied, so we can deduce that:
\begin{equation}
    \etaA(\trine, N) = \etaNA(\trine, N) = \tfrac{1}{3^N}. \label{eq:eta-trine}
\end{equation}

\begin{figure}[t]
    \centering
    \includegraphics[scale=1.25]{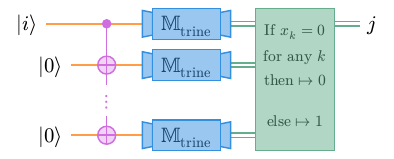}
    \caption{An optimal circuit for $N$ uses of the trine measurement $\trine$. If the input state is $\ket{1}$, the measurement outcome 0 is always impossible, so whenever we see a 0, we know that the input state was $\ket{0}$, hence we output $j = 0$. Otherwise, any string of outcomes consisting of just 1s and 2s is more likely to have come from an input of $\ket{1}$, so we output $j = 1$. This achieves total error $\frac{1}{3^N}$, which is minimal.}
    \label{fig:trine-optimal}
\end{figure}

\noindent Thus adaptivity will never give any advantage when we only have the trine measurement available. From the proof of Proposition \ref{prop:no-adaptive-advantage}, we can also find a circuit that achieves this optimal total error, shown in Figure \ref{fig:trine-optimal}.

\subsubsection{The qubit SIC-POVM}

\noindent One of the most theoretically interesting qubit measurements is the \emph{symmetric informationally complete POVM} (SIC-POVM):
\begin{equation}
    \SIC = \left\{ M_i = \tfrac{1}{2} \ketbra{\psi_i}{\psi_i} : i \in \{ 0, 1, 2, 3 \} \right\} \label{eq:tetra},
\end{equation}

\noindent where the states $\{ \ket{\psi_i} : i \in \{0,1,2,3\} \}$ form a regular tetrahedron in the Bloch sphere. We take:
\begin{align}
    \ket{\psi_0} &= \ket{0}; & \ket{\psi_1} &= \tfrac{1}{\sqrt{3}} \ket{0} + \tfrac{\sqrt{2}}{\sqrt{3}} \ket{1}; \nonumber \\
    \ket{\psi_2} &= \tfrac{1}{\sqrt{3}} \ket{0} + e^{\frac{2\pi i}{3}} \tfrac{\sqrt{2}}{\sqrt{3}} \ket{1}; & \ket{\psi_3} &= \tfrac{1}{\sqrt{3}} \ket{0} + e^{\frac{4\pi i}{3}} \tfrac{\sqrt{2}}{\sqrt{3}} \ket{1}.
\end{align}

\noindent Despite its interesting properties, for us it is another example we can use to approximate a projective $Z$ measurement, and indeed, we will find it behaves differently to the examples we have seen so far. Again, we start with $N = 1$ use of $\SIC$, with $e(\SIC, 1)$ given by the five points:
\begin{align}
    \left( \lambda_\mathrm{min}(0), \lambda_\mathrm{min}(M_0 + M_1 + M_2 + M_3) \right) & = (0,1); \nonumber \\
    \left( \lambda_\mathrm{min}(M_0), \hspace{4ex} \lambda_\mathrm{min}(M_1 + M_2 + M_3) \right) & = (0,\tfrac{1}{2}); \nonumber \\
    \left( \lambda_\mathrm{min}(M_0 + M_1), \hspace{4ex} \lambda_\mathrm{min}(M_2 + M_3) \right) & = (\tfrac{3 - \sqrt{3}}{6},\tfrac{3 - \sqrt{3}}{6}); \nonumber \\
    \left( \lambda_\mathrm{min}(M_0 + M_1 + M_2), \hspace{4ex} \lambda_\mathrm{min}(M_3) \right) & = (\tfrac{1}{2},0); \nonumber \\
    \left( \lambda_\mathrm{min}(M_0 + M_1 + M_2 + M_3), \lambda_\mathrm{min}(0) \right) & = (1,0).
\end{align}

\begin{figure}[t]
    \centering
    \includegraphics[scale=1.25]{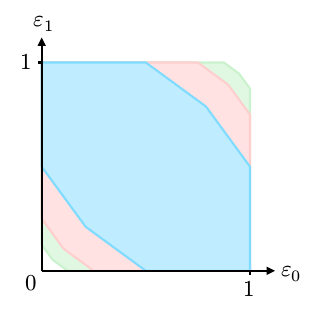}
    \caption{The achievable error sets $\bar{E}(\SIC,1)$ (blue), $\bar{E}(\SIC,2)$ (red and blue), and $\bar{E}(\SIC,3)$ (green, red, and blue). Like in Figure \ref{fig:no-adaptive-advantage-achievable}, the shape of the achievable error set remains similar as $N$ increases, with the lower left border shrinking towards the origin by a factor of $\frac{1}{2}$ each time. We can deduce by the recursive definition that this pattern continues indefinitely, which tells us the minimal adaptive total error for every $N$.}
    \label{fig:tetra-achievable}
\end{figure}

\noindent The achievable error set is as shown in blue in Figure \ref{fig:tetra-achievable}, with $\eta(\SIC) = 1 - \frac{1}{\sqrt{3}}$ by Corollary \ref{cor:adaptive-total-error}. We also note immediately that Proposition \ref{prop:no-adaptive-advantage} does not apply to the SIC-POVM, so we may have a chance at adaptive advantage like we saw for imperfect $Z$.

Now we suppose we have $N = 2$ uses of $\SIC$ to make. In the non-adaptive case, we use \eqref{eq:maximise-spectral-diameter}, checking all possible post-processing functions $f$, of which there are $2^{16} = 65\ 536$ as $\SIC^{\otimes 2}$ has 16 elements. For each $f$ we need to find the spectral diameter of $Q_0^{(f)}$, which is a $4 \times 4$ matrix, so this is simple to do numerically to any given precision. Computing this, we find that the minimal non-adaptive total error we can achieve is:
\begin{equation}
    \etaNA(\SIC, 2) = \eta(\SIC^{\otimes 2}) = \tfrac{1}{4}. \label{eq:two-non-adaptive-tetras}
\end{equation}

\begin{figure*}[t]
    \centering
    \includegraphics[scale=1.25]{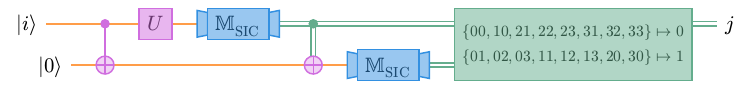}
    \caption{An adaptive circuit using the SIC-POVM $\SIC$ twice that achieves a lesser total error than any non-adaptive circuit with two uses of $\SIC$. The unitary $U$ is as defined in \eqref{eq:tetra-unitary}, while the classically controlled X gate after the first measurement is applied if the first measurement outcome is 0 or 1, while no operation is performed if the first measurement outcome is 2 or 3. The total error achieved by this circuit is $\frac{1}{2} - \frac{\sqrt{3}}{6} = 0.211...$, while the minimal non-adaptive total error is $\frac{1}{4} = 0.25$.}
    \label{fig:tetra-adaptive}
\end{figure*}

\noindent However, the adaptive circuit shown in Figure \ref{fig:tetra-adaptive} can achieve a smaller total error of $\frac{1}{2} - \frac{\sqrt{3}}{6}$, making use of the unitary:
\begin{align}
    U &= \tfrac{\sqrt{3 + \sqrt{3}}}{\sqrt{6}} \ketbra{0}{0} - \tfrac{\sqrt{3 - \sqrt{3}}}{\sqrt{6}} \ketbra{0}{1} \nonumber \\
    &\qquad + \tfrac{\sqrt{3 - \sqrt{3}}}{\sqrt{6}} \ketbra{1}{0} + \tfrac{\sqrt{3 + \sqrt{3}}}{\sqrt{6}} \ketbra{1}{1}. \label{eq:tetra-unitary}
\end{align}

\noindent Thus, the SIC-POVM can benefit from adaptivity with as few as two measurements, unlike the imperfect $Z$ which required at least three, or the trine, which never saw any adaptive advantage.

For greater $N$, we first look at the adaptive case. The achievable error set $\bar{E}(\SIC,2)$ is calculated from the previous achievable error set $\bar{E}(\SIC,1)$, and as shown in Figure \ref{fig:tetra-achievable}, it takes on a very similar shape. Indeed, the lower left boundary of $\bar{E}(\SIC,2)$ between the axes is exactly shrunk by a factor of $\frac{1}{2}$ towards the origin compared to $\bar{E}(\SIC,1)$. Since these sets are defined recursively, we deduce that this pattern must continue for all $N$, and hence that:
\begin{align}
    v(\SIC,N) &= \Big\{ (0,1), \left( 0, \tfrac{1}{2^N} \right), \left( \tfrac{1}{2^N}, 0 \right), (1,0), \nonumber \\
    &\hspace{2em} \left( \tfrac{1}{2^N} \left( 1 - \tfrac{1}{\sqrt{3}} \right) , \tfrac{1}{2^N} \left( 1 - \tfrac{1}{\sqrt{3}} \right) \right) \Big\},
\end{align}

\noindent Therefore, by Corollary \ref{cor:adaptive-total-error}:
\begin{equation}
    \etaA(\SIC,N) = \tfrac{1}{2^{N-1}} \left( 1 - \tfrac{1}{\sqrt{3}} \right). \label{eq:eta-tetra}
\end{equation}

\noindent Meanwhile, in the non-adaptive case, we will show by finding matching lower and upper bounds that $\etaNA(\SIC, N) = \frac{1}{2^N}$. First, for an upper bound, we demonstrate a circuit that achieves this total error for all $N$. This circuit is effectively the same as in Figure \ref{fig:trine-optimal}, with the post-processing function sending any string containing a 0 to $j = 0$, and any string not containing any 0s to $j = 1$, such that:
\begin{align}
    \varepsilon_0 &= \lambda_\mathrm{min} \left( \mathbb{I}_2^{\otimes N} - (M_1 + M_2 + M_3)^{\otimes N} \right) = 0; \nonumber \\
    \varepsilon_1 &= \lambda_\mathrm{min} \left( (M_1 + M_2 + M_3)^{\otimes N} \right) = \tfrac{1}{2^N}.
\end{align}

\noindent and so the total error of $\frac{1}{2^N}$ is achieved non-adaptively:
\begin{equation}
    \etaNA(\SIC,N) \leqslant \tfrac{1}{2^N}. \label{eq:tetra-non-adaptive-upper-bound}
\end{equation}

Now, from \eqref{eq:two-non-adaptive-tetras}, we have that $(0, \eta(\SIC^{\otimes 2})) = (0, \frac{1}{4}) \in E(\SIC^{\otimes 2}, 1)$, so Proposition \ref{prop:no-adaptive-advantage} applies to $\SIC^{\otimes 2}$, and thus for all $N' \geqslant 0$:
\begin{equation}
    \etaA(\SIC^{\otimes 2}, N') = \tfrac{1}{4^{N'}}.
\end{equation}

\noindent Allowing $N'$ uses of $\SIC^{\otimes 2}$ adaptively is a looser condition than allowing $2N'$ uses of $\SIC$ non-adaptively, so $\etaNA(\SIC, 2N') \geqslant \frac{1}{4^{N'}}$. Hence, if $N$ is even, we take $N' = \frac{1}{2} N$, and in combination with \eqref{eq:tetra-non-adaptive-upper-bound} deduce that:
\begin{equation}
    \etaNA(\SIC,N) = \tfrac{1}{2^N} \quad \text{for $N$ even}. \label{eq:tetra-non-adaptive-even}
\end{equation}

\begin{figure*}[t]
    \centering
    \includegraphics[scale=1.25]{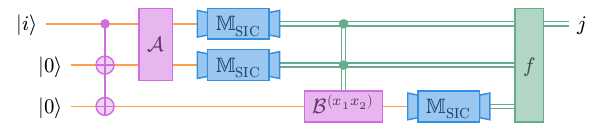}
    \caption{For three uses of $\SIC$, rather than finding an optimal non-adaptive circuit, we find it is easier to find an optimal circuit of the form shown, where the third measurement is performed adaptively. For each $x_1, x_2 \in [4]$, any choice of $\left( \varepsilon_0^{(x_1 x_2)}, \varepsilon_1^{(x_1 x_2)} \right) \in e(\SIC, 1)$ tells us what $\mathcal{B}^{(x_1 x_2)}$ should be, as well as the optimal post-processing $f$ on $\mathbf{x} = (x_1, x_2, x_3)$ for each $x_3 \in [4]$. Then, with such a choice made for each $x_1, x_2$, we have $(\varepsilon_0, \varepsilon_1) = \left( \lambda_\mathrm{min} \left( \sum_{x_1, x_2 \in [4]} \varepsilon_0^{(x_1 x_2)} M_{x_1} \otimes M_{x_2} \right), \lambda_\mathrm{min} \left( \sum_{x_1, x_2 \in [4]} \varepsilon_1^{(x_1 x_2)} M_{x_1} \otimes M_{x_2} \right) \right)$.}
    \label{fig:tetra-three-measurements}
\end{figure*}

\noindent For $N = 3$, we would like to be able to calculate $\etaNA(\SIC, 3)$ directly, but with $2^{4^3} = 2^{64}$ possible post-processing functions $f$, this is infeasible by brute force. Instead, we consider an adaptive circuit with a measurement of $\SIC^{\otimes 2}$ first, followed by a final measurement of $\SIC$ adaptively, as shown in Figure \ref{fig:tetra-three-measurements}. While this circuit looks a little different to the usual fully adaptive circuits, the same method applies. For each pair of possible results for the first two measurements $(x_1, x_2) \in [4]^2$, we need to make a choice of $\mathcal{B}^{(x_1,x_2)}$, corresponding along with maximum likelihood post-processing to some $\left( \varepsilon_0^{(x_1 x_2)}, \varepsilon_1^{(x_1 x_2)} \right) \in \bar{E}(\SIC, 1)$. Then, any such choice for each $(x_1, x_2)$ makes the optimal choice of $\mathcal{A}$ to map $\ketbra{0}{0}$ and $\ketbra{1}{1}$ to the minimal eigenstates of $\sum_{x_1, x_2 \in [4]} \varepsilon_0^{(x_1 x_2)} M_{x_1} \otimes M_{x_2}$ and $\sum_{x_1, x_2 \in [4]} \varepsilon_1^{(x_1 x_2)} M_{x_1} \otimes M_{x_2}$ respectively, such that:
\begin{align}
    \varepsilon_0 &= \lambda_\mathrm{min} \left( \sum_{x_1, x_2 \in [4]} \varepsilon_0^{(x_1 x_2)} M_{x_1} \otimes M_{x_2} \right); \nonumber \\    
    \varepsilon_1 &= \lambda_\mathrm{min} \left( \sum_{x_1, x_2 \in [4]} \varepsilon_1^{(x_1 x_2)} M_{x_1} \otimes M_{x_2} \right).
\end{align}

\noindent By Corollary \ref{cor:adaptive-total-error}, we know that $\left( \varepsilon_0^{(x_1 x_2)}, \varepsilon_1^{(x_1 x_2)} \right) \in v(\SIC, 1)$ will in fact suffice, and indeed, we note that two elements $(0,1)$ and $(1,0)$ would never be chosen over $(0,\frac{1}{2})$ and $(\frac{1}{2},0)$. Therefore, we need only consider:
\begin{align}
    \left( \varepsilon_0^{(x_1 x_2)}, \varepsilon_1^{(x_1 x_2)} \right) &\in \Big\{ \left( 0, \tfrac{1}{2} \right), \left( \tfrac{1}{2}, 0 \right), \nonumber \\
    &\hspace{-2em} \left( \tfrac{1}{2} \left( 1 - \tfrac{1}{\sqrt{3}} \right) , \tfrac{1}{2} \left( 1 - \tfrac{1}{\sqrt{3}} \right) \right) \Big\}.
\end{align}

\noindent We thus have three options for what to do for each $(x_1, x_2) \in [4]^2$, meaning that we have $3^{4^2} = 3^{16}$ possibilities to check in total, or about 43 million. This is a vast improvement over $2^{64}$, and can be done by exhaustive search, with the minimal total error achievable among all such circuits being $\eta = \frac{1}{8}$.

Non-adaptive three measurement circuits are a more restrictive category than those of the form of Figure \ref{fig:tetra-three-measurements}, so $\etaNA(\SIC, 3) \geqslant \frac{1}{8}$. In combination with \eqref{eq:tetra-non-adaptive-upper-bound}, we thus obtain:
\begin{equation}
    \etaNA(\SIC, 3) = \tfrac{1}{8} = \tfrac{1}{2^3}.
\end{equation}

\noindent Then, for any odd $N$ greater than or equal to 5, we consider doing an adaptive circuit consisting of one measurement of $\SIC^{\otimes (N - 3)}$, followed by one of $\SIC^{\otimes 3}$ adaptively. By a similar method to Proposition \ref{prop:adaptive-lower-bound}, the eventual total error must be lower bounded by the product of the individual total errors, and so the minimal non-adaptive total error is also lower bounded by this product:
\begin{equation}
    \etaNA(\SIC, N) \geqslant \etaNA(\SIC, N - 3) \ \etaNA(\SIC, 3). \label{eq:tetra-non-adaptive-3}
\end{equation}

\noindent Since $N - 3$ is even, we combine \eqref{eq:tetra-non-adaptive-even} and \eqref{eq:tetra-non-adaptive-3} to deduce that:
\begin{equation}
    \etaNA(\SIC, N) \geqslant \tfrac{1}{2^{N-3}} \tfrac{1}{8} = \tfrac{1}{2^N}.
\end{equation}

\noindent This gives the final lower bound to match the upper bound in \eqref{eq:tetra-non-adaptive-upper-bound}, so for all $N \geqslant 2$:
\begin{equation}
    \etaNA(\SIC, N) = \tfrac{1}{2^N}.
\end{equation}

\noindent This shows that there is an adaptive advantage for any number of uses of the SIC-POVM greater than one, with the relative advantage \eqref{eq:relative-adaptive-advantage} being $R = \frac{1}{4} (3 + \sqrt{3}) \approx 1.18$. We note here the stark contrast to the trine; despite both POVMs being similarly rank one and symmetric, the trine never sees any adaptive advantage, while the SIC-POVM always does.

\subsubsection{The generalised symmetric imperfect $Z$}

\noindent All of the examples we have been using so far are qubit measurements, but the mathematics doesn't limit us to just these. In fact, we can use a POVM in any dimension (referred to as $h$ earlier), even if the measurement we are attempting to approximate remains the projective $Z$ measurement on a qubit. We will still suppose that the input is a qubit, specifically either $\ket{0}$ or $\ket{1}$, but it will need to be processed to a higher-dimensional Hilbert space before measuring.

As such an example, we define a POVM we call the \emph{generalised symmetric imperfect $Z$}, where for any $m \geqslant 2$ and $t \in [0,1]$, the POVM is defined on a system in $h = m$ dimensions by the $m$ elements:
\begin{equation}
    \mathbb{S}_{m,t} = \left\{ S_x = (1-t) \ketbra{x}{x} + \tfrac{t}{m-1} \sum_{y \neq x} \ketbra{y}{y} \right\}_{x \in [m]}. \label{eq:generalised-symmetric-imperfect-Z}
\end{equation}

\noindent In words, given an input $\ket{\psi} = \ket{i}$, this POVM identifies it correctly with probability $1-t$, but with probability $t$ it makes an error and outputs one of the other $m - 1$ results with equal probability. For $m = 2$, this is exactly the symmetric ($p = q = t$) case of the imperfect $Z$.

We notice that $\mathbb{S}_{m,t}$ is an abelian POVM, diagonal in the computational basis, so the most general non-adaptive circuit can be written in the form of Figure \ref{fig:non-adaptive-abelian}. Moreover, we can take the $\mathcal{A}_k$ to be isometries, mapping $\ketbra{0}{0}$ and $\ketbra{1}{1}$ to distinct computational basis states $\ketbra{i_0}{i_0}$ and $\ketbra{i_1}{i_1}$. By the symmetry of $\mathbb{S}_{m,t}$, we can take $\ketbra{i_0}{i_0} = \ketbra{0}{0}$, $\ketbra{i_1}{i_1} = \ketbra{1}{1}$ without loss of generality, since any other such isometry is equivalent up to a permutation before and after the measurement $\mathbb{S}_{m,t}$. This tells us exactly the pre-processing circuit we should use in the non-adaptive case, and hence by combining with maximum likelihood decoding, we get the optimal non-adaptive circuit.

In the adaptive case, a similar argument applies. As we did with $\mathbb{Z}_{p,q}$ in Section \ref{sec:example}, we can argue that all pre-processing should keep states in the diagonalising basis of the abelian POVM (in this case, the computational basis), and by symmetry of the POVM elements, all the pre-processing before measurements can be taken to be $\ketbra{0}{0} \mapsto \ketbra{0}{0}$, $\ketbra{1}{1} \mapsto \ketbra{1}{1}$ as in the non-adaptive case. Thus the optimal adaptive circuit reduces to exactly the optimal non-adaptive circuit, so adaptivity will not help for $\mathbb{S}_{m,t}$ (just as it didn't for $\mathbb{Z}_{p,q}$ when $p = q$).

We are left to find what the maximum likelihood decoding of any string of measurement results is. The probabilities of obtaining each string $\mathbf{x} \in [m]^N$ for each possible input to the circuit are:
\begin{align}
    &\mathbb{P} \left( \mathbf{x} \mid \ket{0} \right) = \prod_{k=1}^N \bra{0} S_{x_k} \ket{0} \nonumber \\
    &\hspace{2ex} = \left( 1 - t \right)^{\#\{x_k = 0\}} \left( \tfrac{t}{m-1} \right)^{\#\{x_k = 1\}} \left( \tfrac{t}{m-1} \right)^{\#\{x_k > 1\}}; \nonumber \\
    &\mathbb{P} \left( \mathbf{x} \mid \ket{1} \right) = \prod_{k=1}^N \bra{1} S_{x_k} \ket{1} \nonumber \\
    &\hspace{2ex} = \left( \tfrac{t}{m-1} \right)^{\#\{x_k = 0\}} \left( 1 - t \right)^{\#\{x_k = 1\}} \left( \tfrac{t}{m-1} \right)^{\#\{x_k > 1\}}.
\end{align}

\noindent Maximum likelihood decoding then falls into two cases. If $t \leqslant \frac{m-1}{m}$ (i.e. not making an error is more likely than any individual error), then $1-t \geqslant \frac{t}{m-1}$, so we decode based on which of 0 and 1 come up more often:
\begin{equation}
    \mathbf{x} \mapsto \begin{cases} 0 & \text{if } \#\{x_k = 1\} \leqslant \#\{x_k = 0\}; \\ 1 & \text{if } \#\{x_k = 1\} > \#\{x_k = 0\} ,\end{cases}
\end{equation}

\noindent and the error rates are:
\begin{align}
    \varepsilon_0 &= \mathbb{P} \left( \#\{x_k = 1\} > \#\{x_k = 0\} | \ket{0} \right); \nonumber \\
    \varepsilon_1 &= \mathbb{P} \left( \#\{x_k = 1\} \leqslant \#\{x_k = 0\} | \ket{1} \right). \label{eq:gen-sym-errors-1}
\end{align}

\noindent If however $t > \frac{m-1}{m}$ (i.e. all possible errors are each more likely than no error), we must instead choose whichever of 0 and 1 come up less often:
\begin{equation}
    \mathbf{x} \mapsto \begin{cases} 0 & \text{if } \#\{x_k = 1\} \geqslant \#\{x_k = 0\}; \\ 1 & \text{if } \#\{x_k = 1\} < \#\{x_k = 0\}. \end{cases}
\end{equation}

\noindent and the error rates are instead:
\begin{align}
    \varepsilon_0 &= \mathbb{P} \left( \#\{x_k = 1\} < \#\{x_k = 0\} | \ket{0} \right); \nonumber \\
    \varepsilon_1 &= \mathbb{P} \left( \#\{x_k = 1\} \geqslant \#\{x_k = 0\} | \ket{1} \right). \label{eq:gen-sym-errors-2}
\end{align}

\noindent In both cases, it doesn't matter what we choose when the number of 0s and 1s are equal, so we choose to process to 0 arbitrarily.

Since we know what the optimal circuit is, calculating the minimal total error for $\mathbb{S}_{m,t}$ requires only calculating the errors $\varepsilon_0$ and $\varepsilon_1$ for this circuit. While we won't be able to do this exactly, we can find an upper bound. 

First, we need a result about the binomial distribution, which we derive from the Chernoff bound found by Hoeffding \cite{hoeffding-1963}:

\begin{lemma}
    Let $p \leqslant \frac{1}{2}$. Then:
    \begin{equation}
        \mathbb{P} \left( \mathrm{Bin}(n,p) \geqslant \tfrac{1}{2} n \right) \leqslant \left( 2 \sqrt{p(1-p)} \right)^n,
    \end{equation}
    
    \noindent where $\mathrm{Bin}(n,p)$ is the binomial distribution with $n$ trials and probability of success $p$. \label{lem:binomial-deviation}
\end{lemma}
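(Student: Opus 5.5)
The plan is to use the exponential-moment (Chernoff) method, optimise the free parameter, and recognise the resulting expression as the relative-entropy form of Hoeffding's bound evaluated at the threshold $\frac{1}{2}$.

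Let $X = \mathrm{Bin}(n,p)$. For any $s \geqslant 0$, Markov's inequality applied to $e^{sX}$ gives
\begin{equation}
    \mathbb{P}\left( X \geqslant \tfrac{1}{2} n \right) \leqslant e^{-sn/2}\, \mathbb{E}\left[e^{sX}\right] = \left( e^{-s/2} \left(1 - p + p e^{s}\right) \right)^n .
\end{equation}
It then suffices to choose $s$ so that the bracket equals $2\sqrt{p(1-p)}$.

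Minimising $e^{-s/2}(1-p) + p\, e^{s/2}$ over $s$ gives $e^{s} = \frac{1-p}{p}$. This choice is admissible, that is $s \geqslant 0$, precisely because $p \leqslant \frac{1}{2}$; this is where the hypothesis is used. Substituting, the bracket becomes
\begin{equation}
    \sqrt{\tfrac{p}{1-p}}\,(1-p) + p\sqrt{\tfrac{1-p}{p}} = 2\sqrt{p(1-p)},
\end{equation}
which yields the claim.

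Equivalently, this is Hoeffding's bound $\mathbb{P}(X \geqslant an) \leqslant e^{-n D(a\|p)}$ at $a = \frac{1}{2}$. There one checks that $e^{-D(1/2\|p)} = \sqrt{p}\sqrt{1-p}\cdot 2$, since $D(\tfrac12\|p) = -\log 2 - \tfrac12 \log p - \tfrac12\log(1-p)$.

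Two edge cases need a brief remark rather than a substantive argument. If $p = 0$, the probability is $0$ for $n \geqslant 1$ and the bound holds trivially; the case $n = 0$ gives $1 \leqslant 1$. At $p = \frac{1}{2}$ the bound is $1$, which is trivially true, and there $s = 0$. The only step needing any care is checking that the optimiser satisfies $s \geqslant 0$, which as noted follows from $p \leqslant \frac{1}{2}$.
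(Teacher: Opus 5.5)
Your proof is correct and is essentially the paper's argument. The paper cites Hoeffding's bound $\mathbb{P}(\mathrm{Bin}(n,p)\geqslant an)\leqslant e^{-nD(a\|p)}$ at $a=\tfrac12$ and treats $p=\tfrac12$ separately, whereas you rederive that bound directly by the exponential-moment method, which makes the argument self-contained and shows that $p\leqslant\tfrac12$ is used exactly to ensure the optimiser satisfies $s\geqslant 0$.
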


\begin{proof}
    See Appendix \ref{sec:proof-binomial-deviation}.
\end{proof}

\noindent We can turn this into a result about particular multinomial distributions, mirroring the distribution we get from measuring generalised symmetric imperfect $Z$:

\begin{lemma}
    Let $X$ be a random variable with distribution:
    \begin{equation}
        X = \begin{cases} 0 & \text{with probability } 1-t; \\ x & \text{with probability } \frac{t}{m-1} \quad \text{ for all } x \in [m] \setminus \{0\}. \end{cases}
    \end{equation}

    \noindent Let $\mathbf{X}_N = (X_k)_{k = 1}^N$ be a vector of iid random variables with this same distribution. Then, for any $j \neq 0$:
    \begin{itemize}
        \item[(a)] If $t \leqslant \frac{m-1}{m}$, (i.e. $0$ more likely than the rest) then:
        \begin{equation}
            \mathbb{P} \left( \#\{X_k = j\} \geqslant \#\{X_k = 0\} \right) \leqslant \left( \tfrac{(m-2)t}{m-1} + 2 \sqrt{\tfrac{t(1-t)}{m-1}} \right)^N.
        \end{equation}
        \item[(b)] If $t \geqslant \frac{m-1}{m}$, (i.e. $0$ less likely than the rest) then:
        \begin{equation}
            \mathbb{P} \left( \#\{X_k = j\} \leqslant \#\{X_k = 0\} \right) \leqslant \left( \tfrac{(m-2)t}{m-1} + 2 \sqrt{\tfrac{t(1-t)}{m-1}} \right)^N.
        \end{equation}
    \end{itemize} \label{lem:multivariate-deviation}
\end{lemma}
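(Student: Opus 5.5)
Part (a) can be reduced to Lemma \ref{lem:binomial-deviation} by conditioning on which samples land in $\{0,j\}$. Let $K = \#\{k : X_k \in \{0, j\}\}$. Given $K = n$, the $n$ relevant samples are iid, each equal to $j$ with probability
\begin{equation}
    s = \frac{t/(m-1)}{1-t+t/(m-1)}.
\end{equation}
So conditionally, $\#\{X_k = j\} \sim \mathrm{Bin}(n, s)$. The event $\#\{X_k=j\} \geqslant \#\{X_k=0\}$ is exactly the event $\mathrm{Bin}(n,s) \geqslant \tfrac12 n$.

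Meanwhile $K \sim \mathrm{Bin}(N, a)$ with $a = 1-t+\tfrac{t}{m-1}$, so $1-a = \tfrac{(m-2)t}{m-1}$. In case (a), $t \leqslant \frac{m-1}{m}$ gives $1-t \geqslant \frac{t}{m-1}$, hence $s \leqslant \frac12$, and Lemma \ref{lem:binomial-deviation} bounds the conditional probability by $(2\sqrt{s(1-s)})^n$. The case $n=0$ is automatically included, since the event holds trivially and the bound equals $1$.

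The law of total probability then gives
\begin{equation}
    \mathbb{P} \leqslant \sum_{n=0}^N \binom{N}{n} a^n (1-a)^{N-n} \left(2\sqrt{s(1-s)}\right)^n = \left( (1-a) + 2a\sqrt{s(1-s)} \right)^N
\end{equation}
by the binomial theorem. Finally we simplify: $a s = \frac{t}{m-1}$ and $a(1-s) = 1-t$, so $a\sqrt{s(1-s)} = \sqrt{\frac{t(1-t)}{m-1}}$. This yields exactly the bracket in the statement.

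For (b), apply the same argument with the roles of $0$ and $j$ swapped. The event $\#\{X_k=j\} \leqslant \#\{X_k=0\}$ is the event $\#\{X_k = 0\} \geqslant \tfrac12 K$. Conditionally, $\#\{X_k=0\} \sim \mathrm{Bin}(n, 1-s)$, and now $t \geqslant \frac{m-1}{m}$ gives $1-s \leqslant \frac12$, so Lemma \ref{lem:binomial-deviation} applies. The bound $2\sqrt{s(1-s)}$ is symmetric in $s \leftrightarrow 1-s$, so the same computation gives the same expression.

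The only delicate step is the conditioning reduction: checking that the conditional distribution is genuinely binomial with parameter $s$, and that the non-strict inequalities match the form "$\geqslant \frac12 n$" required by Lemma \ref{lem:binomial-deviation}. Everything after that is the binomial-theorem resummation.
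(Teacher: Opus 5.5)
Your proposal is correct and follows the paper's own proof: condition on the number of samples landing in $\{0,j\}$, apply Lemma~\ref{lem:binomial-deviation} to the conditional binomial with parameter $\tfrac{t}{(m-1)-(m-2)t}$ (or its complement in case (b)), and resum with the binomial theorem. Your simplification via $as = \tfrac{t}{m-1}$ and $a(1-s) = 1-t$ is a slightly cleaner way of doing the same final algebra.
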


\begin{proof}
    See Appendix \ref{sec:proof-multivariate-deviation}.
\end{proof}

\noindent In the case that $t \leqslant \frac{m-1}{m}$, we refer to \eqref{eq:gen-sym-errors-1}. For $\varepsilon_0$, we consider an input of $\ket{0}$ to our circuit, which gives each measurement result exactly the distribution given in Lemma \ref{lem:multivariate-deviation}(a). Hence, applying the lemma with $j = 1$ gives:
\begin{align}
    \varepsilon_0 &\leqslant \mathbb{P} \left( \#\{x_k = 1\} \geqslant \#\{x_k = 0\} | \ket{0} \right) \nonumber \\
    &\leqslant \left( \tfrac{(m-2)t}{m-1} + 2 \sqrt{\tfrac{t(1-t)}{m-1}} \right)^N.
\end{align}

\noindent For $\varepsilon_1$, the input state is now $\ket{1}$, but we can still apply Lemma \ref{lem:multivariate-deviation}(a) by swapping the labelling of 0 and 1 in the distribution given there. This gives us immediately:
\begin{align}
    \varepsilon_1 &= \mathbb{P} \left( \#\{x_k = 1\} \leqslant \#\{x_k = 0\} | \ket{1} \right) \nonumber \\
    &\leqslant \left( \tfrac{(m-2)t}{m-1} + 2 \sqrt{\tfrac{t(1-t)}{m-1}} \right)^N.
\end{align}

\noindent When $t > \frac{m-1}{m}$, we now refer to \eqref{eq:gen-sym-errors-2} instead. Compared to the previous case, the inequalities between $\#\{x_k = 0\}$ and $\#\{x_k = 1\}$ have swapped direction, but this swap is also present in Lemma \ref{lem:multivariate-deviation}(b), so we may apply it similarly:
\begin{align}
    \varepsilon_0 &\leqslant \mathbb{P} \left( \#\{x_k = 1\} \leqslant \#\{x_k = 0\} | \ket{0} \right) \nonumber \\
    &\leqslant \left( \tfrac{(m-2)t}{m-1} + 2 \sqrt{\tfrac{t(1-t)}{m-1}} \right)^N; \nonumber \\
    \varepsilon_1 &= \mathbb{P} \left( \#\{x_k = 1\} \geqslant \#\{x_k = 0\} | \ket{1} \right) \nonumber \\
    &\leqslant \left( \tfrac{(m-2)t}{m-1} + 2 \sqrt{\tfrac{t(1-t)}{m-1}} \right)^N.
\end{align}

\noindent Therefore, regardless of the value of $t$:
\begin{equation}
    \etaNA(\mathbb{S}_{m,t}, N) \leqslant 2 \left( \tfrac{(m-2)t}{m-1} + 2 \sqrt{\tfrac{t(1-t)}{m-1}} \right)^N,
\end{equation}

\noindent where, as a reminder, a bound on $\etaNA(\mathbb{S}_{m,t}, N)$ is the same as a bound on $\etaA(\mathbb{S}_{m,t}, N)$, since the generalised symmetric imperfect $Z$ never has an adaptive advantage.

\section{Qudit measurements} \label{sec:qudit-measurements}

\noindent So far, we have been trying to emulate a projective $Z$ measurement on a qubit. A natural question to ask is how our results generalise if we change the measurement we want to approximate. One avenue to take this down would be to consider an arbitrary target POVM, for instance trying to approximate the trine measurement instead. However, there are two reasons we don't consider this. Firstly, our figure of merit as it stands doesn't make much sense for target measurements that don't have definite outcomes, so we would have to replace it with one that compares the output distributions as a whole as opposed to single-shot error probabilities. Also, as shown in \cite{linden-2025}, any projective measurement can be used in a subroutine to exactly implement any POVM with the same number of outcomes, assuming arbitrary unitary operations can be implemented perfectly and arbitrary ancillas can be adjoined, as we have done throughout. Therefore, if we know how best to approximate a projective measurement on a $d$-dimensional system (i.e. a qudit), for $d \geqslant 2$, then this will give us an idea of how well we can approximate any POVM with $d$ outcomes\footnote{Of course, this isn't perfect. For instance, a trine cannot perfectly implement a projective measurement on a qutrit, so this would suggest it cannot perfectly implement any three outcome POVM. However, we know it can perfectly implement itself, without using the projective measurement subroutine. Therefore we won't be able to make any claims about optimality when approximating a non-projective measurement, but the upper bounds on the total error remain relevant.}.

In this section, we begin in Section \ref{sec:qudit-setup} by seeing how much of the work we did in the qubit case will be transferrable to the qudit case. We will see that the optimisation, while still finite, becomes much more demanding, so instead of focusing on finding the minimal total error exactly, we will bound it as well as possible. We will obtain a lower bound in the general case in Corollary \ref{cor:adaptive-lower-bound-d}, and while we don't find a general upper bound, in Section \ref{sec:qudit-examples} we will prove asymptotic results in Propositions \ref{prop:asymptotic-upper-bound} and \ref{prop:generalised-asymptotic-upper-bound}, through the consideration of our example measurements as we had before.

As with the previous section, proofs of stated results can be found in Appendix \ref{sec:proofs}.

\subsection{Moving from qubits to qudits} \label{sec:qudit-setup}

\noindent Our task is now to approximate the generalised projective $Z$ measurement on a qudit, i.e. the projective measurement onto the computational basis $\{ \ket{0}, \ket{1}, ..., \ket{d-1} \}$, for any given finite dimension $d \geqslant 2$. We will investigate what results from Section \ref{sec:qubit-measurements} extend immediately to higher dimensions, and what results no longer hold. We emphasise that all that is being changed is the input dimension to the circuit, from $2$ to $d$, and that the dimension of the system that the POVM measures, which we have usually called $h$, has no relation to this and needn't change. 

Our input is now $\ket{\psi} \in \mathbb{C}^d$, and our output is now $j \in [d]$. However, as we will define our figure of merit in terms of only the computational basis state inputs $\ket{\psi} = \ket{i} \in \{ \ket{0}, \ket{1}, ..., \ket{d-1} \}$, we will usually assume that $\ket{\psi}$ is one of these. Instead of just two error probabilities $\varepsilon_0$ and $\varepsilon_1$, we now have $d(d-1)$, which we collect along with the success probabilities into a $d \times d$ column stochastic matrix:
\begin{equation}
    \mathcal{E} = (\varepsilon_{j|i})_{j,i \in [d]} \quad \text{for } \varepsilon_{j|i} = \mathbb{P}(\text{output } j \ | \ \text{input } \ket{i}).
\end{equation}

\noindent Then, the total error is:
\begin{equation}
    \eta = \sum_{j \neq i} \varepsilon_{j|i} = d - \mathrm{tr}(\mathcal{E}). \label{eq:total-error-d}
\end{equation}

\noindent Note that this definition of total error is consistent with the previous one \eqref{eq:errors-2}, \eqref{eq:total-error-2} when $d = 2$. The error rates correspond as $\varepsilon_0 = \varepsilon_{1|0}$ and $\varepsilon_1 = \varepsilon_{0|1}$, with:
\begin{equation}
    \mathcal{E} = \begin{pmatrix} 1-\varepsilon_0 & \varepsilon_1 \\ \varepsilon_0 & 1-\varepsilon_1 \end{pmatrix}, \quad \eta = 2 - \mathrm{tr}(\mathcal{E}) = \varepsilon_0 + \varepsilon_1.
\end{equation}

\noindent Non-adaptively, Figure \ref{fig:non-adaptive} still depicts the most general circuit with $N$ uses of $\mathbb{M}$. If we let $\rho_i = \mathcal{A}(\ketbra{i}{i})$, then the errors are given by:
\begin{equation}
    \varepsilon_{j|i} = \mathrm{tr} \left( \rho_i Q_j^{(f)} \right),
\end{equation}

\noindent for effective POVM elements:
\begin{equation}
    Q_j^{(f)} = \sum_{\mathbf{x} : f(\mathbf{x}) = j} M_\mathbf{x}.
\end{equation}

\noindent where as before, $M_\mathbf{x} = M_{x_1} \otimes M_{x_2} \otimes ... \otimes M_{x_N}$, and $f: [m]^N \to [d]$ is the classical post-processing function.

We define the minimal non-adaptive total error as $\etaNA_d(\mathbb{M}, N)$, where the subscript $d$ denotes the dimension of the projective measurement we want to approximate. We can either fix $f$ to get the equivalent of \eqref{eq:maximise-spectral-diameter}:
\begin{equation}
    \etaNA_d(\mathbb{M}, N) = d - \max_{f} \left\{ \sum_{i \in [d]} \lambda_\mathrm{max} \left( Q_i^{(f)} \right) \right\}, \label{eq:total-error-postprocessing-d}
\end{equation}

\noindent or we can fix the measured states $\{ \rho_i \}_{i \in [d]}$, and obtain the equivalent of \eqref{eq:total-error-preprocessing}:
\begin{equation}
    \etaNA_d(\mathbb{M}, N) = d - \max_{\{ \mathbf{q}_i \} \subseteq \mathcal{R}(\mathbf{M}^{\otimes N})} \left\{ \sum_{\mathbf{x} \in [m]^N} \max_{i \in [d]} \left\{ q_i^{(\mathbf{x})} \right\} \right\}, \label{eq:total-error-preprocessing-d}
\end{equation}

\noindent where for each $i \in [d]$, $\mathbf{q}_i$ is the output distribution from measuring $\rho_i$ with $\mathbb{M}^{\otimes N}$ given by:
\begin{equation}
    \mathbf{q}_i = \left( q_i^{(\mathbf{x})} \right)_{\mathbf{x} \in [m]^N} = \left( \mathrm{tr} \left( \rho_i M_\mathbf{x} \right) \right)_{\mathbf{x} \in [m]^N}.
\end{equation}

\noindent Furthermore, if $\mathbb{M}$ is abelian, we can again simplify the circuit to the form of Figure \ref{fig:non-adaptive-abelian}, and simplify the optimisation \eqref{eq:total-error-preprocessing-d} to:
\begin{equation}
    \etaNA_d(\mathbb{M}, N) = d - \max_{\{ \mathbf{a}_i \}_{i \in [d]} \subseteq [h]^N} \left\{ \sum_{\mathbf{x} \in [m]^N} \max_{i \in [d]} \left\{ q_{\mathbf{a}_i}^{(\mathbf{x})} \right\} \right\}.
\end{equation}

\noindent where $h$ is the dimension of the space that $\mathbb{M}$ acts on, all POVM elements of $\mathbb{M}$ are diagonal in the basis $\{ \ket{v_a} \}_ {a \in [h]}$, and:
\begin{equation}
    q_{\mathbf{a}_i}^{(\mathbf{x})} = \prod_{k = 1}^N \mathrm{tr} \left( \mathcal{A}_k (\ketbra{i}{i}) M_{x_k} \right) =  \prod_{k = 1}^N \bra{v_{a_{ik}}} M_{x_k} \ket{v_{a_{ik}}}.
\end{equation}

\noindent However, we note that it is no longer possible to assume that the channels $\mathcal{A}_k$ in Figure \ref{fig:non-adaptive-abelian} are isometries; indeed it is now possible to have $d > h$ in which case this is impossible.

As for the adaptive case, the general circuit of Figure \ref{fig:adaptive} can still be simplified by the process shown in Figure \ref{fig:adaptive-recursive}(a),(b) to the form of Figure \ref{fig:adaptive-recursive}(c). The error matrices $\mathcal{E}^{(\mathbf{x})} = \left( \varepsilon_{j|i}^{(\mathbf{x})} \right)$ obey the recursive relation:
\begin{equation}
    \varepsilon_{j|i}^{(\mathbf{x})} = \sum_{x \in [m]} \mathrm{tr} \left( \mathcal{B}^{(\mathbf{x})} ( \ketbra{i}{i} ) M_x \right) \, \varepsilon_{j|i}^{(\mathbf{x},x)},
\end{equation}

\noindent and so we have the recursively defined single-circuit achievable error sets:
\begin{align}
    E_d(\mathbb{M}, 0) &= \{ \mathbf{e}_j \mathbf{1}^T : j \in [d] \}; \\ 
    E_d(\mathbb{M}, N) &= \Bigg\{ \left( \sum_{x \in [m]} \varepsilon_{j|i}^{(x)} q_i^{(x)} \right)_{j,i \in [d]} : \nonumber \\[2ex]
    &\hspace{2em} \left( \varepsilon_{j|i}^{(x)} \right)_{j,i \in [d]} \in E_d(\mathbb{M}, N-1) \ \forall x \in [m], \nonumber \\
    &\hspace{4em} \mathbf{q}_i = \left( q_i^{(x)} \right)_{x \in [m]} \in \mathcal{R}(\mathbb{M}) \, \forall i \in [d] \Bigg\},
\end{align}

\noindent with the achievable error sets being their convex hulls:
\begin{align}
    \bar{E}_d(\mathbb{M},N) = \mathrm{conv}(E_d(\mathbb{M},N)).
\end{align}

\noindent Proposition \ref{prop:achievable-set-is-achievable} remains true, where $\mathcal{E} \in \bar{E}_d(\mathbb{M},N)$ if and only if a circuit exists with $N$ adaptive uses of $\mathbb{M}$ that gives error matrix $\mathcal{E}$. The proof remains essentially the same; any achievable error matrix $\mathcal{E}$ must lie in $\bar{E}_d(\mathbb{M},N)$ by reducing the circuit to the form of Figure \ref{fig:adaptive-recursive} and using induction, while the recursive construction of the sets gives a recipe to find a circuit that achieves any error matrix $\mathcal{E} \in \bar{E}_d(\mathbb{M},N)$. Therefore we define as before:
\begin{equation}
    \etaA_d(\mathbb{M},N) = d - \max_{\mathcal{E} \in E_d(\mathbb{M},N)} \left\{ \mathrm{tr}(\mathcal{E}) \right\}.
\end{equation}

\noindent Letting $\eta_d(\mathbb{M}) = \etaNA_d(\mathbb{M}, 1) = \etaA_d(\mathbb{M}, 1)$, we again have that adaptive cannot do worse than non-adaptive, and that non-adaptive is equivalent to a single measurement of $\mathbb{M}^{\otimes N}$:
\begin{equation}
    \etaA_d(\mathbb{M}, N) \leqslant \etaNA_d(\mathbb{M}, N) = \eta_d(\mathbb{M}^{\otimes N}).
\end{equation}

\noindent Proposition \ref{prop:adaptive-lower-bound} no longer applies, and in fact it is now possible to have $\eta_d(\mathbb{M}) \neq 0$ but $\etaA_d(\mathbb{M},N) = 0$ for some $N \geqslant 2$, so no such inequality can be found\footnote{For instance, suppose we have a projective measurement on a qubit (i.e. $\mathbb{Z}_{0,0}$ up to unitary equivalence), and we want to use it to measure a $d = 4$ dimensional system. This cannot be done perfectly with one use of $\mathbb{Z}_{0,0}$, because we have only two possible measurement outcomes to distinguish four states, but is trivial to do with two uses, by processing $\ket{0} \mapsto \ket{00}$, $\ket{1} \mapsto \ket{01}$, $\ket{2} \mapsto \ket{10}$, $\ket{3} \mapsto \ket{11}$. Therefore, $\eta_4(\mathbb{Z}_{0,0}) > 0$, but $\etaA_4(\mathbb{Z}_{0,0}, 2) = \etaNA_4(\mathbb{Z}_{0,0}, 2) = 0$.}. However, we can still lower bound $\etaA_d(\mathbb{M},N)$, as we'll see shortly.

We can however find an equivalent of Corollary \ref{cor:achievable-sets-as-imperfect-Zs} --- the correspondence between the set $\bar{E}(\mathbb{M},N)$ and the imperfect $Z$ measurements that can be reproduced by $N$ adaptive uses of $\mathbb{M}$. For a $d \times d$ column stochastic matrix $\mathcal{E}$, we define the abelian qudit POVM $\mathbb{M}_\mathcal{E}$ by elements:
\begin{equation}
    \mathbb{M}_\mathcal{E} = \left\{ M_x = \sum_{i \in [d]} \varepsilon_{x|i} \ketbra{i}{i} \right\}_{x \in [d]}. \label{eq:abelian-povm-from-matrix}
\end{equation}

\noindent Then for all $\mathcal{E} \in \bar{E}(\mathbb{M},N)$, one measurement of $\mathbb{M}_\mathcal{E}$ can be reproduced by $N$ adaptive measurements of $\mathbb{M}$, by the same method as Lemma \ref{lem:reproduction-of-noisy-Z} and Corollary \ref{cor:achievable-sets-as-imperfect-Zs}. In particular, we apply this to the generalised symmetric imperfect $Z$ measurement $\mathbb{S}_{d,\frac{1}{d} \eta_d^{(A)} (\mathbb{M},N)}$ (defined in general in \eqref{eq:generalised-symmetric-imperfect-Z}):

\begin{lemma}
    For any POVM $\mathbb{M}$ and $N \geqslant 0$, define the matrix $\mathcal{E} = (\varepsilon_{j|i})_{j,i \in [d]}$ by:
    \begin{equation}
        \varepsilon_{j|i} = \begin{cases} 1 - \tfrac{1}{d} \eta_d^{(A)} (\mathbb{M},N) & \text{if } j = i; \\ \tfrac{1}{d(d-1)} \eta_d^{(A)} (\mathbb{M},N) & \text{if } j \neq i. \end{cases}
    \end{equation}
    
    \noindent Then, $\mathcal{E} \in \bar{E}_d(\mathbb{M}, N)$, and so $N$ adaptive uses of $\mathbb{M}$ can reproduce one use of the generalised symmetric imperfect $Z$ measurement $\mathbb{S}_{d,\frac{1}{d} \eta_d^{(A)} (\mathbb{M},N)} = \mathbb{M}_\mathcal{E}$. \label{lem:symmetric-reproduced-measurement}
\end{lemma}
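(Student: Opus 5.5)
The plan is to symmetrise an optimal error matrix by averaging over relabellings of the input and output, and to check that this averaging stays inside the convex set $\bar{E}_d(\mathbb{M},N)$.

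First, take $\mathcal{E}^* \in E_d(\mathbb{M},N)$ attaining the maximum of $\mathrm{tr}(\mathcal{E})$. The maximum is attained because, as in the qubit case (Proposition \ref{prop:achievable-error-polygon}), the optimisation reduces to finitely many candidates; alternatively, compactness of the recursively defined set suffices. By definition $\mathrm{tr}(\mathcal{E}^*) = d - \eta_d^{(A)}(\mathbb{M},N)$, which is the trace of the target matrix $\mathcal{E}$ in the statement.

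Next, show that $\bar{E}_d(\mathbb{M},N)$ is invariant under simultaneous conjugation by permutations. For a permutation $\pi$ of $[d]$ with permutation matrix $P_\pi$, take a circuit $\mathcal{C}$ achieving $\mathcal{E}^*$. Precede it by the unitary $\ket{i} \mapsto \ket{\pi^{-1}(i)}$, which is allowed because it uses no measurement, and follow it by relabelling the classical output $j \mapsto \pi(j)$. The resulting circuit has error matrix with entries
\[
\varepsilon'_{j|i} = \varepsilon^*_{\pi^{-1}(j)|\pi^{-1}(i)},
\]
that is, $P_\pi \mathcal{E}^* P_\pi^{T}$. It still uses $N$ adaptive measurements of $\mathbb{M}$, so by the qudit version of Proposition \ref{prop:achievable-set-is-achievable} this matrix lies in $\bar{E}_d(\mathbb{M},N)$. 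Its trace is unchanged.

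Then average over all $d!$ permutations:
\[
\bar{\mathcal{E}} = \frac{1}{d!}\sum_{\pi} P_\pi \mathcal{E}^* P_\pi^{T}.
\]
By convexity of $\bar{E}_d(\mathbb{M},N)$ (operationally, a mixed protocol), $\bar{\mathcal{E}} \in \bar{E}_d(\mathbb{M},N)$. Because the symmetric group acts transitively on the diagonal positions and on the ordered off-diagonal pairs $(j,i)$ with $j\neq i$, every diagonal entry of $\bar{\mathcal{E}}$ equals $\frac{1}{d}\mathrm{tr}(\mathcal{E}^*)$. Every off-diagonal entry equals $\frac{1}{d(d-1)}\sum_{j\neq i}\varepsilon^*_{j|i} = \frac{1}{d(d-1)}\eta_d^{(A)}(\mathbb{M},N)$. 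Hence $\bar{\mathcal{E}}$ is exactly the matrix $\mathcal{E}$ of the statement.

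Finally, the reproduction claim follows from the qudit analogue of Lemma \ref{lem:reproduction-of-noisy-Z} and Corollary \ref{cor:achievable-sets-as-imperfect-Zs}. A dephasing channel followed by the circuit realising $\mathcal{E}$ acts as the abelian POVM $\mathbb{M}_\mathcal{E}$ of \eqref{eq:abelian-povm-from-matrix}. Comparing with \eqref{eq:generalised-symmetric-imperfect-Z}, with $m=d$ and $t = \frac{1}{d}\eta_d^{(A)}(\mathbb{M},N)$, we have
\[
M_x = \left(1-t\right)\ketbra{x}{x} + \tfrac{t}{d-1}\sum_{y\neq x}\ketbra{y}{y},
\]
which is exactly $S_x$.

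The main obstacle is justifying that the relabelled circuit's error matrix is exactly $P_\pi\mathcal{E}^*P_\pi^T$. This needs care with the direction of $\pi$ versus $\pi^{-1}$, and with the fact that the pre-processing unitary can be absorbed into the first channel of the recursive form in Figure \ref{fig:adaptive-recursive}, so that the relabelled circuit is still an $N$-measurement adaptive circuit.
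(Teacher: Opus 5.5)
Your proposal is correct and follows the paper's proof essentially step for step. You conjugate an optimal error matrix by input/output relabellings, average uniformly over $\mathrm{Sym}([d])$ as a mixed protocol, read off the entries (your transitivity argument is the paper's $(d-1)!$ and $(d-2)!$ count), and identify $\mathbb{M}_\mathcal{E}$ with $\mathbb{S}_{d,\frac{1}{d}\eta_d^{(A)}(\mathbb{M},N)}$; you also justify that the optimum is attained, which the paper takes for granted.
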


\begin{proof}
    See Appendix \ref{sec:proof-symmetric-reproduced-measurement}.
\end{proof}

\noindent We can find a lower bound on the minimal total error that a single use of $\mathbb{S}_{m,t}$ can achieve for any $m \geqslant d$ as follows:

\begin{lemma}
    Let $m \geqslant d$ and $t \in [0,1]$. Then:
    \begin{equation}
        \eta_d(\mathbb{S}_{m,t}) \leqslant (d-1) \tfrac{m}{m-1} t.
    \end{equation} \label{lem:gen-sym-total-error}
\end{lemma}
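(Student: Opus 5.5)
Although the text says "lower bound", the statement as written is an upper bound on $\eta_d(\mathbb{S}_{m,t})$, and that is what I will prove. It suffices to exhibit one non-adaptive circuit with a single use of $\mathbb{S}_{m,t}$ whose total error is at most $(d-1)\frac{m}{m-1}t$. Since $m \geqslant d$, I will use the isometry $\ket{i} \mapsto \ket{i}$ for $i \in [d]$, embedding the qudit into the first $d$ computational basis states of $\mathbb{C}^m$. The post-processing function $f : [m] \to [d]$ will be $f(x) = x$ for $x \in [d]$, with every outcome $x \geqslant d$ sent to some fixed output, say $0$.

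Next I compute the diagonal of the error matrix. For input $\ket{i}$ the outcome distribution is $\bra{i} S_x \ket{i}$, which equals $1-t$ for $x = i$ and $\frac{t}{m-1}$ otherwise. For $i \neq 0$, the probability of a correct output is $\varepsilon_{i|i} = 1-t$. For $i = 0$, it is $\varepsilon_{0|0} = (1-t) + (m-d)\frac{t}{m-1} \geqslant 1-t$.

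Then the total error is $\eta = d - \mathrm{tr}(\mathcal{E}) \leqslant d - d(1-t) - (m-d)\frac{t}{m-1}$. This simplifies to $\frac{t}{m-1}\big(d(m-1) - (m-d)\big) = \frac{t}{m-1}\,m(d-1)$, which is exactly the claimed bound. Since $\eta_d(\mathbb{S}_{m,t})$ is a minimum over circuits, it is at most this value.

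The only step needing any care is the bookkeeping for the outcomes $x \geqslant d$. Assigning them all to a single output only increases the trace, so discarding that contribution as an inequality gives the clean $(d-1)\frac{m}{m-1}t$. I would keep the exact expression and then bound it, checking that the arithmetic $d(m-1) - (m-d) = m(d-1)$ closes the argument. No earlier results are needed beyond the definitions \eqref{eq:generalised-symmetric-imperfect-Z} and \eqref{eq:total-error-d}.
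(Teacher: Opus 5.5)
Your proof is correct and is essentially the paper's own argument: the paper uses the same embedding $\ket{i}\mapsto\ket{i}$ into $\mathbb{C}^m$ with any post-processing satisfying $f(x)=x$ on $[d]$ (sending all $x\geqslant d$ to $0$, as you do, is one such choice), and computes the total error exactly as $d-d(1-t)-(m-d)\tfrac{t}{m-1}=(d-1)\tfrac{m}{m-1}t$. Your ``$\leqslant$'' is in fact an equality, and the remark in your last paragraph about \emph{discarding} the contribution of the outcomes $x\geqslant d$ is misstated: dropping the $(m-d)\tfrac{t}{m-1}$ term would only give the weaker bound $dt$ (since $m\geqslant d$), so that term must be kept, which your actual computation in the third paragraph does.
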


\begin{proof}
    See Appendix \ref{sec:proof-gen-sym-total-error}.
\end{proof}

\noindent Hence, combining Lemmas \ref{lem:symmetric-reproduced-measurement} and \ref{lem:gen-sym-total-error}, we can relate the minimal adaptive total error for $N$ uses of $\mathbb{M}$ in any two different dimensions $d' \leqslant d$:

\begin{proposition}
    Let $d' \leqslant d$. Then, for any POVM $\mathbb{M}$ and $N \geqslant 0$, we have:
    \begin{equation}
        \tfrac{1}{d'-1} \eta_{d'}^{(\mathrm{A})}(\mathbb{M},N) \leqslant \tfrac{1}{d-1} \etaA_d(\mathbb{M},N).
    \end{equation} \label{prop:qudit-total-error-lower-bound}
\end{proposition}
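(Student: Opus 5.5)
The plan is to chain the two preceding lemmas. We use the measurement reproduced in dimension $d$ as a resource for the dimension-$d'$ task. Write $\eta = \etaA_d(\mathbb{M},N)$. By Lemma \ref{lem:symmetric-reproduced-measurement}, $N$ adaptive uses of $\mathbb{M}$ can reproduce one use of $\mathbb{S}_{d,t}$ with $t = \frac{1}{d}\eta$. Here "reproduce" means: on computational basis inputs (after dephasing), the circuit outputs $j$ with exactly the probabilities $\bra{i} S_j \ket{i}$.

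Next we build a dimension-$d'$ circuit. Embed the $d'$-dimensional input into $\mathbb{C}^d$ via $\ket{i}\mapsto\ket{i}$ for $i\in[d']$; this is an isometry, so it is permitted. Any single-use circuit for $\mathbb{S}_{d,t}$ targeting $d'$ outputs can then be realised by first composing with the $N$-use adaptive circuit for $\mathbb{S}_{d,t}$, and then applying the classical post-processing. This uses only $N$ adaptive uses of $\mathbb{M}$. The preprocessing channel before $\mathbb{S}_{d,t}$ only needs to act correctly on basis states, so Lemma \ref{lem:channel-existence} covers it; dephasing is harmless since the inputs are basis states.

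We then conclude
\begin{equation}
\eta_{d'}^{(\mathrm{A})}(\mathbb{M},N) \leqslant \eta_{d'}(\mathbb{S}_{d,t}).
\end{equation}
Lemma \ref{lem:gen-sym-total-error}, applied with $m=d\geqslant d'$ and dimension $d'$, gives
\begin{equation}
\eta_{d'}(\mathbb{S}_{d,t}) \leqslant (d'-1)\tfrac{d}{d-1}\cdot\tfrac{1}{d}\eta = \tfrac{d'-1}{d-1}\eta.
\end{equation}
Dividing by $d'-1$ yields the claim. The case $d'=1$ is degenerate, as the left side is undefined or zero, so we assume $d'\geqslant 2$.

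The main obstacle is making the composition step rigorous: a single use of a reproduced measurement can be substituted by its $N$-use adaptive implementation inside a larger adaptive circuit. In our case the outer circuit uses $\mathbb{S}_{d,t}$ only once, non-adaptively, and it prepares basis states. So it suffices to check the error matrix of the composed circuit. If the optimal use of $\mathbb{S}_{d,t}$ requires a non-basis (mixed) preparation, note that $\mathbb{S}_{d,t}$ is abelian, so the preparation can be taken diagonal. By linearity, the outcome statistics are then mixtures of the basis-state statistics, which the $N$-use circuit reproduces exactly.
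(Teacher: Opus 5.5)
Your proposal is correct and is essentially the paper's proof: use Lemma \ref{lem:symmetric-reproduced-measurement} to reproduce $\mathbb{S}_{d,\frac{1}{d}\etaA_d(\mathbb{M},N)}$ with $N$ adaptive uses of $\mathbb{M}$, then apply Lemma \ref{lem:gen-sym-total-error} with $d\mapsto d'$, $m\mapsto d$ and divide by $d'-1$. Your extra care over substituting the reproduced measurement into the $d'$-dimensional circuit (the embedding, and dephasing/abelianness for non-basis preparations) only makes explicit a step the paper takes for granted.
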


\begin{proof}
    See Appendix \ref{sec:proof-qudit-total-error-lower-bound}.
\end{proof}

\noindent In particular, by taking $d' = 2$ and combining with Proposition \ref{prop:adaptive-lower-bound}, we obtain the following lower bound for the minimal adaptive total error:

\begin{corollary}
    For any POVM $\mathbb{M}$, and any $d \geqslant 2$:
    \begin{equation}
        \etaA_d(\mathbb{M},N) \geqslant (d-1) \, \eta_2(\mathbb{M})^N.
    \end{equation} \label{cor:adaptive-lower-bound-d}
\end{corollary}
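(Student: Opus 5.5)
This is a direct chaining of Proposition \ref{prop:qudit-total-error-lower-bound} with Proposition \ref{prop:adaptive-lower-bound}, so the plan is short.

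First, I apply Proposition \ref{prop:qudit-total-error-lower-bound} with $d' = 2 \leqslant d$. Since $d'-1 = 1$, it gives
\begin{equation}
    \eta_2^{(\mathrm{A})}(\mathbb{M},N) \leqslant \tfrac{1}{d-1} \etaA_d(\mathbb{M},N),
\end{equation}
that is, $\etaA_d(\mathbb{M},N) \geqslant (d-1)\,\eta_2^{(\mathrm{A})}(\mathbb{M},N)$.

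Second, I identify $\eta_2^{(\mathrm{A})}(\mathbb{M},N)$ with the qubit quantity $\etaA(\mathbb{M},N)$ of Section \ref{sec:qubit-measurements}. This holds because the $d=2$ error sets $E_2(\mathbb{M},N)$ coincide with $E(\mathbb{M},N)$ under $\varepsilon_0=\varepsilon_{1|0}$, $\varepsilon_1=\varepsilon_{0|1}$. Likewise $\eta_2(\mathbb{M}) = \eta(\mathbb{M})$. Proposition \ref{prop:adaptive-lower-bound} then gives $\eta_2^{(\mathrm{A})}(\mathbb{M},N) \geqslant \eta_2(\mathbb{M})^N$. Multiplying by $d-1 \geqslant 1$ and combining with the first step yields the claim.

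The only point needing care is the consistency of notation between the qubit and $d=2$ qudit definitions. This follows directly from the definitions: $E_d(\mathbb{M},0)=\{\mathbf{e}_j\mathbf{1}^T\}$ for $d=2$ gives the matrices corresponding to $(0,1)$ and $(1,0)$, and the recursions agree entrywise. Since all the substantive work lives in Proposition \ref{prop:qudit-total-error-lower-bound}, there is no real obstacle here.
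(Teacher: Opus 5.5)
Your proposal is correct and is exactly the paper's argument: set $d'=2$ in Proposition \ref{prop:qudit-total-error-lower-bound}, then combine with Proposition \ref{prop:adaptive-lower-bound}. Your check that $\eta_2^{(\mathrm{A})}(\mathbb{M},N)$ and $\eta_2(\mathbb{M})$ coincide with the qubit quantities $\etaA(\mathbb{M},N)$ and $\eta(\mathbb{M})$ makes explicit a notational identification that the paper leaves implicit.
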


\noindent Thus, if one use of $\mathbb{M}$ cannot reproduce a projective measurement on a qubit ($\eta_2(\mathbb{M}) > 0$), then an arbitrary finite number of uses cannot perfectly reproduce any projective measurement, only able to approach zero error asymptotically.

\subsection{Further examples} \label{sec:qudit-examples}

\noindent Now, we look again at our example POVMs from Section \ref{sec:qubit-examples}, and see what they can achieve in approximating a $d$ dimensional projective measurement. In such generality, we don't aim to compute the minimal total error, instead just looking for upper bounds to complement Proposition \ref{prop:qudit-total-error-lower-bound} and Corollary \ref{cor:adaptive-lower-bound-d}. Indeed, we find some interesting results (namely Propositions \ref{prop:asymptotic-upper-bound} and \ref{prop:generalised-asymptotic-upper-bound}) by looking at the asymptotic properties of $\eta_d^{(\mathrm{NA/A})}(\mathbb{M}, N)$ as $N \to \infty$, rather than solely exact computations for finite $N$ as we focused on before.

\subsubsection{Imperfect $Z$}

\noindent For the imperfect $Z$ measurement \eqref{eq:imperfect-Z}, we will upper bound $\etaNA_d$ by constructing a non-adaptive circuit, although we will require a couple of concessions. First, we will take $p = q < \frac{1}{2}$, so we only consider symmetric imperfect $Z$ measurements. Secondly, we assume that $N = 2^d n$ is a multiple of $2^d$, as the circuit we have chosen to analyse will require each of $2^d$ channels to be applied the same number of times before a measurement. As we will focus on asymptotics, this is not an unreasonable restriction.

$\mathbb{Z}_{p,p}$ is an abelian POVM, so we refer to Figure \ref{fig:non-adaptive-abelian} for the circuit we should use. The channels $\mathcal{A}_k$ must map each input $\ket{i}$ to either $\ket{0}$ or $\ket{1}$, so we have $2^d$ options for each channel. For each $k$, we let:
\begin{equation}
    A_k = \{ i \in [d] : \mathcal{A}_k (\ketbra{i}{i}) = \ketbra{1}{1} \}
\end{equation}

\noindent i.e. the set of inputs that are mapped to $\ket{1}$, with channel $\mathcal{A}_k$ determined uniquely from $A_k$ as the rest of the inputs will be mapped to $\ket{0}$. As $N$ is a multiple of $2^d$, and there are $2^d$ possibilities for each $A_k \subseteq [d]$, we can ensure that every possible subset of $[d]$ occurs exactly $n = \frac{N}{2^d}$ times among the sets $\{A_k\}$. The idea of this circuit is that each possible pair of distinct inputs $i \neq j \in [d]$ will be distinguishable from each other, because for half the measurements made, $\ket{i}$ and $\ket{j}$ would have been mapped to different states before measurement.

For a vector of measurement outcomes $\mathbf{x} \in [2]^N$ and $i \in [d]$, the probability of obtaining $\mathbf{x}$ from input $\ket{i}$ is:
\begin{equation}
    \mathbb{P}(\mathbf{x} | \ket{i}) = \prod_{k=1}^N p_k(x_k|i),
\end{equation}

\noindent where:
\begin{equation}
    p_k(x_k|i) = \begin{cases} 1-p & \text{if } x_k = [i \in A_k]; \\ p & \text{if } x_k \neq [i \in A_k]. \end{cases} \label{eq:bernoulli-probabilities}
\end{equation}

\noindent Here, we use the Iverson bracket $[P]$, equal to 1 if $P$ is true and 0 if $P$ is false. In our case, $[i \in A_k]$ is equal to 1 if $i \in A_k$, and equal to 0 if $i \notin A_k$, so it serves as an indicator function for the set $A_k$.

Then, for $i,j \in [d]$ with $i \neq j$:
\begin{align}
    &\hspace{-1ex} \log \mathbb{P}(\mathbf{x} | \ket{i}) - \log \mathbb{P}(\mathbf{x} | \ket{j}) \nonumber \\
    &= \sum_{k = 1}^N \log p_k(x_k|i) - \log p_k(x_k|j) \nonumber \\
    &= \left( \log(1-p) - \log p \right) \times \nonumber \\
    &\hspace{2em} \big( \# \{ k: x_k = 0, i \notin A_k, j \in A_k \} \nonumber \\
    &\hspace{2em} - \# \{ k: x_k = 0, i \in A_k, j \notin A_k \} \nonumber \\
    &\hspace{2em} - \# \{ k: x_k = 1, i \notin A_k, j \in A_k \} \nonumber \\
    &\hspace{2em} + \# \{ k: x_k = 1, i \in A_k, j \notin A_k \} \big) \nonumber \\
    &= \left( \log(1-p) - \log p \right) \times \nonumber \\
    &\hspace{2em} \big( \# \{ k: [i \in A_k] \neq [j \in A_k], \ x_k = [i \in A_k] \} \nonumber \\
    &\hspace{2em} - \# \{ k: [i \in A_k] \neq [j \in A_k], \ x_k = [j \in A_k] \} \big) \nonumber \\
    &=  \left( \log(1-p) - \log p \right) \times \nonumber \\
    &\hspace{1em} \left( \tfrac{1}{2}N - 2 \# \{ k: [i \in A_k] \neq [j \in A_k], \ x_k = [j \in A_k] \} \right),
\end{align}

\noindent where in the final line, we use that exactly half of the sets $A_k$ contain one of $i$ and $j$ but not the other, so in total $ \#\{k : [i \in A_k] \neq [j \in A_k] \} = \frac{1}{2} N$.

As $\log(1-p) - \log p > 0$, then $\mathbb{P}(\mathbf{x} | \ket{j}) \geqslant \mathbb{P}(\mathbf{x} | \ket{i})$ if and only if $\# \{ k: [i \in A_k] \neq [j \in A_k], \ x_k = [j \in A_k] \} \geqslant \tfrac{1}{4} N$. Therefore, if the state $\ket{i}$ is input, the probability of getting some result $\mathbf{x} \in [2]^N$ where $\mathbb{P}(\mathbf{x} | \ket{j}) \geqslant \mathbb{P}(\mathbf{x} | \ket{i})$ is:
\begin{align}
    &\mathbb{P} \left(\# \{ k: [i \in A_k] \neq [j \in A_k], \ x_k = [j \in A_k] \} \geqslant \tfrac{1}{4} N \mid \ket{i} \right) \nonumber \\
    &= \mathbb{P}(\mathrm{Bin}(\tfrac{1}{2} N, p) \geqslant \tfrac{1}{4} N) \quad \text{by \eqref{eq:bernoulli-probabilities}} \nonumber \\
    &\leqslant \left( 2 \sqrt{p(1-p)} \right)^{\frac{1}{2} N} \quad \text{by Lemma \ref{lem:binomial-deviation}.}
\end{align}

\noindent For the input $\ket{i}$ to be misidentified as $\ket{j}$, the measurement result $\mathbf{x}$ must satisfy $\mathbb{P}(\mathbf{x} | \ket{j}) \geqslant \mathbb{P}(\mathbf{x} | \ket{i})$ due to maximum likelihood decoding\footnote{In fact, $\mathbb{P}(\mathbf{x} | \ket{j})$ must also be the greatest such probability among all elements of $[d]$, but we only need the upper bound here.}, giving us the upper bound:
\begin{equation}
    \varepsilon_{j|i} \leqslant \left( 2 \sqrt{p(1-p)} \right)^{\frac{1}{2} N}.
\end{equation}

\noindent Therefore, the minimal total error obeys:
\begin{equation}
    \etaNA_d(\mathbb{Z}_{p,p}, N) \leqslant d(d-1) \left( 2 \sqrt{p(1-p)} \right)^{\frac{1}{2} N}.
\end{equation}

\noindent Since $N$ was chosen to be a multiple of $2^d$, this bound is most relevant asymptotically as $N \to \infty$:
\begin{equation}
    \etaNA_d(\mathbb{Z}_{p,p}, N) = O \left( \left( 2 \sqrt{p(1-p)} \right)^{\frac{1}{2} N} \right). \label{eq:imperfect-Z-asymptotics}
\end{equation}

\noindent Moreover, we can infer from this a statement for all POVMs $\mathbb{M}$:

\begin{proposition}
    Let $\mathbb{M}$ be any POVM. Then, for any $d \geqslant 2$:
    \begin{equation}
        \etaNA_d(\mathbb{M}, N) = O \left( \left( \eta_2(\mathbb{M}) (2 - \eta_2(\mathbb{M})) \right)^{\frac{1}{4} N} \right).
    \end{equation} \label{prop:asymptotic-upper-bound}
\end{proposition}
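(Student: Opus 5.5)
Write $\eta = \eta_2(\mathbb{M})$. The plan is to reduce to the symmetric imperfect $Z$ bound \eqref{eq:imperfect-Z-asymptotics} by simulating $\mathbb{Z}_{p,p}$ with $p = \frac{1}{2}\eta$ using a single use of $\mathbb{M}$, non-adaptively.

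First, by Corollary \ref{cor:achievable-sets-as-imperfect-Zs} with $N=1$, some point $(\varepsilon_0,\varepsilon_1) \in \bar{E}(\mathbb{M},1)$ satisfies $\varepsilon_0+\varepsilon_1 = \eta$. The set is symmetric under swapping coordinates (an $X$ before and a bit flip after), so $(\varepsilon_1,\varepsilon_0)$ is also in it. By convexity, the midpoint $(\frac{1}{2}\eta, \frac{1}{2}\eta)$ lies in $\bar{E}(\mathbb{M},1)$. Hence one use of $\mathbb{M}$ reproduces $\mathbb{Z}_{p,p}$ with $p = \frac{1}{2}\eta$.

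This simulation may involve classical mixing, since it comes from a convex hull. Classical randomness costs nothing in the non-adaptive setting: shared randomness can be absorbed into the pre-processing channel and the post-processing. Alternatively, note that the total error is linear in mixtures, so some deterministic choice does at least as well. Either way, the reproduction is a fixed channel on the input qubit followed by $\mathbb{M}$ and classical post-processing of its single outcome, and it acts like $\mathbb{Z}_{p,p}$ on any state, via Lemma \ref{lem:reproduction-of-noisy-Z}. The dephasing in that lemma is harmless, because $\mathbb{Z}_{p,p}$ is diagonal.

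Next, substitute each $\mathbb{Z}_{p,p}$ in the non-adaptive circuit of Figure \ref{fig:non-adaptive-abelian} by its simulation. This gives a non-adaptive circuit with $N$ uses of $\mathbb{M}$ and the same error matrix, for any $N$ that is a multiple of $2^d$. Assume $\eta < 1$, so that $p < \frac{1}{2}$ and the earlier bound applies; if $\eta = 1$ the claimed bound is $O(1)$, which is trivial. Then
\begin{equation}
\etaNA_d(\mathbb{M},N) \leqslant d(d-1)\left(2\sqrt{p(1-p)}\right)^{\frac{1}{2}N}.
\end{equation}
With $p=\frac{1}{2}\eta$ we get $4p(1-p) = \eta(2-\eta)$, so $\left(2\sqrt{p(1-p)}\right)^{\frac{1}{2}N} = \left(\eta(2-\eta)\right)^{\frac{1}{4}N}$.

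Finally, extend to all $N$. Since $\etaNA_d(\mathbb{M},N)$ is non-increasing in $N$ (extra measurements can be ignored), we can round $N$ down to $2^d\lfloor N/2^d\rfloor$. This costs a factor of at most $(\eta(2-\eta))^{-2^d/4}$, a constant in $N$, so the statement holds with an $O$ constant depending on $d$ and $\mathbb{M}$.

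The main obstacle I expect is the rigorous justification that the convex-hull point yields a genuine non-adaptive simulation compatible with the earlier circuit's maximum-likelihood analysis. The cleanest route is to avoid mixing entirely: the $(\varepsilon_1,\varepsilon_0)$ reflection and the original point are each deterministic simulations of asymmetric $\mathbb{Z}$'s. We can then use each for half of the measurements within each $A_k$ class, and re-derive the binomial bound per pair $i\neq j$. If that becomes messy, I would fall back on the randomness-absorption argument above.
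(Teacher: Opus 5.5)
Your proposal is correct and follows the paper's proof: simulate $\mathbb{Z}_{p,p}$ with $p=\tfrac{1}{2}\eta_2(\mathbb{M})$ using one use of $\mathbb{M}$, then invoke \eqref{eq:imperfect-Z-asymptotics}. The paper simply cites Corollary~\ref{cor:achievable-sets-as-imperfect-Zs} for the simulation step, and your symmetry-and-convexity argument supplies the missing justification that the midpoint lies in $\bar{E}(\mathbb{M},1)$; your derandomisation by linearity and your monotonicity argument for $N$ not divisible by $2^d$ likewise fill in steps the paper leaves implicit. Two small caveats: shared randomness cannot literally be absorbed into the pre-processing channel (the post-processing never sees it without spending a measurement), so you should rely on the linearity argument instead; and for $\eta_2(\mathbb{M})=0$ your constant $(\eta(2-\eta))^{-2^d/4}$ is undefined, although in that case monotonicity alone gives $\etaNA_d(\mathbb{M},N)=0$ for all $N\geqslant 2^d$.
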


\begin{proof}
    See Appendix \ref{sec:proof-asymptotic-upper-bound}.
\end{proof}

\noindent If $\eta_2(\mathbb{M}) < 1$, i.e. $\mathbb{M}$ is non-trivial, then $\eta_2(\mathbb{M}) (2 - \eta_2(\mathbb{M})) < 1$, so this decays exponentially to 0. Therefore any non-trivial POVM $\mathbb{M}$ can approximate an arbitrary projective measurement with exponentially decreasing error, agreeing with the findings of \cite{linden-2025}.

Moreover, we may consider the rate at which the minimal total error decays exponentially. We define the asymptotic rate $\kappa_d^{(\mathrm{NA})}(\mathbb{M})$ by:
\begin{equation}
    \kappa_d^{(\mathrm{NA})}(\mathbb{M}) = \exp \left( \lim_{N \to \infty} \left( \tfrac{1}{N} \log \eta_d^{(\mathrm{NA})}(\mathbb{M}, N) \right) \right),
\end{equation}

\noindent such that $\eta_d^{(\mathrm{NA})}(\mathbb{M}, N) = \Theta \left( \kappa_d^{(\mathrm{NA})}(\mathbb{M})^N \right)$. We also define $\kappa_d^{(\mathrm{A})}(\mathbb{M})$ equivalently.

Corollary \ref{cor:adaptive-lower-bound-d} gives that:
\begin{equation}
    \kappa_d^{(\mathrm{A})}(\mathbb{M}) \geqslant \eta_2(\mathbb{M}),
\end{equation}

\noindent while we have just shown in Proposition \ref{prop:asymptotic-upper-bound} that:
\begin{equation}
    \kappa_d^{(\mathrm{NA})}(\mathbb{M}) \leqslant \sqrt[4]{\eta_2(\mathbb{M}) (2 - \eta_2(\mathbb{M}))}.
\end{equation}

\noindent In particular, for non-trivial $\mathbb{M}$, the rate of exponential decay is bounded away from 1 as $d \to \infty$, which is perhaps surprising.

In our final example of this section, we will use the generalised symmetric imperfect $Z$ measurement to improve slightly more on this bound (see Proposition \ref{prop:generalised-asymptotic-upper-bound}), but for the moment we will turn our attention to the trine and the SIC-POVM, for which we can find this asymptotic rate exactly in the adaptive case.

\subsubsection{The trine}

\noindent For the trine \eqref{eq:trine}, we'll construct an adaptive circuit, giving an upper bound on $\etaA_d$. First, consider what a single trine measurement can do. Suppose we partition $[d]$ into three sets $A, B, C \subseteq [d]$ with $A \sqcup B \sqcup C = [d]$, and define a channel $\mathcal{A}$ with:
\begin{equation}
    \mathcal{A} : \ketbra{i}{i} \mapsto \begin{cases} \ketbra{\psi_0^\perp}{\psi_0^\perp} & \text{if } i \in A; \\ \ketbra{\psi_1^\perp}{\psi_1^\perp} & \text{if } i \in B; \\ \ketbra{\psi_2^\perp}{\psi_2^\perp} & \text{if } i \in C. \end{cases}
\end{equation}

\noindent where $\ket{\psi_0^\perp} = \ket{1}$, $\ket{\psi_1^\perp} = \frac{\sqrt{3}}{2} \ket{0} - \frac{1}{2} \ket{1}$, and $\ket{\psi_2^\perp} = \frac{\sqrt{3}}{2} \ket{0} + \frac{1}{2} \ket{1}$, such that $\braket{\psi_x}{\psi_x^\perp} = 0$ for all $x \in \{0,1,2\}$.

Then, if the output of $\mathcal{A}$ is measured, only two of the three outcomes of $\trine$ are possible, since $\mathrm{tr} \left( \ketbra{\psi_x^\perp}{\psi_x^\perp} M_x \right) = \frac{2}{3} \left| \braket{\psi_x^\perp}{\psi_x} \right|^2 = 0$. Put another way, whatever the outcome of the measurement, we can rule out $i$ being in one of the three sets $A$, $B$, and $C$ with certainty. Moreover, the problem is now reduced in dimension, from one of $d$ possible input states, to one of approximately $\frac{2}{3} d$, if $A$, $B$, and $C$ are of roughly equal size. Indeed, supposing we choose $A$, $B$, and $C$ optimally, we can ensure that worst-case $\left\lceil \frac{2}{3} d \right\rceil$ possibilities remain.

If we have $N$ measurements, our first step is to create $N$ copies of the input state, i.e. $\ket{i} \mapsto \ket{\mathbf{i}^N}$, which we do as usual with generalised CNOTs. After the first measurement, we can adaptively partition the remaining possible input states into three approximately equal portions, and again rule out a third of them with a second measurement. Then, we repeat this, reducing all the way down to two possible input states if $N$ is large enough. This idea fails for $d = 2$, since then one of the sets $A$, $B$, or $C$ is necessarily empty, so we can't guarantee that we are ruling out states with every measurement. However, from Section \ref{sec:qubit-examples}, we already know what to do optimally in the qubit case, so we do as in Figure \ref{fig:trine-optimal} with the remaining measurements.

What remains is to upper bound the number of times the set of input states has to be partitioned to get down to the qubit case, i.e. how many reductions in dimension $d \mapsto \left\lceil \frac{2}{3} d \right\rceil$ are needed to get down to just two? Since $\left\lceil \frac{2}{3} d \right\rceil \leqslant \tfrac{2}{3} d + \tfrac{2}{3}$, we can solve the difference equation:
\begin{equation}
    d_{n+1} = \tfrac{2}{3} d_n + \tfrac{2}{3} \quad \Rightarrow \quad d_n = \left( \tfrac{2}{3} \right)^n (d_0 - 2) + 2, \label{eq:trine-difference-equation}
\end{equation}

\noindent and so the dimension of the remaining space after $n$ adaptive measurements is at most $\left( \tfrac{2}{3} \right)^n (d - 2) + 2$. This is at most 3 when $n \geqslant \frac{\log(d-2)}{\log 3 - \log 2}$, and just one more measurement is needed to reduce the three dimensional case to the two dimensional case, hence for $N > \frac{\log(d-2)}{\log 3 - \log 2} + 1$ our circuit correctly identifies each input $\ket{i}$ with probability:
\begin{align}
    \varepsilon_{i|i} &\geqslant 1 - \tfrac{1}{2} \etaA_2 \left( \trine, N - \left\lceil \tfrac{\log(d-2)}{\log 3 - \log 2} + 1 \right\rceil \right) \nonumber \\
    &= 1 - \tfrac{1}{2} \tfrac{1}{3^{N - \left\lceil \frac{\log(d-2)}{\log 3 - \log 2} + 1 \right\rceil}} \quad \text{by \eqref{eq:eta-trine}} \nonumber \\
    &\geqslant 1 - \tfrac{1}{2} \tfrac{3^{\frac{\log(d-2)}{\log 3 - \log 2} + 2}}{3^N} \nonumber \\
    &= 1 - \tfrac{9}{2} (d-2)^{\frac{\log 3}{\log 3 - \log 2}} \tfrac{1}{3^N}.
\end{align}

\noindent We can upper bound the minimal adaptive total error by the total error achieved by this protocol, giving:
\begin{align}
    \etaA_d(\trine, N) &\leqslant d - \sum_{i \in [d]} \varepsilon_{i|i} \nonumber \\
    &\leqslant d - d \left( 1 - \tfrac{9}{2} (d-2)^{\frac{\log 3}{\log 3 - \log 2}} \tfrac{1}{3^N} \right) \nonumber \\
    &= \tfrac{9}{2} d (d-2)^{\frac{\log 3}{\log 3 - \log 2}} \tfrac{1}{3^N} \nonumber \\
    &\leqslant 5 d^4 \tfrac{1}{3^N} \quad \text{as } \tfrac{\log 3}{\log 3 - \log 2} = 2.70... < 3.
\end{align}

\noindent From the lower bound in Corollary \ref{cor:adaptive-lower-bound-d}, we have that $\etaA_d(\trine, N) \geqslant (d-1) \tfrac{1}{3^N}$, so we can deduce that $\etaA_d(\trine, N) = \Theta \left( \frac{1}{3^N} \right)$, i.e.:
\begin{equation}
    \kappa_d^{(\mathrm{A})}(\trine) = \tfrac{1}{3},
\end{equation}

\noindent for all $d$. An intriguing possibility, which we haven't been able to prove, is that non-adaptive protocols cannot reach this rate for some $d > 2$, meaning an asymptotic form of adaptive advantage where $\kappa_d^{(\mathrm{A})} < \kappa_d^{(\mathrm{NA})}$.

\subsubsection{The qubit SIC-POVM}

\noindent Given its similar definition to the trine, with all POVM elements being rank one, it perhaps comes as little surprise that we can make a similar argument for the qubit SIC-POVM \eqref{eq:tetra}, and construct an adaptive circuit to upper bound $\etaA_d$. To be specific, if we have a partition of $[d]$ into four sets $A,B,C,D \subseteq [d]$ with $A \sqcup B \sqcup C \sqcup D = [d]$, then we can map the computational basis states $\ket{i}$ to one of $\ket{\psi_0^\perp} = \ket{1}$, $\ket{\psi_1^\perp} = \frac{\sqrt{2}}{\sqrt{3}} \ket{0} - \frac{1}{\sqrt{3}} \ket{1}$, $\ket{\psi_2^\perp} = \frac{\sqrt{2}}{\sqrt{3}} \ket{0} - e^{\frac{2\pi i}{3}} \frac{1}{\sqrt{3}} \ket{1}$, or $\ket{\psi_3^\perp} = \frac{\sqrt{2}}{\sqrt{3}} \ket{0} - e^{\frac{4\pi i}{3}} \frac{1}{\sqrt{3}} \ket{1}$, depending on which of the four sets $i$ lies in. Now, since $\braket{\psi_x}{\psi_x^\perp} = 0$, the result of a SIC-POVM measurement on this state rules out $i$ being in one of the four sets $A$, $B$, $C$, $D$, reducing the dimension of the problem to about $\frac{3}{4} d$.

As before, we can repeat this adaptively, with each measurement ruling out a further quarter (approximately) possible inputs. We solve a similar difference equation to \eqref{eq:trine-difference-equation}, which this time is:
\begin{equation}
    d_{n+1} = \tfrac{3}{4} d_n + \tfrac{3}{4} \quad \Rightarrow \quad d_n = \left( \tfrac{3}{4} \right)^n (d_0 - 3) + 3.
\end{equation}

\noindent From this, we infer that it takes at most $\frac{\log(d-3)}{\log 4 - \log 3}$ measurements to reduce to the four dimensional case, and one more to get to $d = 3$. However, this time we can only deterministically get as far as $d = 3$, since we have a four outcome measurement, so one of the outcomes will be possible for all three inputs and not rule anything out.

Therefore, let's first suppose that $d = 3$, and solve that case first. We map $\ket{i} \mapsto \ket{\psi_i^\perp}$ for each $i \in \{0,1,2\}$, and measure as before. There is always a probability of $\frac{1}{3}$ of getting the measurement result 3, which gives us no information, but with probability $\frac{2}{3}$ we will get a result in $\{0,1,2\} \setminus \{i\}$, allowing us to rule out one of the other possible input states, and reduce to $d = 2$, which we have already solved in Section \ref{sec:qubit-examples}. Our strategy will therefore be to repeat this measurement on copies of $\ket{i}$ until we hit the $\frac{2}{3}$ probability of getting an informative outcome, and if we never succeed, the best we can do is to have a one in three guess at $i$.

Given an input $\ket{i} \in \{\ket{0}, \ket{1}, \ket{2}\}$, we calculate the success probability of the circuit by conditioning on the number of measurements $n$ that we make before reducing to the $d = 2$ case (if at all):
\begin{align}
    \varepsilon_{i|i} &= \sum_{n = 1}^N \tfrac{1}{3^{n-1}} \tfrac{2}{3} \left( 1 - \tfrac{1}{2} \etaA_2(\SIC, N-n) \right) + \tfrac{1}{3^N} \tfrac{1}{3} \nonumber \\
    &= 2 \sum_{n = 1}^N \tfrac{1}{3^n} - \sum_{n = 1}^N \tfrac{1}{3^n} \etaA_2(\SIC, N-n) + \tfrac{1}{3^{N+1}} \nonumber \\ 
    &= 2 \sum_{n = 1}^N \tfrac{1}{3^n} - \sum_{n = 1}^{N-1} \tfrac{1}{3^n} \tfrac{1}{2^{N-n-1}} \left( 1 - \tfrac{1}{\sqrt{3}} \right) - \tfrac{1}{3^N} + \tfrac{1}{3^{N+1}} \nonumber \\
    &\hspace{4em} \text{by \eqref{eq:eta-tetra} and } \etaA_2(\SIC, 0) = 1 \nonumber \\
    &= 3 \tfrac{\frac{1}{3} - \frac{1}{3^{N+1}}}{1 - \frac{1}{3}} - \tfrac{1}{2^{N-1}} \left( 1 - \tfrac{1}{\sqrt{3}} \right) \tfrac{\frac{2}{3} - \frac{2^N}{3^N}}{1 - \tfrac{2}{3}} - \tfrac{2}{3^{N+1}} \nonumber \\
    &= 1 - \tfrac{1}{3^N} - \left( 1 - \tfrac{1}{\sqrt{3}} \right) \left( \tfrac{1}{2^{N-2}} - \tfrac{2}{3^{N-1}} \right) - \tfrac{2}{3^{N+1}} \nonumber \\
    &= 1 - \left( 1 - \tfrac{1}{\sqrt{3}} \right) \tfrac{1}{2^{N-2}} + \left( 13 - 6 \sqrt{3} \right) \tfrac{1}{3^{N+1}}.
\end{align}

\noindent Then, the minimal total error is upper bounded by the total error that this protocol achieves, so:
\begin{align}
    \eta_3^{(\mathrm{A})}(\SIC, N) &\leqslant 3 -  \sum_{i \in [3]} \varepsilon_{i|i} \nonumber \\
    &= 3 \left( 1 - \tfrac{1}{\sqrt{3}} \right) \tfrac{1}{2^{N-2}} - 3 \left( 13 - 6 \sqrt{3} \right) \tfrac{1}{3^{N+1}} \nonumber \\
    & \leqslant \left( 3 - \sqrt{3} \right) \tfrac{1}{2^{N-2}}.
\end{align}

\noindent Returning to the case of general $d$, we know that we can reduce to the $d = 3$ case with the first $\left\lceil \frac{\log(d-3)}{\log 4 - \log 3} + 1 \right\rceil$ measurements. Then, the probability of error for any input $\ket{i}$ will be inherited from the $d = 3$ case, and will be $\tfrac{1}{3} \eta_3^{(\mathrm{A})} \left( \SIC, N - \left\lceil \frac{\log(d-3)}{\log 4 - \log 3} + 1 \right\rceil \right)$, so:
\begin{align}
    \etaA_d(\SIC, N) &\leqslant \tfrac{d}{3} \eta_3^{(\mathrm{A})} \left( \SIC, N - \left\lceil \tfrac{\log(d-3)}{\log 4 - \log 3} + 1 \right\rceil \right) \nonumber \\
    &\leqslant d \left( 1 - \tfrac{1}{\sqrt{3}} \right) \tfrac{1}{2^{N - \left\lceil \frac{\log(d-3)}{\log 4 - \log 3} + 1 \right\rceil - 2}} \nonumber \\
    &\leqslant d \left( 1 - \tfrac{1}{\sqrt{3}} \right) \tfrac{2^{\frac{\log(d-3)}{\log 4 - \log 3} + 4}}{2^N} \nonumber \\
    &= 16 \left( 1 - \tfrac{1}{\sqrt{3}} \right) d (d-3)^{\frac{\log 2}{\log 4 - \log 3}} \tfrac{1}{2^N} \nonumber \\
    &\leqslant 7 d^4 \tfrac{1}{2^N}.
\end{align}

\noindent as $\frac{\log 2}{\log 4 - \log 3} = 2.40... < 3$ and $16 \left( 1 - \frac{1}{\sqrt{3}} \right) = 6.76... < 7$.

From Corollary \ref{cor:adaptive-lower-bound-d}, we get a lower bound of $\etaA_d(\SIC, N) \geqslant (d-1) \tfrac{1}{2^{N-1}} \left( 1 - \frac{1}{\sqrt{3}} \right)$, and therefore we may conclude that:
\begin{equation}
    \kappa_d^{(\mathrm{A})}(\SIC) = \tfrac{1}{2},
\end{equation}

\noindent for all $d$. As with the trine, we leave as an open question the possibility of an asymptotic adaptive advantage where the best non-adaptive protocols cannot reach this rate for some $d > 2$.

\subsubsection{The generalised symmetric imperfect $Z$}

\noindent Our first example in this section was the symmetric imperfect $Z$. We have noted before that the generalised symmetric imperfect $Z$ \eqref{eq:generalised-symmetric-imperfect-Z} is a generalisation of this ($\mathbb{Z}_{p,p} = \mathbb{S}_{2,p}$), so for general $\mathbb{S}_{m,t}$, we will try to use a similar non-adaptive circuit to bound $\etaNA_d$. We will assume that $t < \frac{m-1}{m}$ (so that the likeliest outcome of measuring $\ket{i}$ is $i$), but otherwise, we will allow for $m$ and $t$ to be arbitrary. As a reminder, $m$ is the dimension of the system that our measuring device measures, which is in general different from the input dimension $d$.

As $\mathbb{S}_{m,t}$ is abelian, we will consider non-adaptive circuits of the form shown in Figure \ref{fig:non-adaptive-abelian}, where we only need to pick channels $\mathcal{A}_1, \mathcal{A}_2, ..., \mathcal{A}_N$ since the classical post-processing $f$ will be defined by maximum likelihood decoding. Every such choice will be a channel of the form $\mathcal{A}_\mathbf{a} : \mathcal{B}(\mathbb{C}^d) \to \mathcal{B}(\mathbb{C}^m)$, for some vector $\mathbf{a} \in [m]^d$, with the action\footnote{Since we know the diagonalising basis of the POVM is just the computational basis, we simplify the notation that we used previously to $\ket{a_i} = \ket{v_{a_i}}$.}:
\begin{equation}
    \mathcal{A}_\mathbf{a}: \ketbra{i}{i} \mapsto \ketbra{a_i}{a_i}.
\end{equation}

\noindent There are $m^d$ such channels, although we won't use all of them. Instead, we will only use those where $\mathbf{a}$ is as balanced as possible. We express $d = m q + r$ for integers $q \geqslant 0$ and $r \in [m]$, and then require that there are $m-r$ values of $x \in [m]$ that appear $q$ times in $\mathbf{a}$, and $r$ values of $x \in [m]$ that appear $q+1$ times in $\mathbf{a}$. We call the set of such vectors $S$, and let $T = |S|$. The exact value of $T$ doesn't matter to us, but for ease of calculation we will assume that $N$ is a multiple of $T$ (noting that this won't affect the asymptotics). Our non-adaptive circuit will then be of the form of Figure \ref{fig:non-adaptive-abelian}, where for each $\mathbf{a} \in S$, the channel $\mathcal{A}_\mathbf{a}$ appears exactly $\frac{N}{T}$ times.

To calculate error probabilities, we first need to know what proportion of these $T$ vectors have $a_i \neq a_j$, for arbitrary $i \neq j \in [d]$. We do this by choosing $\mathbf{a} \in S$ uniformly at random, so that the required proportion is the probability that $a_i \neq a_j$. Each entry of $\mathbf{a}$ appears either $q$ or $q+1$ times, so we will break the calculation down depending on how many times $a_i$ appears.

The probability that $a_i$ appears $q$ times in $\mathbf{a}$ is simply the proportion of the entries of $\mathbf{a}$ that appear $q$ times, which is $\frac{(m-r)q}{d}$. Then, given that $a_i$ appears $q$ times, there are $q-1$ other occurrences of that value in $\mathbf{a}$, and $a_j$ cannot be one of them, giving a factor of $1 - \frac{q-1}{d-1}$.

Conversely, the probability that $a_i$ appears $q+1$ times in $\mathbf{a}$ is the proportion of entries of $\mathbf{a}$ that appear $q+1$ times, equal to $\frac{r(q+1)}{d}$. Now, there are $q$ other occurrences of the value $a_i$ in $\mathbf{a}$ that we need to avoid with $a_j$, which happens with probability $1 - \frac{q}{d-1}$.

Putting this all together, the total number of $\mathbf{a} \in S$ with $a_i \neq a_j$ is:
\begin{align}
    &\#\{\mathbf{a} \in S : a_j \neq a_i \} \nonumber \\
    &= \mathbb{P}( a_j \neq a_i ) \times T \quad \text{if $\mathbf{a}$ chosen uniformly from $S$} \nonumber \\
    &= \left( \tfrac{(m-r)q}{d} (1 - \tfrac{q-1}{d-1}) + \tfrac{r(q+1)}{d} (1 - \tfrac{q}{d-1}) \right) T \nonumber \\
    &= \left( 1 - \tfrac{q(d-m+r)}{d(d-1)} \right) T.
\end{align}

\noindent For input state $\ket{i} \in \{ \ket{0}, \ket{1}, ..., \ket{d-1} \}$, the probability of the outcome $\mathbf{x} \in [m]^N$ is:
\begin{equation}
    \mathbb{P}(\mathbf{x} | \ket{i}) = \prod_{k=1}^N p_k(x_k|i),
\end{equation}

\noindent where:
\begin{equation}
    p_k(x_k|i) = \begin{cases} 1-t & \text{if } x_k = [\mathbf{a_k}]_i; \\ \frac{t}{m-1} & \text{if } x_k \neq [\mathbf{a_k}]_i. \end{cases}
\end{equation}

\noindent Thus, for $i \neq j \in [d]$, we have:
\begin{align}
    &\log \mathbb{P}(\mathbf{x} \, | \ket{i}) - \log \mathbb{P}(\mathbf{x} \, | \ket{j}) \nonumber \\
    &= \sum_{k = 1}^N \log p_k(x_k|i) - \log p_k(x_k|j) \nonumber \\
    &= \left( \log(1-t) - \log \left( \tfrac{t}{m-1} \right) \right) \times \nonumber \\
    & \hspace{2em} \big( \# \{ k: [\mathbf{a}_k]_i \neq [\mathbf{a}_k]_j, \ x_k = [\mathbf{a}_k]_i \} \nonumber \\
    &\hspace{2em} - \# \{ k: [\mathbf{a}_k]_i \neq [\mathbf{a}_k]_j, \ x_k = [\mathbf{a}_k]_j \} \big).
\end{align}

\noindent From our choice of circuit, there are $N' = \left( 1 - \tfrac{q(d-m+r)}{d(d-1)} \right) N$ values of $k$ for which $[\mathbf{a}_k]_i \neq [\mathbf{a}_k]_j$, which we may assume by reordering are exactly those $k \in [N']$. Thus:
\begin{align}
    &\log \mathbb{P}(\mathbf{x} \, | \ket{i}) - \log \mathbb{P}(\mathbf{x} \, | \ket{j}) \nonumber \\
    &= \left( \log(1-t) - \log \left( \tfrac{t}{m-1} \right) \right) \times \nonumber \\
    & \hspace{2em} \big( \# \{ x_k = [\mathbf{a}_k]_i \}_{k \in [N']} - \# \{ x_k = [\mathbf{a}_k]_j \}_{k \in [N']} \big).
\end{align}

\noindent Given an input state of $\ket{i}$, a necessary condition for a string of outcomes $\mathbf{x}$ to be decoded as $j \neq i$ is for $\mathbb{P}(\mathbf{x} \, | \ket{i}) \leqslant \mathbb{P}(\mathbf{x} \, | \ket{j})$, and since $t < \tfrac{m-1}{m}$, $\log(1-t) - \log(\frac{t}{m-1}) > 0$, so this occurs exactly when:
\begin{equation}
    \{ x_k = [\mathbf{a}_k]_j \}_{k \in [N']} \geqslant \# \{ x_k = [\mathbf{a}_k]_i \}_{k \in [N']}.
\end{equation}

\noindent We may now relabel measurement outcomes such that $[\mathbf{a}_k]_i = 0$ and $[\mathbf{a}_k]_j = 1$ for all $k \in [N']$, such that the distribution of $\mathbf{x}_{N'} = (x_k)_{k \in [N']}$ is exactly as given in Lemma \ref{lem:multivariate-deviation}(a). As such:
\begin{align}
    \varepsilon_{j|i} &\leqslant \mathbb{P} \left( \# \{ x_k = [\mathbf{a}_k]_j \}_{k \in [N']} \geqslant \# \{ x_k = [\mathbf{a}_k]_i \}_{k \in [N']} \mid \ket{i} \right) \nonumber \\
    &= \mathbb{P} \left( \# \{ x_k = 1 \}_{k \in [N']} \geqslant \# \{ x_k = 0 \}_{k \in [N']} \mid \ket{i} \right) \nonumber \\
    &\leqslant \left( \tfrac{(m-2)t}{m-1} + 2 \sqrt{\tfrac{t(1-t)}{m-1}} \right)^{N'} \quad \text{by Lemma \ref{lem:multivariate-deviation}(a)},
\end{align}

\noindent and so:
\begin{align}
    &\etaNA_d(\mathbb{S}_{m,t}, N) \nonumber \\
    &\hspace{2em} \leqslant d(d-1) \left( \tfrac{(m-2)t}{m-1} + 2 \sqrt{\tfrac{t(1-t)}{m-1}} \right)^{\left( 1 - \frac{q(d-m+r)}{d(d-1)} \right) N}. \label{eq:gen-sym-eta-upper-bound}
\end{align}

\noindent \eqref{eq:gen-sym-eta-upper-bound} is only proven for any $N$ that is a multiple of $T$, which is a very small proportion, so the more meaningful result is asymptotic as before:

\begin{proposition}
    Let $\mathbb{M}$ be any POVM. Let $d, \ell \geqslant 2$, with integers $q \geqslant 0$ and $r \in [\ell]$ defined by integer division such that $d = \ell q + r$. Then:
    \begin{equation}
        \etaNA_d(\mathbb{M}, N) = O \left( \kappa^N \right),
    \end{equation}

    \noindent where:
    \begin{equation}
        \kappa = \left( \tfrac{(\ell-2) \eta_\ell(\mathbb{M})}{\ell(\ell-1)} + 2 \sqrt{\tfrac{\eta_\ell(\mathbb{M}) (\ell - \eta_\ell(\mathbb{M}))}{\ell^2 (\ell-1)}} \right)^{ 1 - \frac{q(d-\ell+r)}{d(d-1)}}. \label{eq:upper-bound-rate}
    \end{equation} \label{prop:generalised-asymptotic-upper-bound}
\end{proposition}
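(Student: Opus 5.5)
The plan is to reduce to the bound \eqref{eq:gen-sym-eta-upper-bound} already established for the generalised symmetric imperfect $Z$, by simulating $\mathbb{S}_{\ell,t}$ with single uses of $\mathbb{M}$. Set $t = \frac{1}{\ell}\eta_\ell(\mathbb{M})$. By Lemma \ref{lem:symmetric-reproduced-measurement} with $d$ replaced by $\ell$ and $N = 1$, one use of $\mathbb{M}$ (adaptively, but with $N=1$ adaptivity is vacuous) reproduces one use of $\mathbb{S}_{\ell,t} = \mathbb{M}_\mathcal{E}$. Here we use the analogue of Lemma \ref{lem:reproduction-of-noisy-Z} for qudits: a circuit with error matrix $\mathcal{E}$, preceded by dephasing, acts exactly as $\mathbb{M}_\mathcal{E}$. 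Since $\mathbb{S}_{\ell,t}$ is abelian and diagonal in the computational basis, and the non-adaptive circuit of Figure \ref{fig:non-adaptive-abelian} feeds it only computational basis states $\ket{a_i}$, the dephasing is harmless. So each of the $N$ uses of $\mathbb{S}_{\ell,t}$ in the circuit achieving \eqref{eq:gen-sym-eta-upper-bound} can be replaced by a one-$\mathbb{M}$ gadget. Moreover, all gadgets sit in parallel, each being a fixed channel followed by $\mathbb{M}$ and fixed classical relabelling (Proposition \ref{prop:achievable-set-is-achievable} with $N=1$ gives a deterministic or mixed single-use circuit). The result is therefore a non-adaptive circuit with $N$ uses of $\mathbb{M}$, giving $\etaNA_d(\mathbb{M},N) \leqslant \etaNA_d(\mathbb{S}_{\ell,t},N)$.

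Substituting $m=\ell$ and $t = \eta_\ell(\mathbb{M})/\ell$ into the base $\frac{(m-2)t}{m-1} + 2\sqrt{\frac{t(1-t)}{m-1}}$ gives exactly $\frac{(\ell-2)\eta_\ell}{\ell(\ell-1)} + 2\sqrt{\frac{\eta_\ell(\ell-\eta_\ell)}{\ell^2(\ell-1)}}$, and the exponent factor $1-\frac{q(d-\ell+r)}{d(d-1)}$ matches. Thus \eqref{eq:gen-sym-eta-upper-bound} yields $\etaNA_d(\mathbb{M},N) \leqslant d(d-1)\kappa^N$ for $N$ a multiple of $T$.

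Several points need care. First, the hypothesis $t < \frac{\ell-1}{\ell}$ used in deriving \eqref{eq:gen-sym-eta-upper-bound} means $\eta_\ell(\mathbb{M}) < \ell-1$. Otherwise $\eta_\ell(\mathbb{M}) \geqslant \ell-1$, and the base is $\geqslant 1$, so the bound is trivial since $\etaNA_d \leqslant d-1 \cdot$const anyway (total error is at most $d$, achieved by guessing). I would check that the base equals $1$ at $t = \frac{\ell-1}{\ell}$ and is monotone, so $\kappa \geqslant 1$ there and $O(\kappa^N)$ holds with constant $d$.

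Second, one might worry that Lemma \ref{lem:symmetric-reproduced-measurement} gives a mixed circuit. Mixing is fine, since independent randomness in each gadget is still a valid non-adaptive circuit, or by linearity some deterministic choice does no worse.

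Third, and this is the main obstacle, we must pass from $N \in T\mathbb{Z}$ to all $N$. I would use monotonicity: $\etaNA_d(\mathbb{M},N)$ is nonincreasing in $N$, since extra uses can be ignored. For general $N$, take $N_0 = T\lfloor N/T\rfloor \geqslant N-T$, which gives $\etaNA_d(\mathbb{M},N) \leqslant d(d-1)\kappa^{N-T}$ when $\kappa \leqslant 1$. This is $O(\kappa^N)$ with a constant depending on $d$ and $\ell$ only. If $\kappa = 0$ the bound is still immediate for $N \geqslant T$.
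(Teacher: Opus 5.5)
Your proposal is correct and follows the paper's proof. First dispose of the trivial case $\eta_\ell(\mathbb{M}) = \ell-1$, where $\kappa = 1$; then use Lemma~\ref{lem:symmetric-reproduced-measurement} so that each single use of $\mathbb{M}$ simulates $\mathbb{S}_{\ell,\eta_\ell(\mathbb{M})/\ell}$, and substitute into \eqref{eq:gen-sym-eta-upper-bound}. Your monotonicity step for $N$ not a multiple of $T$ and your remarks on mixing and dephasing supply details the paper leaves implicit; the one tweak is that $\eta_\ell(\mathbb{M}) \leqslant \ell-1$ always and the base never exceeds $1$, so your ``otherwise'' case is exactly the equality case.
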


\begin{proof}
    See Appendix \ref{sec:proof-generalised-asymptotic-upper-bound}.
\end{proof}

\noindent Note that here, we rename $m$ from \eqref{eq:gen-sym-eta-upper-bound} to $\ell$, to avoid confusion with the number of outcomes of the measurement $\mathbb{M}$, which we have otherwise been referring to as $m$ throughout the paper.

For $\ell = 2$, Proposition \ref{prop:generalised-asymptotic-upper-bound} gives a slightly tighter bound than the equivalent one we obtained in Proposition \ref{prop:asymptotic-upper-bound}. For an even tighter bound, if we are given $\mathbb{M}$ and $d$, we can choose $m$ to get the best asymptotic upper bound:
\begin{equation}
    \kappa_d^{(\mathrm{NA})}(\mathbb{M}) \leqslant \min_{\ell \geqslant 2} \left\{ \kappa \right\},
\end{equation}

\noindent for $\kappa$ as given in \eqref{eq:upper-bound-rate}.

\section{Discussion}

\noindent In this work, we have shown that adaptive circuits using multiple noisy measuring devices can implement effective measurements with reduced error compared to non-adaptive circuits. We have shown this in multiple examples, with particular emphasis on qubit measurements with two outcomes, where we found that for almost all such measurements, adaptivity gives an advantage when using as few as three or four measurements. We have also shown that the amount of advantage, quantified as the ratio of the minimal non-adaptive total error to the minimal adaptive total error, is unbounded, as the number of uses of the measuring device increases.

There are a couple of implementation details that are worth a mention here. Throughout, we have assumed that the only source of noise in our system is in the measurement. As such, our results will be most applicable in scenarios where measurement noise is much greater than other sources of noise, such as on the input state or on the unitary operations that make up the rest of the circuit. We also haven't factored in the challenge of implementing adaptive circuits, with the reading of measurement outcomes to the classical processing and the implementation of classically controlled operations needing to be performed accurately on millisecond timescales. Nonetheless, \cite{kim-2026} have shown that simple multiple measurement circuits can maintain their advantage under realistic noise on other circuit components. Moreover, recent work such as \cite{foss-feig-2023, iqbal-2024, pokharel-2025} shows that adaptive advantage in other tasks can be experimentally demonstrated, so these are evidently challenges that current knowledge and technology is beginning to overcome.

Our choice of figure of merit, the total error \eqref{eq:total-error-2}, is frequently used in an experimental setting, and has the added benefit of linearity which helps to simplify calculations. However, it is worth pointing out that this choice does have an impact on the detailed analysis of whether and to what extent there will be adaptive advantage for any given number of uses of any given measuring device. Nonetheless, we emphasise that the idea of adaptive advantage, where adaptive measurement circuits can act in a way that no non-adaptive measurement circuit could, transcends this choice. Preliminary calculations show that adaptive advantage can be found for alternative choices of figure of merit, and indeed we would be surprised if adaptive advantage did not similarly arise for any reasonable alternative choice.

Our work can be reframed in several ways. We have usually considered it in the setting of measurement reproduction of \cite{linden-2025}, where we are trying to approximate the effect of a projective measurement with our noisy measuring devices. Alternatively, it can also be viewed information-theoretically as an unusual state discrimination task, where we are given one of $d$ orthonormal input states, and we wish to distinguish them. Usually, this would be trivial with a projective measurement, but if we limit ourselves only to a noisy measuring device, we arrive at the problem central to this work. This form of state discrimination is dual to the usual setting of non-orthogonal states with no restrictions on measurements (such as in \cite{li-2022}, with a similar focus on adaptivity), and is worthy of further investigation.

Additionally, we have used the language of reducing measurement error throughout as a measure of the success of our circuits, but that is not the only interpretation. Instead, we may consider an adaptive advantage to mean a reduction in the total number of measurements needed to achieve a certain error threshold. This could be beneficial for systems where measurements have an associated cost, either in time or in adding noise to the system.

\section*{Acknowledgements}

The authors would like to thank Chris Corlett for useful discussions. JB acknowledges funding from the EPSRC Centre for Doctoral Training in Quantum Engineering (EP/S023607/1). 

\bibliography{refs.bib}

\appendix

\section{Optimal adaptive circuit using three imperfect $Z$ measurements} \label{sec:optimal-adaptive-noisy-Z-argument}

\noindent In this appendix, we complete the argument from Section \ref{sec:optimal-adaptive-noisy-Z} to determine the optimal adaptive circuit in the case of three imperfect $Z$ measurements.

\subsection{Reduction of search space}

\noindent Firstly, we reduce the search space of circuits to those of the form of Figure \ref{fig:adaptive-noisy-Z-simplification} as follows. After all the measurements are made, the choice of post-processing to minimise total error remains maximum likelihood decoding. Thus, if we notate the state of the first, second, and third qubits immediately before they are measured as $\sigma_i$, $\tau_i^{(x_1)}$, and $\upsilon_i^{(x_1,x_2)}$ respectively, then the total error $\eta$ can be written in terms of these states as:
\begin{align}
    \eta &= \sum_{\mathbf{x} \in \{ 0,1 \}^3} \min_{i \in \{0, 1\}} \Big\{ \mathrm{tr} \left( \sigma_i Z_{x_1} \right) \, \mathrm{tr} \left( \tau_i^{(x_1)} Z_{x_2} \right) \nonumber \\
    &\hspace{11em} \times \mathrm{tr} \left( \upsilon_i^{(x_1,x_2)} Z_{x_3} \right) \Big\} .\label{eq:total-error-sum-of-minima-adaptive}
\end{align}

\noindent We argue similarly to the non-adaptive case in Section \ref{sec:optimal-non-adaptive-noisy-Z} that the choice of these states can be reduced from 14 arbitrary qubit states to a finite list of options. First, we may note that off-diagonal elements of $\sigma_i$, $\tau_i^{(x_1)}$, and $\upsilon_i^{(x_1,x_2)}$ can be assumed to be zero by dephasing each state,  as all $Z_x$ are diagonal in the computational basis meaning that the off-diagonal elements don't contribute to the trace. Also, as a function of each state, each term of the sum in \eqref{eq:total-error-sum-of-minima-adaptive} is concave as before, because the minimum is concave. Therefore, $\eta$ is concave as a function of each state, so we may assume all states to be pure. As they are diagonal in the computational basis, this means that each state is either $\ketbra{0}{0}$ or $\ketbra{1}{1}$. 

Moreover, we can assume that $\sigma_0 \neq \sigma_1$, $\tau_0^{(x_1)} \neq \tau_1^{(x_1)}$, and $\upsilon_0^{(x_1,x_2)} \neq \upsilon_1^{(x_1,x_2)}$, as otherwise the measurement result would be independent of $i$ and the processing on that qubit could be replaced with whatever choice of quantum processing we like, with the measurement result ignored and a simulated result generated by classical randomness. Finally, we may assume without loss of generality that $\sigma_0 = \ketbra{0}{0}$ and $\sigma_1 = \ketbra{1}{1}$, since if the opposite is true, we can put an $X$ gate at the start and a bit flip at the end to swap all states between $\ketbra{0}{0}$ and $\ketbra{1}{1}$, which has the effect of swapping $\varepsilon_0$ and $\varepsilon_1$, but keeps their sum the same.

Therefore, we may reduce our search space to circuits of the form shown in Figure \ref{fig:adaptive-noisy-Z-simplification}, with the unitaries chosen such that $V^{(x_1)} \ketbra{i}{i} V^{(x_1)\dagger} = \tau_i^{(x_1)}$ and $W^{(x_1,x_2)} \ketbra{i}{i} W^{(x_1,x_2)\dagger} = \upsilon_i^{(x_1,x_2)}$, with $f$ determined by maximum likelihood decoding. We choose the unitaries accordingly as:
\begin{align}
    V^{(x_1)} &= \begin{cases} I & \text{if } \tau_0^{(x_1)} = \ketbra{0}{0}, \tau_1^{(x_1)} = \ketbra{1}{1}; \\ X & \text{if } \tau_0^{(x_1)} = \ketbra{1}{1}, \tau_1^{(x_1)} = \ketbra{0}{0}; \end{cases} \nonumber \\
    W^{(x_1,x_2)} &= \begin{cases} I & \text{if } \upsilon_0^{(x_1,x_2)} = \ketbra{0}{0}, \upsilon_1^{(x_1,x_2)} = \ketbra{1}{1}; \\ X & \text{if } \upsilon_0^{(x_1,x_2)} = \ketbra{1}{1}, \upsilon_1^{(x_1,x_2)} = \ketbra{0}{0}. \end{cases}
\end{align}

\noindent Thus, our task now becomes to choose the states $\tau_i^{(x_1)}$ and $\upsilon_i^{(x_1,x_2)}$ optimally, or equivalently, to make the optimal choice of $I$ or $X$ for each unitary.

\subsection{Choosing unitaries $W^{(x_1,x_2)}$ optimally}

\noindent To optimise the choice of unitaries, we work backwards, beginning by supposing that $x_1$, $x_2$, and $V^{(x_1)}$ are fixed, and looking for $W^{(x_1,x_2)}$. We let $\alpha_i(x_1,x_2)$ be the probability that an input of $\ket{i}$ gives the first two measurement outcomes $x_1$ and $x_2$:
\begin{equation}
    \alpha_i(x_1,x_2) = \mathrm{tr} \left( \sigma_i Z_{x_1} \right) \, \mathrm{tr} \left( \tau_i^{(x_1)} Z_{x_2} \right).
\end{equation}

\noindent This is fixed, since $x_1$, $x_2$, and $V^{(x_1)}$ are, so for brevity we drop the explicit dependence, and refer to $\alpha_i = \alpha_i(x_1,x_2)$.

Then, the only terms in the sum \eqref{eq:total-error-sum-of-minima-adaptive} that depend on $W^{(x_1,x_2)}$ are $\mathbf{x} = (x_1,x_2,0)$ and $\mathbf{x} = (x_1,x_2,1)$, so we look to choose $W^{(x_1,x_2)}$ to minimise the sum of just these terms, which we call $\eta^{(x_1,x_2)}$:
\begin{equation}
    \eta^{(x_1,x_2)} := \sum_{x_3 \in \{0,1\}} \min_{i \in \{0, 1\}} \left\{ \alpha_i \, \mathrm{tr} \left( \upsilon_i^{(x_1,x_2)} Z_{x_3} \right) \right\}. \label{eq:W-error-contribution}
\end{equation}

\noindent We calculate this for each choice $W^{(x_1,x_2)} = I$ and $W^{(x_1,x_2)} = X$, and find which is optimal for any given $\alpha_0$ and $\alpha_1$.

In the case $W^{(x_1,x_2)} = I$, i.e. $\upsilon_0^{(x_1,x_2)} = \ketbra{0}{0}$ and $\upsilon_1^{(x_1,x_2)} = \ketbra{1}{1}$, \eqref{eq:W-error-contribution} becomes:
\begin{align}
    &\hspace{-2ex} \eta^{(x_1,x_2)} \big|_{W^{(x_1,x_2)} = I} \nonumber \\
    &= \min \{ (1-p) \alpha_0, q \alpha_1 \} + \min\{ p \alpha_0, (1-q) \alpha_1 \} \nonumber \\
    &= \begin{cases} \alpha_0 & \text{if } \frac{\alpha_0}{\alpha_1} \leqslant \frac{q}{1-p}; \\ p \alpha_0 + q \alpha_1 & \text{if } \frac{q}{1-p} \leqslant \frac{\alpha_0}{\alpha_1} \leqslant \frac{1-q}{p}; \\ \alpha_1 & \text{if } \frac{\alpha_0}{\alpha_1} \geqslant \frac{1-q}{p}. \end{cases} \label{eq:W-is-I}
\end{align}

\noindent where we use that $p + q \leqslant 1$ without loss of generality.

Note that where the inequalities overlap, both expressions coincide. This is a deliberate choice, we want to emphasise at these boundary points that both expressions are minimal, and we continue this convention throughout.

Meanwhile, in the case $W^{(x_1,x_2)} = X$, i.e. $\upsilon_0^{(x_1,x_2)} = \ketbra{1}{1}$ and $\upsilon_1^{(x_1,x_2)} = \ketbra{0}{0}$, \eqref{eq:W-error-contribution} becomes:
\begin{align}
    &\hspace{-2ex} \eta^{(x_1,x_2)} \big|_{W^{(x_1,x_2)} = X} \nonumber \\
    &= \min \{ q \alpha_0, (1-p) \alpha_1 \} + \min\{ (1-q) \alpha_0, p \alpha_1 \} \nonumber \\
    &= \begin{cases} \alpha_0 & \text{if } \frac{\alpha_0}{\alpha_1} \leqslant \frac{p}{1-q}; \\ q \alpha_0 + p \alpha_1 & \text{if } \frac{p}{1-q} \leqslant \frac{\alpha_0}{\alpha_1} \leqslant \frac{1-p}{q}; \\ \alpha_1 & \text{if } \frac{\alpha_0}{\alpha_1} \geqslant \frac{1-p}{q}, \end{cases} \label{eq:W-is-X}
\end{align}

\begin{table*}[t]
    \centering
    \begin{tabular}{c||c|c||c|c}
        $\tfrac{\alpha_0(x_1,x_2)}{\alpha_1(x_1,x_2)} \in $ & $\eta^{(x_1,x_2)}$ if $W^{(x_1,x_2)} = I$ & $\eta^{(x_1,x_2)}$ if $W^{(x_1,x_2)} = X$ & Optimal choice of $W^{(x_1,x_2)}$ & Optimal $\eta^{(x_1,x_2)}$ \\ \hline \hline
        $( 0, \tfrac{p}{1-q} ]$ & $\alpha_0$ & $\alpha_0$ & $I$ or $X$ & $\alpha_0$ \\ \hline
        $[ \tfrac{p}{1-q}, \tfrac{q}{1-p} ]$ & $\alpha_0$ & $q \alpha_0 + p \alpha_1$ & $X$ & $q \alpha_0 + p \alpha_1$ \\ \hline
        $[ \tfrac{q}{1-p}, 1 ]$ & $p \alpha_0 + q \alpha_1$ & $q \alpha_0 + p \alpha_1$ & $X$ & $q \alpha_0 + p \alpha_1$ \\ \hline
        $[ 1, \tfrac{1-p}{q} ]$ & $p \alpha_0 + q \alpha_1$ & $q \alpha_0 + p \alpha_1$ & $I$ & $p \alpha_0 + q \alpha_1$ \\ \hline
        $[ \tfrac{1-p}{q}, \tfrac{1-q}{p} ]$ & $p \alpha_0 + q \alpha_1$ & $ \alpha_1$ & $I$ & $p \alpha_0 + q \alpha_1$ \\ \hline
        $[ \tfrac{1-q}{p}, \infty )$ & $\alpha_1$ & $ \alpha_1$ & $I$ or $X$ & $\alpha_1$
    \end{tabular}
    \caption{The optimal choice of unitary $W^{(x_1,x_2)}$ and corresponding optimal values of $\eta^{(x_1,x_2)}$, depending on the ratio $\frac{\alpha_0(x_1,x_2)}{\alpha_1(x_1,x_2)}$. We discount the ratio equalling $0$ and $\infty$, since if exactly one of $\alpha_0(x_1,x_2)$ and $\alpha_1(x_1,x_2)$ are zero, we already know $i$ and further processing doesn't matter, and if both are zero the outcomes $x_1$, $x_2$ are always impossible and again further processing doesn't matter. Note also that $0 \leqslant p \leqslant q \leqslant 1-p$ means that $p(1-p) \leqslant q(1-q)$, so $\frac{p}{1-q} \leqslant \frac{q}{1-p}$ and $\frac{1-p}{q} \leqslant \frac{1-q}{p}$.}
    \label{tab:optimal-third-unitaries}
\end{table*}

\noindent We can compare \eqref{eq:W-is-I} and \eqref{eq:W-is-X} for any given value of $\frac{\alpha_0}{\alpha_1}$, with the six cases listed out in Table \ref{tab:optimal-third-unitaries}, using the restriction to $0 \leqslant p \leqslant q \leqslant 1-p$ to correctly order the critical points where the expression for the minimum changes. Thus, given any values of $x_1,x_2$, we calculate $\alpha_0(x_1,x_2)$ and $\alpha_1(x_1,x_2)$, and then refer to this table to tell us what unitary $W^{(x_1,x_2)}$ we should perform, and what the contribution to the total error will be.

\subsection{Choosing unitaries $V^{(x_1)}$ optimally}

\noindent Now we look at the choices for $V^{(x_1)}$, for $x_1 = 0$ and $x_1 = 1$ in turn. As with the unitaries $W^{(x_1,x_2)}$, we look at only the terms in the sum \eqref{eq:total-error-sum-of-minima-adaptive} that depend on $V^{(x_1)}$, which is precisely:
\begin{equation}
    \eta^{(x_1)} := \eta^{(x_1,0)} + \eta^{(x_1,1)}. \label{eq:V-error-contribution}
\end{equation}

\noindent For each fixed $x_1$, we will try each option $V^{(x_1)} = I$ and $V^{(x_1)} = X$. We can calculate $\alpha_i(x_1,0)$ and $\alpha_i(x_1,1)$ exactly for this choice of $V^{(x_1)}$, and thus can calculate $\eta^{(x_1,0)}$ and $\eta^{(x_1,1)}$, which we will sum to get $\eta^{(x_1)}$. Then, our optimal choice for $V^{(x_1)}$ will be the one that minimises $\eta^{(x_1)}$. We begin with $x_1 = 0$, and compare the two choices for $V^{(0)}$.

First, we try $V^{(0)} = I$, i.e. $\tau_0^{(0)} = \ketbra{0}{0}$ and $\tau_1^{(0)} = \ketbra{1}{1}$. In this case:
\begin{align}
    \alpha_0(0,0) &= (1-p)^2; & \alpha_1(0,0) &= q^2; \nonumber \\
    \alpha_0(0,1) &= p(1-p); & \alpha_1(0,1) &= q(1-q).
\end{align}

\noindent $\eta^{(0,0)}$ is dependent on the ratio $\frac{\alpha_0(0,0)}{\alpha_1(0,0)} = \frac{(1-p)^2}{q^2}$, which corresponds to either row 5 or row 6 of Table \ref{tab:optimal-third-unitaries}, depending on $p$ and $q$. Specifically, the ratio lies in the range of row 5 if and only if:
\begin{equation}
    \tfrac{(1-p)^2}{q^2} \leqslant \tfrac{1-q}{p} \quad \Leftrightarrow \quad p - p^2 + pq - q^2 \leqslant 0.
\end{equation}

\noindent Reading from the table, we obtain:
\begin{equation}
    \eta^{(0,0)} \big|_{V^{(0)} = I} = \begin{cases}
        p(1-p)^2 + q^3 & \text{if } p - p^2 + pq - q^2 \leqslant 0; \\
        q^2 & \text{if } p - p^2 + pq - q^2 \geqslant 0.
    \end{cases}
\end{equation}

\noindent Then, $\eta^{(0,1)}$ depends on the ratio $\frac{\alpha_0(0,1)}{\alpha_1(0,1)} = \frac{p(1-p)}{q(1-q)}$, corresponding to either row 2 or row 3. However, the optimal choice of $W^{(x_1,x_2)}$, and the optimal $\eta^{(x_1,x_2)}$, is the same between these rows, so in either case:
\begin{equation}
    \eta^{(0,1)} \big|_{V^{(0)} = I} = qp(1-p) + pq(1-q).
\end{equation}

\noindent Summing these contributions, as in \eqref{eq:V-error-contribution}, gives:
\begin{equation}
    \eta^{(0)} \big|_{V^{(0)} = I} = \begin{cases}
        p - 2p^2 + 2pq + p^3 - p^2 q - p q^2 + q^3 \\
        \hspace{4em} \text{if } p - p^2 + pq - q^2 \leqslant 0; \\
        2pq + q^2 - p^2 q - pq^2 \\
        \hspace{4em} \text{if } p - p^2 + pq - q^2 \geqslant 0.
    \end{cases} \label{eq:V-0-is-I}
\end{equation}

\noindent Alternatively, we could choose $V^{(0)} = X$, i.e. $\tau_0^{(0)} = \ketbra{1}{1}$ and $\tau_1^{(0)} = \ketbra{0}{0}$. This makes the probabilities $\alpha(x_1,x_2)$:
\begin{align}
    \alpha_0(0,0) &= (1-p)q; & \alpha_1(0,0) &= (1-p)q; \nonumber \\
    \alpha_0(0,1) &= (1-p)(1-q); & \alpha_1(0,1) &= pq.
\end{align}

\noindent For $\eta^{(0,0)}$, we find that $\frac{\alpha_0(0,0)}{\alpha_1(0,0)} = \tfrac{(1-p)q}{(1-p)q} = 1$ lies on the boundary between rows 3 and 4 of Table \ref{tab:optimal-third-unitaries}. This means that the conclusions of both rows are equally optimal (i.e. both choices $W^{(0,0)} = I$ or $W^{(0,1)} = X$). The expression for the error contribution $\eta^{(0,0)}$ is also necessarily the same from either row, equal to:
\begin{equation}
    \eta^{(0,0)} \big|_{V^{(0)} = X} = p(1-p)q + q(1-p)q.
\end{equation}

\noindent The other ratio is $\frac{\alpha_0(0,1)}{\alpha_1(0,1)} = \frac{(1-p)(1-q)}{pq}$, which lies in the range of row 6. $\eta^{(0,1)}$ is thus:
\begin{equation}
    \eta^{(0,1)} \big|_{V^{(0)} = X} = pq,
\end{equation}

\noindent and so the sum per \eqref{eq:V-error-contribution} is:
\begin{equation}
    \eta^{(0)} \big|_{V^{(0)} = X} = 2pq + q^2 - p^2 q - pq^2. \label{eq:V-0-is-X}
\end{equation}

\noindent Comparing \eqref{eq:V-0-is-I} and \eqref{eq:V-0-is-X}, $V^{(0)} = I$ always gives the same or less\footnote{Note that from our optimisation of $\eta^{(0)}$ in \eqref{eq:V-0-is-I}, we have $p - 2p^2 + 2pq + p^3 - p^2 q - p q^2 + q^3 \leqslant 2pq + q^2 - p^2 q - pq^2$ when $p - p^2 + pq - q^2 \leqslant 0$.} total error contribution than $V^{(0)} = X$, so to achieve optimality we take $V^{(0)}= I$ without loss of generality. Referring back to the rows of Table \ref{tab:optimal-third-unitaries} that we used above, we also obtain the optimal choices of $W^{(0,0)}$ and $W^{(0,1)}$. Thus, an optimal choice of the branch of the circuit with $x_1 = 0$ is:
\begin{align}
    V^{(0)} &= I; \nonumber \\
    W^{(0,0)} &= \begin{cases}
        I & \text{if } p - p^2 + pq - q^2 \leqslant 0; \\
        I \text{ or } X & \text{if } p - p^2 + pq - q^2 \geqslant 0;
    \end{cases} \nonumber \\
    W^{(0,1)} &= X. \label{eq:optimal-unitaries-0}
\end{align}

\noindent We now repeat the same analysis for $x_1 = 1$, albeit condensed for brevity. If $V^{(1)} = I$, i.e. $\tau_0^{(1)} = \ketbra{0}{0}$ and $\tau_1^{(1)} = \ketbra{1}{1}$, then $\frac{\alpha_0(1,0)}{\alpha_1(1,0)} = \frac{p(1-p)}{q(1-q)}$ corresponds to row 2 or 3, and $\frac{\alpha_0(1,1)}{\alpha_1(1,1)} = \frac{p^2}{(1-q)^2}$, corresponds to row 1, so:
\begin{align}
    \eta^{(1,0)} \big|_{V^{(1)} = I} &= qp(1-p) + pq(1-q); \nonumber \\
    \eta^{(1,1)} \big|_{V^{(1)} = I} &= p^2,
\end{align}
\begin{equation}
    \Rightarrow \quad \eta^{(1)} \big|_{V^{(1)} = I} = p^2 + 2pq - p^2 q - pq^2. \label{eq:V-1-is-I}
\end{equation}

\noindent If instead $V^{(1)} = X$, i.e. $\tau_0^{(1)} = \ketbra{1}{1}$ and $\tau_1^{(1)} = \ketbra{0}{0}$, then $\frac{\alpha_0(1,0)}{\alpha_1(1,0)} = \frac{pq}{(1-p)(1-q)}$ lies in the range of row 1, and $\frac{\alpha_0(1,1)}{\alpha_1(1,1)} = \frac{p(1-q)}{p(1-q)} = 1$ lies between rows 3 and 4, so:
\begin{align}
    \eta^{(1,0)} \big|_{V^{(1)} = X} &= pq; \nonumber \\
    \eta^{(1,1)} \big|_{V^{(1)} = X} &= pp(1-q) + qp(1-q);
\end{align}
\begin{equation}
    \Rightarrow \quad \eta^{(1)} \big|_{V^{(1)} = X} = p^2 + 2pq - p^2 q - pq^2. \label{eq:V-1-is-X}
\end{equation}

\noindent The error contributions from \eqref{eq:V-1-is-I} and \eqref{eq:V-1-is-X} are the same, so either choice of $V^{(1)}$ will be optimal. We choose $V^{(1)} = I$, such that $V^{(x_1)}$ is independent of $x_1$, and hence the second measurement is not performed adaptively, only the third. Then, as before, we refer back to the relevant rows of Table \ref{tab:optimal-third-unitaries} to find an optimal choice of $W^{(1,0)}$ and $W^{(1,1)}$:
\begin{align}
    V^{(1)} &= I; \nonumber \\
    W^{(1,0)} &= X; \nonumber \\
    W^{(1,1)} &= I \text{ or } X. \label{eq:optimal-unitaries-1}
\end{align}

\noindent As we have chosen $V^{(0)} = V^{(1)} = I$, the total error of our circuit will be given by the sum of \eqref{eq:V-0-is-I} and \eqref{eq:V-1-is-I}, and this will be the optimal adaptive total error $\etaA$:
\begin{align}
    \etaA &= \eta^{(0)} + \eta^{(1)} \nonumber \\
    &= \begin{cases}
        p - p^2 + 4pq + p^3 - 2p^2 q - 2pq^2 + q^3 \\
        \hspace{4em} \text{if } p - p^2 + pq - q^2 \leqslant 0; \\
        p^2 + 4pq + q^2 - 2p^2 q - 2pq^2 \\
        \hspace{4em} \text{if } p - p^2 + pq - q^2 \geqslant 0.
    \end{cases}
\end{align}

\section{No adaptive advantage in edge cases of imperfect $Z$} \label{sec:edge-cases}

\noindent In this appendix, we examine the edge cases of imperfect $Z$, namely the symmetric imperfect $Z$ when $p = q$, the trivial imperfect $Z$ when $p + q = 1$, and the semi-perfect $Z$ when $p = 0$. We show as claimed in Section \ref{sec:optimal-adaptive-noisy-Z} that there is never any adaptive advantage to be gained when using these measurements.

First, if the measurement is symmetric ($p = q$), we know that an optimal circuit will take the form of Figure \ref{fig:adaptive-noisy-Z-simplification}, albeit generalised to $N$ measurements, by following an equivalent argument as we used in the three measurement case. Then, any $X$ gate before a measurement can be replaced with a bit-flip after the measurement, since $Z_1 = X Z_0 X$. Thus we may assume that all unitaries are $I$, which can be implemented non-adaptively as there is no choice made dependent on measurement outcomes, with all decision making turned into classical post-processing. Therefore our optimal adaptive circuit is also non-adaptive, so there is no adaptive advantage.

Next, if the measurement is trivial ($p + q = 1$), then all measurement results are independent of the measured state. Thus:
\begin{align}
    &\varepsilon_1 = \mathbb{P}(0 | \ket{i} = \ket{1}) = \mathbb{P}(0) \nonumber \\
    &\hspace{2em} = 1 - \mathbb{P}(1) = 1 - \mathbb{P}(1 | \ket{i} = \ket{0}) = 1 - \varepsilon_0
\end{align}

\noindent so a total error of $\eta = 1$ is unavoidable in either the non-adaptive or adaptive case.

Finally, if the measurement is semi-perfect ($p = 0$ and $q \neq 0, 1$), the simplest way to see why there is no adaptive advantage is to rely on the tools of Section \ref{sec:qubit-measurements}. Supposing we have only one use of $\mathbb{Z}_{0,q}$, we may argue as in the three measurement case that an optimal choice of pre-processing is $\ketbra{0}{0} \mapsto \rho_0 = \ketbra{0}{0}$ and $\ketbra{1}{1} \mapsto \rho_1 = \ketbra{1}{1}$. Maximum likelihood decoding is then given by $f(0) = 0$ and $f(1) = 1$, so we have $\varepsilon_0 = 0$ and $\varepsilon_1 = q$ optimally. This exactly meets the conditions for Proposition \ref{prop:no-adaptive-advantage}, which tells us if an optimal single measurement circuit can be constructed with $\varepsilon_0 = 0$, then we cannot get an adaptive advantage for any $N$.

\section{Arbitrary adaptive advantage} \label{sec:arbitrary-adaptive-advantage}

\noindent In this appendix, we use the circuits shown in Figure \ref{fig:arbitrary-adaptive-advantage}, along with $q = 2^{-N}$ and $p = q^N$, to show that the ratio $R$ of minimal non-adaptive total error $\etaNA$ to minimal adaptive total error $\etaA$ can be made arbitrarily large.

First, we lower bound the non-adaptive optimum. By the same argument as we used in the three measurement case in the main text, an optimal non-adaptive circuit will take the form shown in Figure \ref{fig:arbitrary-adaptive-advantage}(a), for some $r \leqslant \frac{1}{2} N$. For any given value of $r$ let $\eta_r$ be the total error from the circuit with $r$ $X$ gates:
\begin{align}
    \eta_r &= \sum_{\mathbf{x} \in \{0,1\}^N} \min \Big\{ \! \bra{\mathbf{0}^{N-r} \mathbf{1}^r} Z_\mathbf{x} \ket{\mathbf{0}^{N-r} \mathbf{1}^r}, \nonumber \\
    &\hspace{17ex} \bra{\mathbf{1}^{N-r} \mathbf{0}^r} Z_\mathbf{x} \ket{\mathbf{1}^{N-r} \mathbf{0}^r} \Big\}, \label{eq:eta-r}
\end{align}

\noindent such that $\etaNA = \min_{0 \leqslant r \leqslant {\frac{1}{2} N}} \{ \eta_r \}$.

We lower bound $\eta_r$ by considering only some of the terms that contribute to the sum \eqref{eq:eta-r}. In particular, if $\mathbf{x} = \mathbf{0}$:
\begin{align}
    \bra{\mathbf{0}^{N-r} \mathbf{1}^r} Z_\mathbf{0} \ket{\mathbf{0}^{N-r} \mathbf{1}^r} &= (1-p)^{N-r} q^r; \nonumber \\
    \bra{\mathbf{1}^{N-r} \mathbf{0}^r} Z_\mathbf{0} \ket{\mathbf{1}^{N-r} \mathbf{0}^r} &= (1-p)^r q^{N-r}.
\end{align}

\noindent Maximum likelihood decoding means that the smaller of these contributes to the total error. As $1 - p > q$ and $r \leqslant \frac{1}{2} N$, the second of the two is smaller, so:
\begin{equation}
    \eta_r \geqslant (1-p)^r q^{N-r}.
\end{equation}

\noindent For $r \geqslant 1$, we can lower bound this by:
\begin{align}
    \eta_r &\geqslant (1-p)^r q^{N-r} \nonumber \\
    &\geqslant \left( 1 - rp \right) q^{N-r} \nonumber \\
    &\geqslant \left( 1 - \tfrac{1}{2} N 2^{-N^2} \right) q^{N-1} \quad \text{as } r \geqslant 1 \nonumber \\
    &\geqslant \tfrac{1}{2} q^{N-1} \quad \text{as } 1 - \tfrac{1}{2} N 2^{-N^2} \geqslant \tfrac{1}{2},
\end{align}

\noindent independent of $r$.

However, for $r = 0$, we require a slightly tighter bound, so we consider an additional set of probabilities that contribute to the total error. If $\| \mathbf{x} \|_1 = 1$, i.e. $\mathbf{x}$ contains exactly one 1, then:
\begin{align}
    \bra{\mathbf{0}^N} Z_\mathbf{x} \ket{\mathbf{0}^N} &= p (1-p)^{N-1}; \nonumber \\
    \bra{\mathbf{1}^N} Z_\mathbf{x} \ket{\mathbf{1}^N} &= q^{N-1} (1-q).
\end{align}

\noindent The smaller of these two is the first, since:
\begin{align}
    & p (1-p)^{N-1} - q^{N-1} (1-q) \nonumber \\
    &= q^N (1 - q^N)^{N-1} - q^{N-1} (1-q) \nonumber \\
    &= q^{N-1} (1-q) \left( q (1 - q^N)^{N-2} (1 + q + \dots + q^{N-1}) - 1 \right) \nonumber \\
    &\leqslant q^{N-1} (1-q) (2^{-N} N - 1) < 0,
\end{align}

\noindent and there are $N$ such choices for $\mathbf{x}$, so overall (including the $q^N$ term from $\mathbf{x} = \mathbf{0}$):
\begin{align}
    \eta_0 &\geqslant q^N + Np(1-p)^{N-1} \nonumber \\
    &\geqslant q^N + N q^N (1 - (N-1)q^N) \nonumber \\
    &= q^N \left( N + 1 - N(N-1) 2^{-N^2} \right) \nonumber \\
    &\geqslant N q^N \quad \text{as } N(N-1) 2^{-N^2} \leqslant 1.
\end{align}

\noindent Therefore, the optimal non-adaptive total error is bounded below by:
\begin{equation}
    \etaNA \geqslant \min \{ \tfrac{1}{2} q^{N-1}, N q^N \} = Nq^N,
\end{equation}

\noindent where we use that $N < \tfrac{1}{2} 2^N$ for $N \geqslant 3$.

\begin{table*}[t]
    \centering
    \begin{tabular}{c||c|c|c}
        $\mathbf{x}$ & $\mathbb{P}(\mathbf{x} | \ket{i} = \ket{0})$ & $\mathbb{P}(\mathbf{x} | \ket{i} = \ket{1})$ & Contribution to total error \\ \hline \hline
        $\mathbf{0}^N$ & $(1-p)^N$ & $q^N$ & $q^N$ \\ \hline
        $\mathbf{0}^{N-1} 1$ & $p(1-p)^{N-1}$ & $q^{N-1} (1-q)$ & $p(1-p)^{N-1} \leqslant p$ \\ \hline
        $\mathbf{0}^{r} 1 \mathbf{0}^{N-r-1}$, $0 \leqslant r \leqslant N-2$ & $p (1-p)^r q^{N-r-1}$ & $(1-p)^{N-r-1} q^r (1-q)$ & $p (1-p)^r q^{N-r-1} \leqslant pq$ \\ \hline
        $\| \mathbf{x} \|_1 = 2$, consecutive ones & $p (1-p)^{N-2} (1-q)$ & $p q^{N-2} (1-q)$ & $p q^{N-2} (1-q) \leqslant pq$ \\ \hline
        $\| \mathbf{x} \|_1 = 2$, non-consecutive ones & $\leqslant pq$ & $-$ & $\leqslant pq$ \\ \hline
        $\| \mathbf{x} \|_1 \geqslant 3$ & $\leqslant p^2$ & $-$ & $\leqslant pq$
    \end{tabular}
    \caption{The probabilities of each possible string of outcomes $\mathbf{x}$ from the adaptive circuit in Figure \ref{fig:arbitrary-adaptive-advantage}(b). The contribution to the total error in each row is the minimum of the two probabilities in the second and third columns, due to maximum likelihood decoding. Recall that $p = q^N$. In the fourth row, we use that $N \geqslant 3$. In the fifth and sixth rows, we don't calculate $\mathbb{P}(\mathbf{x} | \ket{i} = \ket{1})$ as the error contribution is upper bounded by $\mathbb{P}(\mathbf{x} | \ket{i} = \ket{0})$, which is small enough already.}
    \label{tab:adaptive-errors-noisy-Z-N}
\end{table*}

Now, we upper bound the adaptive optimum by using the circuit shown in Figure \ref{fig:arbitrary-adaptive-advantage}(b). This time we need to consider all of the probabilities for every string of outcomes $\mathbf{x}$, and upper bound the total error contribution of each. Then, the sum of these contributions upper bounds the total error of this circuit, and so upper bounds $\eta^{(A)}$.

In Table \ref{tab:adaptive-errors-noisy-Z-N}, we collate upper bounds for the error contribution for each string of outcomes $\mathbf{x} \in \{0,1\}^N$. Hence:
\begin{equation}
    \etaA \leqslant q^N + p + (2^N - 2)pq \leqslant q^N (1 + 1 + 2^N q) = 3q^N.
\end{equation}

\noindent With our bounds on $\etaNA$ and $\etaA$, we obtain the desired result:
\begin{equation}
    R = \frac{\etaNA}{\etaA} \geqslant \frac{Nq^N}{3q^N} = \frac{N}{3}.
\end{equation}

\section{Proofs} \label{sec:proofs}

\noindent In this appendix, we list the proofs of the various results we have stated throughout Sections \ref{sec:qubit-measurements} and \ref{sec:qudit-measurements}.

\subsection{Lemma \ref{lem:channel-existence} - Construction of measure-and-prepare channel without measurements} \label{sec:proof-channel-existence}

\noindent \textit{Let $\rho_i \in \mathcal{D}(\mathcal{H})$ for each $i \in [d]$. Then, there exists a quantum channel $\mathcal{A}$ such that $\mathcal{A}(\ketbra{i}{i}) = \rho_i$ for all $i$.}

\begin{proof}
    For each $i$, choose $\ket{\Psi_i} \in \mathcal{H} \otimes \mathcal{H}$ to be a purification of $\rho_i$. Then, define a completely positive map $\Lambda$ by Kraus operators $K_i = \ketbra{\Psi_i}{i}$. We have:
    \begin{equation}
        \sum_{i \in [d]} K_i^\dagger K_i = \sum_{i \in [d]} \left| i \rangle \hspace{-0.5ex} \langle \Psi_i | \Psi_i \rangle \hspace{-0.5ex} \langle i \right| = \sum_{i \in [d]} \ketbra{i}{i} = \mathbb{I}_d,
    \end{equation}
    
    \noindent so $\Lambda$ is trace-preserving, and hence a quantum channel. By composition, so is $\mathrm{tr}_2 \circ \Lambda$, with:
    \begin{equation}
       (\mathrm{tr}_2 \circ \Lambda) (\ketbra{i}{i}) = \mathrm{tr}_2(\ketbra{\Psi_i}{\Psi_i}) = \rho_i,
    \end{equation}
    
    \noindent and $\mathcal{A} = \mathrm{tr}_2 \circ \Lambda$ is by construction the channel we require.
\end{proof}

\subsection{Proposition \ref{prop:achievable-set-is-achievable} - The achievable error set is the set of achievable errors} \label{sec:proof-achievable-set-is-achievable}

\noindent \textit{A pair of errors $(\varepsilon_0, \varepsilon_1)$ can be achieved using an adaptive circuit with $N$ measurements of $\mathbb{M}$ if and only if $(\varepsilon_0, \varepsilon_1) \in \bar{E}(\mathbb{M}, N)$, where:
\begin{equation}
    \bar{E}(\mathbb{M}, N) = \mathrm{conv}(E(\mathbb{M}, N)).
\end{equation}}

\begin{proof}
    We proceed by induction. To begin with, suppose that $N = 0$. With no measurements available, the input state is irrelevant, and all we can do is choose an output of either $j = 0$ or $j = 1$. A mixed protocol combining these options outputs $j = 0$ with probability $t$ and $j = 1$ with probability $1-t$ for some $t \in [0,1]$. Therefore the set of possible error rates is exactly:
    \begin{equation}
        \{ (1-t, t) : t \in [0,1] \} = \bar{E}(\mathbb{M}, 0).
    \end{equation}

    \noindent Now suppose that $N \geqslant 1$, and let $(\varepsilon_0, \varepsilon_1)$ be error rates achieved using $N$ uses of the measurement $\mathbb{M}$. Mixed protocols result in errors that are convex sums of errors from single-circuit protocols, so by convexity of the target set $\bar{E}(\mathbb{M}, N)$, we may assume we have a single-circuit protocol, of the form shown in Figure \ref{fig:adaptive}. We rearrange the circuit to the form of Figure \ref{fig:adaptive-recursive}(c) with $k = N$, such that by \eqref{eq:recursive-epsilons}, we can express $\varepsilon_0, \varepsilon_1$ as:
    \begin{equation}
        \varepsilon_i = \sum_{x \in [m]} \mathrm{tr} \left( \mathcal{B}(\ketbra{i}{i}) M_x \right) \, \varepsilon_i^{(x)}, \label{eq:recursive-errors}
    \end{equation}

    \noindent where $\left( \varepsilon_0^{(x)}, \varepsilon_1^{(x)} \right)$ are the errors achieved by the subcircuit $\mathcal{C}_{N-1}^{(x)}$. Each such subcircuit uses $N-1$ measurements, so by inductive assumption, $\left( \varepsilon_0^{(x)}, \varepsilon_1^{(x)} \right) \in \bar{E}(\mathbb{M}, N-1)$. Thus we can express $\left( \varepsilon_0^{(x)}, \varepsilon_1^{(x)} \right)$ as a convex combination of points in $E(\mathbb{M}, N-1)$:
    \begin{equation}
        \left( \varepsilon_0^{(x)}, \varepsilon_1^{(x)} \right) = \sum_{r_x \in I_x} t_{r_x}^{(x)} \, \left( \varepsilon_{0,r_x}^{(x)}, \varepsilon_{1,r_x}^{(x)} \right).
    \end{equation}
    
    \noindent Here, for each $x \in [m]$, $I_x$ is an index set where for all $r_x \in I_x$, we have $\left( \varepsilon_{0,r_x}^{(x)}, \varepsilon_{1,r_x}^{(x)} \right) \in E(\mathbb{M}, N-1)$ and $t_{r_x}^{(x)} \geqslant 0$, with $\sum_{r_x \in I_x} t_{r_x}^{(x)} = 1$.

    We now combine these sets into a Cartesian product representative of the convex combination for each $x \in [m]$. Let $I = \prod_{x \in [m]} I_x$, and for $\mathbf{r} = (r_x)_{x \in [m]} \in I$, let $t_\mathbf{r} = \prod_{x \in [m]} t_{r_x}^{(x)}$. This choice means that for any $x \in [m]$:
    \vspace{-1em}
    \begin{align}
        \sum_{\mathbf{r} \in I} t_\mathbf{r} \varepsilon_{i,r_x}^{(x)} &= \sum_{\mathbf{r} \in I} \left( \prod_{y \in [m]} t_{r_y}^{(y)} \right) \sum_{s_x \in I_x} \delta_{r_x s_x} \varepsilon_{i,s_x}^{(x)} \nonumber \\
        &= \sum_{s_x \in I_x} \varepsilon_{i,s_x}^{(x)} \left( \sum_{r_x \in I_x} \delta_{r_x s_x} t_{r_x}^{(x)} \right) \left( \prod_{y \neq x} \sum_{r_y \in I_y} t_{r_y}^{(y)} \right) \nonumber \\
        &= \sum_{s_x \in I_x} \varepsilon_{i,s_x}^{(x)} \left( t_{s_x}^{(x)} \right) \left( \prod_{y \neq x} 1 \right) \nonumber \\
        &= \sum_{s_x \in I_x} t_{s_x}^{(x)} \varepsilon_{i,s_x}^{(x)} \nonumber \\
        &= \varepsilon_i^{(x)}. \label{eq:convex-errors}
    \end{align}
    
    \noindent Additionally, let $q_i^{(x)} = \mathrm{tr} \left( \mathcal{B}(\ketbra{i}{i}) M_x \right)$, so that $\mathbf{q}_i = (q_i^{(x)})_{x \in [m]} \in \mathcal{R}(\mathbb{M})$. Then, by combining \eqref{eq:recursive-errors} and \eqref{eq:convex-errors}:
    \begin{equation}
        (\varepsilon_0, \varepsilon_1) = \sum_{\mathbf{r} \in I} t_\mathbf{r} \left( \sum_{x \in [m]} \varepsilon_{0,r_x}^{(x)} q_0^{(x)}, \sum_{x \in [m]} \varepsilon_{1,r_x}^{(x)} q_1^{(x)} \right),
    \end{equation}

    \noindent a convex sum of elements in $E(\mathbb{M}, N)$, so in $\bar{E}(\mathbb{M}, N)$.

    Finally, we justify the converse, that any $(\varepsilon_0, \varepsilon_1) \in \bar{E}(\mathbb{M}, N)$ is an achievable pair of errors. This is because any $(\varepsilon_0, \varepsilon_1) \in E(\mathbb{M}, N)$ is achievable by translating the recursive construction of the set into a recursive adaptive circuit like in Figure \ref{fig:adaptive-recursive}(c), while the convex hull is achievable using mixed protocols.
\end{proof}

\subsection{Proposition \ref{prop:adaptive-lower-bound} - Lower bound on $\etaA$} \label{sec:proof-adaptive-lower-bound}

\noindent \textit{For any $\mathbb{M}$ and $N \geqslant 0$:
\begin{equation}
    \etaA(\mathbb{M},N) \geqslant \eta(\mathbb{M})^N. \label{eq:adaptive-lower-bound-appendix}
\end{equation}}

\begin{proof}
    Let $(\varepsilon_0, \varepsilon_1) \in E(\mathbb{M}, N)$. By definition of the single circuit achievable set \eqref{eq:achievable-errors}, we can find $\left( \varepsilon_0^{(x)}, \varepsilon_1^{(x)} \right) \in E(\mathbb{M}, N-1)$ for all $x \in [m]$, and $\mathbf{q}_0, \mathbf{q}_1 \in \mathcal{R}(\mathbb{M})$ with:
    \begin{align}
        \varepsilon_0 + \varepsilon_1 &= \sum_{x \in [m]} \left( \varepsilon_0^{(x)} q_0^{(x)} + \varepsilon_1^{(x)} q_1^{(x)} \right) \nonumber \\
        &\geqslant \min_{x \in [m]} \left\{ \varepsilon_0^{(x)} + \varepsilon_1^{(x)} \right\} \nonumber \\
        &\hspace{3em} \times \left( \sum_{x \in [m]} \tfrac{\varepsilon_0^{(x)}}{\varepsilon_0^{(x)} + \varepsilon_1^{(x)}} q_0^{(x)} + \tfrac{\varepsilon_1^{(x)}}{\varepsilon_0^{(x)} + \varepsilon_1^{(x)}} q_1^{(x)} \right).
    \end{align}

    \noindent The first term in the product is a total error achievable with $N-1$ measurements, so is at least $\etaA(\mathbb{M}, N-1)$. Also, since $\left( \frac{\varepsilon_0^{(x)}}{\varepsilon_0^{(x)} + \varepsilon_1^{(x)}}, \frac{\varepsilon_1^{(x)}}{\varepsilon_0^{(x)} + \varepsilon_1^{(x)}} \right) \in \bar{E}(\mathbb{M},0)$ for all $x \in [m]$, we can deduce that:
    \begin{equation}
        \left( \sum_{x \in [m]} \tfrac{\varepsilon_0^{(x)}}{\varepsilon_0^{(x)} + \varepsilon_1^{(x)}} q_0^{(x)}, \! \sum_{x \in [m]} \tfrac{\varepsilon_1^{(x)}}{\varepsilon_0^{(x)} + \varepsilon_1^{(x)}} q_1^{(x)} \right) \in \bar{E}(\mathbb{M},1).
    \end{equation}
    
    \noindent Therefore, the second term in the product is at least $\etaA(\mathbb{M}, 1) = \eta(\mathbb{M})$, and:
    \begin{equation}
        \varepsilon_0 + \varepsilon_1 \geqslant \etaA(\mathbb{M}, N-1) \ \eta(\mathbb{M}).
    \end{equation}

    \noindent Minimising over the left hand side gives that:
    \begin{equation}
        \etaA(\mathbb{M}, N) \geqslant \etaA(\mathbb{M}, N-1) \ \eta(\mathbb{M}).
    \end{equation}
    
    \noindent Finally, we use that $\etaA(\mathbb{M},0) = 1$ and $\etaA(\mathbb{M},1) = \eta(\mathbb{M})$, so \eqref{eq:adaptive-lower-bound-appendix} follows inductively. 
\end{proof}

\subsection{Lemma \ref{lem:reproduction-of-noisy-Z} - Reproducing an imperfect $Z$ measurement} \label{sec:proof-reproduction-of-noisy-Z}

\noindent \textit{Let $\mathcal{C}_N$ be any circuit with error rates $(\varepsilon_0, \varepsilon_1)$. Then, the action of the dephasing channel followed by $\mathcal{C}_N$ is the same as the action of measuring with the imperfect $Z$ measurement $\mathbb{Z}_{\varepsilon_0, \varepsilon_1}$.}

\begin{proof}
    Let $\rho = \rho_{00} \ketbra{0}{0} + \rho_{01} \ketbra{0}{1} + \rho_{10} \ketbra{1}{0} + \rho_{11} \ketbra{1}{1}$, a general qubit state. We will show that the two circuits:
    \begin{itemize}
        \item Dephasing followed by applying $\mathcal{C}_N$;
        \item Measuring with $\mathbb{Z}_{\varepsilon_0, \varepsilon_1}$,
    \end{itemize}
    
    \noindent act identically on $\rho$, i.e. map $\rho$ to the same probability vector. Since $\rho$ is arbitrary, this will mean that the circuits are equivalent.

    In the first case, dephasing maps $\rho \mapsto \rho_{00} \ketbra{0}{0} + \rho_{11} \ketbra{1}{1}$. Then, $\mathcal{C}_N$ takes input $\ketbra{0}{0}$ to probability vector $(1-\varepsilon_0, \varepsilon_0)^T$, and $\ketbra{1}{1}$ to $(\varepsilon_1, 1-\varepsilon_1)^T$, so the overall effect is:
    \begin{equation}
        \rho \mapsto \rho_{00} \begin{pmatrix} 1 - \varepsilon_0 \\ \varepsilon_0 \end{pmatrix} + \rho_{11} \begin{pmatrix} \varepsilon_1 \\ 1 - \varepsilon_1 \end{pmatrix} = \begin{pmatrix} \rho_{00} (1-\varepsilon_0) + \rho_{11} \varepsilon_1 \\ \rho_{00} \varepsilon_0 + \rho_{11} (1-\varepsilon_1) \end{pmatrix}.
    \end{equation}

    \noindent In the second case, we just measure with $\mathbb{Z}_{\varepsilon_0, \varepsilon_1}$, with probabilities:
    \begin{equation}
        \rho \mapsto \begin{pmatrix} \mathrm{tr}(\rho Z_0) \\ \mathrm{tr}(\rho Z_1) \end{pmatrix} = \begin{pmatrix} \rho_{00} (1-\varepsilon_0) + \rho_{11} \varepsilon_1 \\
        \rho_{00} \varepsilon_0 + \rho_{11} (1-\varepsilon_1) \end{pmatrix},
    \end{equation}

    \noindent which is the same as the first circuit.
\end{proof}

\stepcounter{subsection}

\subsection{Proposition \ref{prop:no-adaptive-advantage} - Sufficient condition for there to be no adaptive advantage for any $N$} \label{sec:proof-no-adaptive-advantage}

\noindent \textit{If $(0, \eta(\mathbb{M})) \in \bar{E}(\mathbb{M}, 1)$, then for any $N \geqslant 0$:
\begin{equation}
     \etaA(\mathbb{M}, N) = \etaNA(\mathbb{M}, N) = \eta(\mathbb{M})^N,
\end{equation}}

\begin{proof}
    This result is trivially true for $N = 0$, so we assume that $N \geqslant 1$. Using Proposition \ref{prop:adaptive-lower-bound}, we already have that $\eta(\mathbb{M})^N \leqslant \etaA(\mathbb{M}, N) \leqslant \etaNA(\mathbb{M}, N)$, so it suffices to prove that $\etaNA(\mathbb{M}, N) \leqslant \eta(\mathbb{M})^N$.

    $(0, \eta(\mathbb{M})) \in \bar{E}(\mathbb{M}, 1)$, so by Corollary \ref{cor:achievable-sets-as-imperfect-Zs} we can use one use of $\mathbb{M}$ to reproduce one use of the semi-perfect $Z$ measurement $\mathbb{Z}_{0,q}$, where $q = \eta(\mathbb{M})$. Since there is only one $\mathbb{M}$ being used in this reproduction, we do not need to use adaptivity to do so. We process our input state $\ket{\psi} = \ket{i}$ using CNOT gates into $\ket{i}^{\otimes N}$, then measure each of the qubits with our reproduction of $\mathbb{Z}_{0,q}$, as shown in Figure \ref{fig:non-adaptive-optimal-noisy-Z}. The post-processing function $f$ outputs 1 if any of the measurement results are 1, and 0 if all are 0.

    When $\ket{\psi} = \ket{0}$, this circuit never makes an error, always outputting $j = 0$ correctly. When $\ket{\psi} = \ket{1}$, the only way for an error to occur is if every measurement resulted in a 0, which happens with probability $q^N$. Therefore, we have total error:
    \begin{equation}
        \eta = \varepsilon_0 + \varepsilon_1 = 0 + q^N = \eta(\mathbb{M})^N,
    \end{equation}

    \noindent and as we have achieved this with a non-adaptive circuit, we conclude that indeed $\etaNA(\mathbb{M}, N) \leqslant \eta(\mathbb{M})^N$.    
\end{proof}

\subsection{Proposition \ref{prop:achievable-error-polygon} - Achievable error set is a polygon} \label{sec:proof-achievable-error-polygon}

\noindent \textit{For any $\mathbb{M}$ and $N \geqslant 0$:
\begin{equation}
    \mathrm{conv}(e(\mathbb{M}, N) \cup g(\mathbb{M}, N)) = \bar{E}(\mathbb{M}, N), \label{eq:finite-convex-hull-appendix}
\end{equation}
where $g(\mathbb{M}, N) = \left\{ (1-\varepsilon_1, 1-\varepsilon_0) : (\varepsilon_0, \varepsilon_1) \in e(\mathbb{M}, N) \right\} $ is the reflection of $e(\mathbb{M}, N)$ in the line $\varepsilon_0 + \varepsilon_1 = 1$.}

\begin{proof}
    If $N = 0$, \eqref{eq:finite-convex-hull-appendix} is trivially true, as $e(\mathbb{M}, 0) \cup g(\mathbb{M}, 0) = \{ (0,1), (1,0) \} = E(\mathbb{M}, 0)$, so their convex hulls are also the same. Then, for $N \geqslant 1$, we use induction, noting that since $e(\mathbb{M}, N) \cup g(\mathbb{M}, N) \subseteq E(\mathbb{M}, N) \subseteq \bar{E}(\mathbb{M}, N)$, the inclusion of the left-hand side of \eqref{eq:finite-convex-hull-appendix} in the right is trivial. As such, we need only prove the inclusion of the right in the left, and indeed, since both are convex sets, it suffices to show that:
    \begin{equation}
        E(\mathbb{M}, N) \subseteq \mathrm{conv}(e(\mathbb{M}, N) \cup g(\mathbb{M}, N)).
    \end{equation}
    
    \noindent Thus, let $(\varepsilon_0, \varepsilon_1) \in E(\mathbb{M}, N)$, and by \eqref{eq:union-of-rectangles}, we can choose $\left\{ \left( \varepsilon_0^{(x)}, \varepsilon_1^{(x)} \right) \right\} \subseteq E(\mathbb{M}, N-1)$ such that:
    \begin{equation}
        (\varepsilon_0, \varepsilon_1) \in \mathcal{I} \left( \sum_{x \in [m]} \varepsilon_0^{(x)} M_x \right) \times \mathcal{I} \left( \sum_{x \in [m]} \varepsilon_1^{(x)} M_x \right).
    \end{equation}

    \noindent By inductive assumption, we have that $\left( \varepsilon_0^{(x)}, \varepsilon_1^{(x)} \right) \in E(\mathbb{M}, N-1) \subseteq \mathrm{conv}(e(\mathbb{M}, N-1) \cup g(\mathbb{M}, N-1))$ for all $x \in [m]$. As in Proposition \ref{prop:achievable-set-is-achievable}, $\left( \varepsilon_0^{(x)}, \varepsilon_1^{(x)} \right)$ can be expressed as a convex combination of points in $e(\mathbb{M}, N-1) \cup g(\mathbb{M}, N-1)$, and so following the same argument, we may assume that $\left( \varepsilon_0^{(x)}, \varepsilon_1^{(x)} \right) \in e(\mathbb{M}, N-1) \cup g(\mathbb{M}, N-1)$, with the rest of the convex hull following by linearity.

    For each $x \in [m]$, let $\left( \zeta_0^{(x)}, \zeta_1^{(x)} \right) \in e(\mathbb{M},N)$ such that $\left( \varepsilon_0^{(x)}, \varepsilon_1^{(x)} \right) \in \left\{ \left( \zeta_0^{(x)}, \zeta_1^{(x)} \right), \left( 1 - \zeta_1^{(x)}, 1 - \zeta_0^{(x)} \right) \right\}$, i.e. such that $\left( \varepsilon_0^{(x)}, \varepsilon_1^{(x)} \right)$ is either that point in $e(\mathbb{M},N)$, or the reflected point in $g(\mathbb{M},N)$. $\zeta_0^{(x)} + \zeta_1^{(x)} \leqslant 1$, so $\zeta_i^{(x)} \leqslant \varepsilon_i^{(x)} \leqslant 1 - \zeta_{1-i}^{(x)}$ for all $i,x$, and:
    \begin{equation}
        \sum_{x \in [m]} \zeta_i^{(x)} M_x \leqslant \sum_{x \in [m]} \varepsilon_i^{(x)} M_x \leqslant \mathbb{I} - \sum_{x \in [m]} \zeta_{1-i}^{(x)} M_x.
    \end{equation}

    \noindent Therefore, for each $i = 0,1$:
    \begin{equation}
        \lambda_\mathrm{min} \left( \sum_{x \in [m]} \zeta_i^{(x)} M_x \right) \leqslant \varepsilon_i \leqslant 1 - \lambda_\mathrm{min} \left( \sum_{x \in [m]} \zeta_{1-i}^{(x)} M_x \right).
    \end{equation}

    \noindent Letting $\lambda_i = \lambda_\mathrm{min} \left( \sum_{x \in [m]} \zeta_i^{(x)} M_x \right)$, then $(\lambda_0, \lambda_1) \in e(\mathbb{M},N)$ by \eqref{eq:finite-achievable-errors}, and $(1-\lambda_1, 1-\lambda_0) \in g(\mathbb{M},N)$, so:
    \begin{align}
        (\varepsilon_0, \varepsilon_1) &\in [\lambda_0, 1-\lambda_1] \times [\lambda_1, 1-\lambda_0] \nonumber \\
        &\subseteq \mathrm{conv} \left( \left\{ (0,1), (1,0), (\lambda_0, \lambda_1), (1-\lambda_1, 1-\lambda_0) \right\} \right) \nonumber \\
        &\subseteq \mathrm{conv}(e(\mathbb{M},N) \cup g(\mathbb{M},N)).
    \end{align}    
\end{proof}

\stepcounter{subsection}

\subsection{Lemma \ref{lem:binomial-deviation} - Large deviation bound on binomial distribution} \label{sec:proof-binomial-deviation}

\noindent \textit{
Let $p \leqslant \frac{1}{2}$. Then:
\begin{equation}
    \mathbb{P} \left( \mathrm{Bin}(n,p) \geqslant \tfrac{1}{2} n \right) \leqslant \left( 2 \sqrt{p(1-p)} \right)^n,
\end{equation}
where $\mathrm{Bin}(n,p)$ is the binomial distribution with $n$ trials and probability of success $p$.}

\begin{proof}
    We derive this from the result of Hoeffding \cite[Theorem 1]{hoeffding-1963} (see also \cite[Theorem 1]{arratia-1989}), which states that if $0 \leqslant p < a < 1$, then:
    \begin{equation}
        \mathbb{P} \left( \mathrm{Bin}(n,p) \geqslant an \right) \leqslant e^{-nD(a \| p)},
    \end{equation}

    \noindent where:
    \begin{equation}
        D(a \| p) = a \log \tfrac{a}{p} + (1-a) \log \tfrac{1-a}{1-p}.
    \end{equation}

    \noindent Assuming for now that $p < \frac{1}{2}$, we may take $a = \frac{1}{2}$, with:
    \begin{equation}
        D(\tfrac{1}{2} \| p) = - \log 2 - \tfrac{1}{2} \log p - \tfrac{1}{2} \log (1-p),
    \end{equation}

    \noindent so $e^{-D(\frac{1}{2} \| p)} = 2 \sqrt{p (1-p)}$, and:
    \begin{equation}
        \mathbb{P} \left( \mathrm{Bin}(n,p) \geqslant \tfrac{1}{2} n \right) \leqslant \left( 2 \sqrt{p(1-p)} \right)^n,
    \end{equation}

    \noindent as required.

    In the edge case that $p = \frac{1}{2}$, then $2 \sqrt{p(1-p)} = 1$, so the inequality is trivially true as probabilities are always upper bounded by 1.
\end{proof}

\subsection{Lemma \ref{lem:multivariate-deviation} - Large deviation bound on multinomial distribution} \label{sec:proof-multivariate-deviation}

\noindent \textit{Let $X$ be a random variable with distribution:
\begin{equation}
    X = \begin{cases} 0 & \text{with probability } 1-t; \\ x & \text{with probability } \frac{t}{m-1} \quad \text{ for all } x \in [m] \setminus \{0\}. \end{cases}
\end{equation}
Let $\mathbf{X}_N = (X_k)_{k = 1}^N$ be a vector of iid random variables with this same distribution. Then, for any $j \neq 0$:
\begin{itemize}
    \item[(a)] If $t \leqslant \frac{m-1}{m}$, (i.e. $0$ more likely than the rest) then:
    \begin{equation}
        \mathbb{P} \left( \#\{X_k = j\} \geqslant \#\{X_k = 0\} \right) \leqslant \left( \tfrac{(m-2)t}{m-1} + 2 \sqrt{\tfrac{t(1-t)}{m-1}} \right)^N.
    \end{equation}
    \item[(b)] If $t \geqslant \frac{m-1}{m}$, (i.e. $0$ less likely than the rest) then:
    \begin{equation}
        \mathbb{P} \left( \#\{X_k = j\} \leqslant \#\{X_k = 0\} \right) \leqslant \left( \tfrac{(m-2)t}{m-1} + 2 \sqrt{\tfrac{t(1-t)}{m-1}} \right)^N.
    \end{equation}
\end{itemize}}

\begin{proof}
    In each case, we will condition on $n \in \{0, 1,..., N\}$, the number of $X_k$ that are either $0$ or $j$. The probability of having $\#\{X_k = 0\} + \#\{X_k = j\} = n$ is:
    \begin{equation}
        \mathbb{P}(n) = \binom{N}{n} \left( \tfrac{(m-2)t}{m-1} \right)^{N-n} \left( 1 - \tfrac{(m-2)t}{m-1} \right)^n,
    \end{equation}

    \noindent where $1 - \frac{(m-2)t}{m-1} = (1 - t) + \frac{t}{m-1}$ is the probability of getting $0$ or $j$, occurring $n$ times, and $\frac{(m-2)t}{m-1}$ is the probability of getting anything else, occurring $N - n$ times, with $\binom{N}{n}$ possible combinations of these occurrences.
    
    Meanwhile, conditional on $\#\{X_k = 0\} + \#\{X_k = j\} = n$, $\#\{X_k = 0\}$ and $\#\{X_k = j\}$ have the binomial distributions:
    \begin{align}
        \#\{X_k = 0\} \, | \, n &\sim \mathrm{Bin} \left(n, \tfrac{(m-1)(1-t)}{(m-1) - (m-2)t} \right); \nonumber \\
        \#\{X_k = j\} \, | \, n &\sim \mathrm{Bin} \left(n, \tfrac{t}{(m-1) - (m-2)t} \right),
    \end{align}

    \noindent as $\frac{1-t}{\frac{t}{m-1} + 1-t} = \frac{(m-1)(1-t)}{(m-1)-(m-2)t}$, $\frac{\frac{t}{m-1}}{\frac{t}{m-1} + 1-t} = \frac{t}{(m-1)-(m-2)t}$.
    
    In case (a), conditional on $n$, we see that $\#\{X_k = j\} \geqslant \#\{X_k = 0\}$ if and only if $\#\{X_k = j\} \geqslant \frac{1}{2} n$, so overall:
    \begin{align}
        &\mathbb{P} \left( \#\{X_k = j\} \geqslant \#\{X_k = 0\} \right) \nonumber \\
        &= \sum_{n = 0}^N \mathbb{P}(n) \, \mathbb{P} \left( \#\{X_k = j\} \geqslant \tfrac{1}{2} n \mid n \right) \nonumber \\
        &= \sum_{n = 0}^N \binom{N}{n} \left( \tfrac{(m-2)t}{m-1} \right)^{N-n} \left( 1 - \tfrac{(m-2)t}{m-1} \right)^n \times \nonumber \\
        &\hspace{4em} \mathbb{P} \left( \mathrm{Bin} \left( n, \tfrac{t}{(m-1)-(m-2)t} \right) \geqslant \tfrac{1}{2} n \right),
    \end{align}

    \noindent As $t \leqslant \tfrac{m-1}{m}$ if and only if $\frac{t}{(m-1)-(m-2)t} \leqslant \frac{1}{2}$, we can then apply Lemma \ref{lem:binomial-deviation} with $p = \frac{t}{(m-1)-(m-2)t}$:
    \begin{align}
        &\mathbb{P} \left( \#\{X_k = j\} \geqslant \#\{X_k = 0\} \right) \nonumber \\
        &\leqslant \sum_{n = 0}^N \binom{N}{n} \left( \tfrac{(m-2)t}{m-1} \right)^{N-n} \left( \tfrac{(m-1) - (m-2)t}{m-1} \right)^n \times \nonumber \\
        &\hspace{4em} \left(2 \sqrt{ \tfrac{t}{(m-1)-(m-2)t} \tfrac{(m-1)(1-t)}{(m-1)-(m-2)t} } \right)^n \nonumber \\
        &= \sum_{n = 0}^N \binom{N}{n} \left( \tfrac{(m-2)t}{m-1} \right)^{N-n} \left(2 \sqrt{ \tfrac{t(1-t)}{m-1} } \right)^n \nonumber \\
        &= \left( \tfrac{(m-2)t}{m-1} + 2 \sqrt{\tfrac{t(1-t)}{m-1}} \right)^N.
    \end{align}

    \noindent Meanwhile, in case (b), with $t \geqslant \tfrac{m-1}{m}$, the flipped inequality means that we are now interested in finding when $\#\{X_k = 0\} \geqslant \tfrac{1}{2} n$:
    \begin{align}
        &\mathbb{P} \left( \#\{X_k = j\} \leqslant \#\{X_k = 0\} \right) \nonumber \\
        &= \sum_{n = 0}^N \mathbb{P}(n) \, \mathbb{P} \left( \#\{X_k = 0\} \geqslant \tfrac{1}{2} n \mid n \right) \nonumber \\
        &= \sum_{n = 0}^N \binom{N}{n} \left( \tfrac{(m-2)t}{m-1} \right)^{N-n} \left( 1 - \tfrac{(m-2)t}{m-1} \right)^n \times \nonumber \\
        &\hspace{4em} \mathbb{P} \left( \mathrm{Bin} \left( n, \tfrac{(m-1)(1-t)}{(m-1)-(m-2)t} \right) \geqslant \tfrac{1}{2} n \right),
    \end{align}

    \noindent with $\tfrac{(m-1)(1-t)}{(m-1)-(m-2)t} \leqslant \frac{1}{2}$. As before, we apply Lemma \ref{lem:binomial-deviation}:
    \begin{align}
        &\mathbb{P} \left( \#\{X_k = j\} \leqslant \#\{X_k = 0\} \right) \nonumber \\
        &\leqslant \sum_{n = 0}^N \binom{N}{n} \left( \tfrac{(m-2)t}{m-1} \right)^{N-n} \left( \tfrac{(m-1) - (m-2)t}{m-1} \right)^n \times \nonumber \\
        &\hspace{4em} \left(2 \sqrt{ \tfrac{(m-1)(1-t)}{(m-1)-(m-2)t} \tfrac{t}{(m-1)-(m-2)t} } \right)^n \nonumber \\
        &= \sum_{n = 0}^N \binom{N}{n} \left( \tfrac{(m-2)t}{m-1} \right)^{N-n} \left(2 \sqrt{ \tfrac{t(1-t)}{m-1} } \right)^n \nonumber \\
        &= \left( \tfrac{(m-2)t}{m-1} + 2 \sqrt{\tfrac{t(1-t)}{m-1}} \right)^N.
    \end{align}
\end{proof}

\subsection{Lemma \ref{lem:symmetric-reproduced-measurement} - Reproducing a generalised symmetric imperfect $Z$ measurement} \label{sec:proof-symmetric-reproduced-measurement}

\noindent \textit{For any POVM $\mathbb{M}$ and $N \geqslant 0$, define the matrix $\mathcal{E} = (\varepsilon_{j|i})_{j,i \in [d]}$ by:
\begin{equation}
    \varepsilon_{j|i} = \begin{cases} 1 - \tfrac{1}{d} \eta_d^{(A)} (\mathbb{M},N) & \text{if } j = i; \\ \tfrac{1}{d(d-1)} \eta_d^{(A)} (\mathbb{M},N) & \text{if } j \neq i. \end{cases} \label{eq:reproduced-gen-sym-matrix}
\end{equation}
Then, $\mathcal{E} \in \bar{E}_d(\mathbb{M}, N)$, and so $N$ adaptive uses of $\mathbb{M}$ can reproduce one use of the generalised symmetric imperfect $Z$ measurement $\mathbb{S}_{d,\frac{1}{d} \eta_d^{(A)} (\mathbb{M},N)}$.}

\begin{proof}
    Let $\mathcal{E}' \in \bar{E}_d(\mathbb{M}, N)$ be any error matrix with minimal total error, i.e.:
    \begin{equation}
        \eta_d^{(A)} (\mathbb{M},N) = d - \mathrm{tr}(\mathcal{E}').
    \end{equation}
    
    \noindent Also, let $\mathcal{C}'$ be a circuit with $N$ uses of $\mathbb{M}$ that gives error probabilities characterised by $\mathcal{E}'$. Let $f: [d] \to [d]$ be a permutation of $[d]$. We consider the circuit where $f$ is applied to the input (in the form of a unitary $U_f : \ket{i} \mapsto \ket{f(i)}$), then $\mathcal{C}'$ is run, and then $f$ is undone on the output. This gives stochastic matrix $\mathcal{E}^{(f)}$, with elements:
    \begin{equation}
        \varepsilon_{j|i}^{(f)} = \varepsilon_{f(j)|f(i)}'.
    \end{equation}

    \noindent Then, we consider the mixed protocol given by a uniformly random choice of $f$ from $\mathrm{Sym}([d])$, the set of all permutations of $[d]$. Its stochastic matrix is $\mathcal{E}$ where:
    \begin{equation}
        \varepsilon_{j|i} = \tfrac{1}{d!} \sum_{f \in \mathrm{Sym}([d])} \varepsilon_{f(j)|f(i)}'.
    \end{equation}

    \noindent If $i = j$, then $f(i) = f(j)$ for all $f$, so for every $k \in [d]$, there are $(d-1)!$ functions $f \in \mathrm{Sym}([d])$ with $f(i) = f(j) = k$. Thus we have:
    \begin{equation}
        \varepsilon_{j|i} = \tfrac{(d-1)!}{d!} \sum_{k \in [d]} \varepsilon'_{k|k} = \tfrac{1}{d} \mathrm{tr}(\mathcal{E}') = 1 - \tfrac{1}{d} \eta_d^{(A)} (\mathbb{M},N),
    \end{equation}

    \noindent as in \eqref{eq:reproduced-gen-sym-matrix}.
    
    Meanwhile, if $i \neq j$, then $f(i) \neq f(j)$, since $f$ is a permutation. Therefore, for every $k, l \in [d]$ with $k \neq l$, there are $(d-2)!$ functions $f \in \mathrm{Sym}([d])$ with $f(i) = k$ and $f(j) = l$. The matrix element in this case is hence:
    \begin{equation}
        \varepsilon_{j|i} = \tfrac{(d-2)!}{d!} \sum_{k \neq l} \varepsilon_{l|k}' = \tfrac{1}{d(d-1)} \eta_d^{(A)} (\mathbb{M},N),
    \end{equation}

    \noindent also matching \eqref{eq:reproduced-gen-sym-matrix}, so proving that $\mathcal{E} \in \bar{E}(\mathbb{M}, N)$ via this mixed protocol.
    
    Then, using \eqref{eq:abelian-povm-from-matrix}, the abelian POVM corresponding to $\mathcal{E}$ is $\mathbb{M}_\mathcal{E}$ with elements:
    \begin{align}
        M_x &= \left( 1 - \tfrac{1}{d} \eta_d^{(A)} (\mathbb{M},N) \right) \ketbra{x}{x} \nonumber \\
        &\qquad + \tfrac{1}{d(d-1)} \eta_d^{(A)} (\mathbb{M},N) \sum_{y \neq x} \ketbra{y}{y} \nonumber \\
        &= S_x,
    \end{align}

    \noindent so $M_\mathcal{E} = \mathbb{S}_{d,\frac{1}{d} \eta_d^{(A)} (\mathbb{M},N)}$.    
\end{proof}

\subsection{Lemma \ref{lem:gen-sym-total-error} - Lower bound on $\eta_d$ for generalised symmetric imperfect $Z$} \label{sec:proof-gen-sym-total-error}

\noindent \textit{Let $m \geqslant d$ and $t \in [0,1]$. Then:
\begin{equation}
    \eta_d(\mathbb{S}_{m,t}) \leqslant (d-1) \tfrac{m}{m-1} t.
\end{equation}}

\begin{proof}
    Consider the following circuit, making use of one measurement of $\mathbb{S}_{m,t}$:
    \begin{itemize}
        \item Take input $\ket{i} \in \{\ket{0}, \ket{1},..., \ket{d-1}\}$;
        \item Apply channel mapping $\ket{i}$ in $d$ dimensions to $\ket{i}$ in $m$ dimensions;
        \item Measure with $\mathbb{S}_{m,t}$;
        \item Post-process the measurement result with any function $f: [m] \to [d]$ such that $f(x) = x$ for all $x \in [d] \subseteq [m]$.
    \end{itemize}

    \noindent The error probabilities of this circuit are:
    \begin{align}
        \varepsilon_{j|i} &= \sum_{x \in f^{-1}(j)} \mathrm{tr}(\ketbra{i}{i} S_x) \nonumber \\
        &= \sum_{x \in f^{-1}(j)} (1-t) \delta_{ix} + \tfrac{t}{m-1} (1 - \delta_{ix}),
    \end{align}

    \noindent using the definition of $\mathbb{S}_{m,t}$ in \eqref{eq:generalised-symmetric-imperfect-Z}.

    Therefore, its total error is:
    \begin{align}
        & d - \sum_{i \in [d]} \varepsilon_{i|i} \nonumber \\
        &= d - \sum_{i \in [d]} \sum_{x \in f^{-1}(i)} \left( (1-t) \delta_{ix} + \tfrac{t}{m-1} (1 - \delta_{ix}) \right) \nonumber \\
        &= d - \sum_{x \in [m]} \sum_{i \in [d]} \delta_{i f(x)} \left( (1-t) \delta_{ix} + \tfrac{t}{m-1} (1 - \delta_{ix}) \right) \nonumber \\
        &= d - \sum_{x \in [m]} \left( (1-t) \delta_{f(x)x} + \tfrac{t}{m-1} (1 - \delta_{f(x)x}) \right) \nonumber \\
        &= d - d(1-t) - (m-d) \tfrac{t}{m-1} \nonumber \\
        &= (d-1) \tfrac{m}{m-1} t,
    \end{align}

    \noindent where in the fifth line, we use that $f(x) = x$ for $x \in [d]$, and $f(x) \neq x$ for $x \in [m] \setminus [d]$.

    As this is an achievable total error for one use of $\mathbb{S}_{m,t}$, we may deduce that the minimal total error $\eta_d$ is equal or lesser than $(d-1) \frac{m}{m-1} t$ as claimed.
\end{proof}

\subsection{Proposition \ref{prop:qudit-total-error-lower-bound} - Lower bound on $\etaA_d$} \label{sec:proof-qudit-total-error-lower-bound}

\noindent \textit{Let $d' \leqslant d$. Then, for any POVM $\mathbb{M}$ and $N \geqslant 0$, we have:
\begin{equation}
    \tfrac{1}{d'-1} \eta_{d'}^{(\mathrm{A})}(\mathbb{M},N) \leqslant \tfrac{1}{d-1} \etaA_d(\mathbb{M},N).
\end{equation}}

\begin{proof}
    By Lemma \ref{lem:symmetric-reproduced-measurement}, one use of $\mathbb{S}_{d, \frac{1}{d} \etaA_d(\mathbb{M}, N)}$ can be reproduced by $N$ adaptive uses of $\mathbb{M}$. Then, we can apply Lemma \ref{lem:gen-sym-total-error}, substituting $d \mapsto d'$, $m \mapsto d$, and $t \mapsto \frac{1}{d} \etaA_d(\mathbb{M}, N)$ to obtain:
    \begin{align}
        \etaA_{d'} (\mathbb{M}, N) &\leqslant \eta_{d'} \left( \mathbb{S}_{d, \frac{1}{d} \etaA_d(\mathbb{M}, N)} \right) &\text{by Lemma \ref{lem:symmetric-reproduced-measurement}} \nonumber \\
        &\leqslant (d' - 1) \tfrac{d}{d-1} \tfrac{1}{d} \etaA_d (\mathbb{M}, N) &\text{by Lemma \ref{lem:gen-sym-total-error}} \nonumber \\
        &= \tfrac{d'-1}{d-1} \etaA_d (\mathbb{M}, N).
    \end{align}
\end{proof}

\stepcounter{subsection}

\subsection{Proposition \ref{prop:asymptotic-upper-bound} - Asymptotic upper bound on $\etaNA_d$ for an arbitrary POVM} \label{sec:proof-asymptotic-upper-bound}

\noindent \textit{Let $\mathbb{M}$ be any POVM. Then, for any $d \geqslant 2$:
\begin{equation}
    \etaNA_d(\mathbb{M}, N) = O \left( \left( \eta_2(\mathbb{M}) (2 - \eta_2(\mathbb{M})) \right)^{\frac{1}{4} N} \right).
\end{equation}}

\begin{proof}
    First, we note that if $\eta_2(\mathbb{M}) = 1$, then this bound is trivial, so we may assume that our measurement $\mathbb{M}$ is non-trivial, i.e. $\eta_2(\mathbb{M}) < 1$.
    
    By Corollary \ref{cor:achievable-sets-as-imperfect-Zs}, one use of $\mathbb{M}$ can reproduce one use of $\mathbb{Z}_{\frac{1}{2} \eta_2(\mathbb{M}), \frac{1}{2} \eta_2(\mathbb{M})}$. Therefore, any minimal total error for $\mathbb{M}$ is upper bounded by the equivalent quantity for $\mathbb{Z}_{\frac{1}{2} \eta_2(\mathbb{M}), \frac{1}{2} \eta_2(\mathbb{M})}$, specifically:
    \begin{align}
        \etaNA_d(\mathbb{M}, N) &\leqslant \etaNA_d \left( \mathbb{Z}_{\frac{1}{2} \eta_2(\mathbb{M}), \frac{1}{2} \eta_2(\mathbb{M})}, N \right) \nonumber \\
        &= O \left( \left( 2 \sqrt{\tfrac{1}{2} \eta_2(\mathbb{M}) \left( 1 - \tfrac{1}{2} \eta_2(\mathbb{M}) \right)} \right)^{\frac{1}{2} N} \right) \nonumber \\
        &= O \left( \left( \eta_2(\mathbb{M}) (2 - \eta_2(\mathbb{M})) \right)^{\frac{1}{4} N} \right),
    \end{align}
    
    \noindent by taking $p = \frac{1}{2} \eta_2(\mathbb{M}) < \tfrac{1}{2}$ in \eqref{eq:imperfect-Z-asymptotics}.
\end{proof}

\subsection{Proposition \ref{prop:generalised-asymptotic-upper-bound} - Tighter asymptotic upper bound on $\etaNA_d$ for an arbitrary POVM} \label{sec:proof-generalised-asymptotic-upper-bound}

\noindent \textit{Let $\mathbb{M}$ be any POVM. Let $d, \ell \geqslant 2$, with integers $q \geqslant 0$ and $r \in [\ell]$ defined by integer division such that $d = \ell q + r$. Then:
\begin{equation}
    \etaNA_d(\mathbb{M}, N) = O \left( \kappa^N \right),
\end{equation}
where:
\begin{equation}
    \kappa = \left( \tfrac{(\ell-2) \eta_\ell(\mathbb{M})}{\ell(\ell-1)} + 2 \sqrt{\tfrac{\eta_\ell(\mathbb{M}) (\ell - \eta_\ell(\mathbb{M}))}{\ell^2 (\ell-1)}} \right)^{ 1 - \frac{q(d-\ell+r)}{d(d-1)}}. \label{eq:upper-bound-rate-appendix}
\end{equation}}

\begin{proof}
    We note first that we have replaced $m$ in \eqref{eq:gen-sym-eta-upper-bound} by $\ell$, to avoid confusion with the number of outcomes of the measurement $\mathbb{M}$.

    The proof of this follows exactly the same route as that of Proposition \ref{prop:asymptotic-upper-bound}. We begin as before by seeing that trivial POVMs, i.e. those with $\eta_\ell(\mathbb{M}) = \ell-1$, satisfy this bound trivially, as \eqref{eq:upper-bound-rate-appendix} becomes:
    \begin{align}
        \kappa &= \left( \tfrac{(\ell-2)(\ell-1)}{\ell(\ell-1)} + 2 \sqrt{\tfrac{(\ell-1) (\ell - (\ell-1))}{\ell^2 (\ell-1)}} \right)^{ 1 - \frac{q(d-\ell+r)}{d(d-1)}} \nonumber \\
        &= \left( \tfrac{\ell-2}{\ell} + 2 \sqrt{\tfrac{1}{\ell^2}} \right)^{ 1 - \frac{q(d-\ell+r)}{d(d-1)}} \nonumber \\
        &= \left( \tfrac{\ell-2}{\ell} + \tfrac{2}{\ell} \right)^{ 1 - \frac{q(d-\ell+r)}{d(d-1)}} \nonumber \\
        &= 1.
    \end{align}
    
    \noindent Hence we may assume our measurement $\mathbb{M}$ is non-trivial. Instead of Corollary \ref{cor:achievable-sets-as-imperfect-Zs}, we use Lemma \ref{lem:symmetric-reproduced-measurement} to see that one use of $\mathbb{M}$ can reproduce one use of $\mathbb{S}_{\ell, \frac{1}{\ell} \eta_\ell(\mathbb{M})}$, so using \eqref{eq:gen-sym-eta-upper-bound} with $m \mapsto \ell$, $t = \frac{1}{\ell} \eta_\ell(\mathbb{M}) < \tfrac{\ell-1}{\ell}$:
    \begin{align}
        &\etaNA_d(\mathbb{M}, N) \nonumber \\
        &\leqslant \etaNA_d \left( \mathbb{S}_{\ell, \frac{1}{\ell} \eta_\ell(\mathbb{M})}, N \right) \nonumber \\
        &\leqslant d(d-1) \times \nonumber \\
        &\hspace{1ex} \left( \tfrac{(\ell-2) \frac{1}{\ell} \eta_\ell(\mathbb{M}) }{\ell-1} + 2 \sqrt{\tfrac{\frac{1}{\ell} \eta_\ell(\mathbb{M}) \left( 1 - \frac{1}{\ell} \eta_\ell(\mathbb{M}) \right)}{\ell-1}} \right)^{\left( 1 - \frac{q(d-\ell+r)}{d(d-1)} \right) N} \nonumber \\
        &= d(d-1) \times \nonumber \\
        &\hspace{1ex} \left( \tfrac{(\ell-2) \eta_\ell(\mathbb{M}) }{\ell(\ell-1)} + 2 \sqrt{\tfrac{ \eta_\ell(\mathbb{M}) \left( \ell - \eta_\ell(\mathbb{M}) \right)}{\ell^2(\ell-1)}} \right)^{\left( 1 - \frac{q(d-\ell+r)}{d(d-1)} \right) N} \nonumber \\
        &= d(d-1) \kappa^N.
    \end{align}
\end{proof}

\end{document}